\newcolumntype{C}[1]{>{\centering\arraybackslash}p{#1}}
\newtheorem{theorem}{Theorem}
\newtheorem{definition}{Definition}
\newtheorem{lemma}{Lemma}
\newtheorem{claim}[theorem]{Claim}
\DeclarePairedDelimiter\ceil{\lceil}{\rceil}
\DeclarePairedDelimiter\floor{\lfloor}{\rfloor}
\newcommand{\BZ}{{\mathbb Z}}
\begin{document}

\begin{center}
\Large Design of the Millennium Villages Project Sampling Plan: \\
a simulation study for a multi-module survey \normalsize \\
\bigskip
Shira Mitchell, Rebecca Ross, Susanna Makela, Elizabeth A. Stuart, Avi Feller, Alan M. Zaslavsky, Andrew Gelman \\ 
\end{center}

\begin{abstract}
The Millennium Villages Project (MVP) is a ten-year integrated rural development project implemented in ten sub-Saharan African sites. At its conclusion we will conduct an evaluation of its causal effect on a variety of development outcomes, measured via household surveys in treatment and comparison areas. Outcomes are measured by six survey modules, with sample sizes for each demographic group determined by budget, logistics, and the group's vulnerability. We design a sampling plan that aims to reduce effort for survey enumerators and maximize precision for all outcomes. We propose two-stage sampling designs, sampling households at the first stage, followed by a second stage sample that differs across demographic groups. Two-stage designs are usually constructed by simple random sampling (SRS) of households and proportional within-household sampling, or probability proportional to size sampling (PPS) of households with fixed sampling within each. No measure of household size is proportional for all demographic groups, putting PPS schemes at a disadvantage. The SRS schemes have the disadvantage that multiple individuals sampled per household decreases efficiency due to intra-household correlation. We conduct a simulation study (using both design- and model-based survey inference) to understand these tradeoffs and recommend a sampling plan for the Millennium Villages Project. Similar design issues arise in other studies with surveys that target different demographic groups.
\end{abstract}

\section{Background}\label{background}

The Millennium Villages Project (MVP) is an economic development project that targets rural populations across ten countries in sub-Saharan Africa, implementing a multi-sector package of interventions at a village level \citep{Sachs_McArthur2005, Sanchez2007}. See \citet{MVP_protocol_Lancet} for background on the project, study site selection, outcomes of interest, and a comprehensive description of the plan to evaluate its effectiveness. \citet{MitchellGelman2015} describe our plan for causal inference about the MVP's effect on a variety of development outcomes measured in different demographic groups. These outcomes will be measured via survey modules administered in both treatment and comparison villages. 

A design analysis described in \citet{MitchellGelman2015} was used to recommend the number of control villages and magnitude of sampling in each. Next, we must determine how to select households and individuals within households. We propose a two-stage sample: households will be sampled in stage I, followed by individuals within households in stage II \citep{Lohr2010, SarndalBook}. In the first stage, we must decide between simple random sampling and probability proportional to size sampling of households. Because the project operates at the village level, a sampling plan that efficiently estimates outcome means per village is an efficient sampling plan for the overall causal evaluation. In this paper we conduct a simulation study to decide on a sampling plan for estimating finite population village means. 

We aim to minimize the design effect, the ratio between the actual and effective sample sizes. One factor in determining the efficiency of a sampling design is the intraclass correlation, i.e. the correlation among individuals within a household. If more than one individual is sampled per household, the intraclass correlation increases the design effect, reducing the effective sample size relative to the actual sample size. 

Another factor in the efficiency of a sampling design is the distribution of individuals' sampling probabilities. Sampling probabilities can be optimized for a specific outcome, e.g. by sampling with probability approximately proportional to the outcome \citep[p.88]{SarndalBook}. However, with many outcomes of interest, such tailored optimization is difficult or impossible. Therefore, a {\it self-weighted} sample design is preferred, such that all individuals are sampled with equal probability \citep[p.287]{Kish92,Lohr2010}. Such samples are representative without weighting adjustments, and unbiased point estimates can be obtained from standard statistical procedures. 

Given a fixed precision, we aim to minimize time and resources for the survey enumerator teams. This includes minimizing the numbers of people surveyed (i.e. the actual sample size), but also considering the number of households visited, and the effort required to prepare a sampling frame. To conduct the first stage of sampling, a scheme that samples households with equal probability only requires a list of all households with GPS coordinates identifying their locations. However, a scheme which samples households with probability proportional to size requires some measure of household size (e.g. the total number of household members). This additional piece of information requires more effort for enumerators, especially for larger villages with many households. After either method of first stage sampling, we will conduct a demographic census in the sampled households to create the sampling frame for the second stage. 

In this paper we conduct a simulation study to understand the tradeoffs between simple random sampling and probability proportional to size sampling of households in the context of the MVP evaluation. Additionally, our simulations explore design-based versus model-based inference, a dichotomy which has implications for our the analysis of our outcome data.

\section{Outcomes and survey modules}
The Millennium Villages Project (MVP) defines 51 outcomes of interest, including measures of poverty alleviation, agriculture, education, gender equality, health, environmental sustainability, and infrastructure \citep{MVP_protocol_Lancet}. These outcomes are measured in six different survey modules, whose content is discussed in Section \ref{surveys} of \citet{MVP_protocol_Lancet}.  These modules include:
\begin{itemize}
\item a {\bf household survey}, administered to all household heads (or other knowledgeable household members) within the sampled households;
\item a sex-specific {\bf adult survey}, administered to men and women of reproductive age (15 to 49 years) within the sampled households;
\item within the adult-female survey, a {\bf birth history} section, administered to women of reproductive age (15 to 49 years) both in the sampled households and in additional sampled households to reach sample size sufficient for estimating child mortality;
\item a {\bf nutrition survey}, administered to men and women age 15 to 49 years in sampled households;
\item {\bf blood (malaria and anemia)} testing, administered to four age-sex groups in sampled households: children age 6 to 59 months, school-aged children (5 to 14 years old), men age 15 to 49 years, women age 15 to 49 years; and 
\item {\bf anthropometry} measurements, administered among children age 6 to 59 months in sampled households.
\end{itemize}
For each module and age-sex group combination, the project has budgeted a target sample size based on a combination of budget, logistics, and relative importance of different vulnerable populations and intervention beneficiaries.

\section{Sampling plans considered}

For the purpose of our simulation study, we consider all survey modules except for birth history and the nutrition survey. Our sampling will be performed in two phases. First, we will sample households using either simple random sampling (SRS, without replacement) or probability proportional to size sampling (PPS, with replacement), with household size defined as $N_{h,\textrm{total}}$, the number of household members under 50 years old in household $h$. Let $s_I$ be the set of (unique) sampled households. In the PPS scheme, we use $r_I$ to denote the set of sampled households with repeats. Let $n_I = |s_I|$ be the number of households sampled without replacement in the SRS scheme and let $m_I = |r_I|$ be the number of households sampled with replacement in the PPS scheme. We let $n_I = m_I = 300$ based on the project's previous survey rounds and budget for the final survey round. 

To describe the within-household sampling plans for each survey module, we use the following notation. Let $N_h$ be the total number of people in household $h$ that are in the target age-sex group for a particular module. Let $n_h \leq N_h$ be the number of people in household $h$ that we sample and survey. For example, if considering the anthropometry module, then $N_h$ is the number of children under five years of age in household $h$ and $n_h$ is the number of those sampled for the anthropometry module. Let $N = \sum_{h} N_h$ be the total number of people in the sampling frame (an MV1 or a control village) that are in the module's target age-sex group. 

We now outline the within-household sampling plans considered in our simulation study.

\subsubsection*{Adult, anthropometry, and blood modules}
For each module and age-sex group combination, the project has budgeted a target sample size, $n_{\textrm{target}}$:
\begin{itemize}
\item {\bf adult survey} - $n_{\textrm{target}} = 400$ men and $n_{\textrm{target}} = 400$ women of reproductive age (15 to 49 years);
\item {\bf blood (malaria and anemia)} - $n_{\textrm{target}} = 300$ children age 6 to 59 months, $n_{\textrm{target}} = 100$ school-aged children (5 to 14 years old), $n_{\textrm{target}} = 100$ men age 15 to 49 years, $n_{\textrm{target}} = 100$ women age 15 to 49 years; and 
\item {\bf anthropometry} - $n_{\textrm{target}} = 300$ children age 6 to 59 months.
\end{itemize}
\noindent The second stage sampling schemes we consider in this simulation study are, for a given module and age-sex group:
\begin{itemize}
\item For {\bf SRS} sampling of households - combine all $N_{s_I} = \sum_{h \in s_I} N_h$ people in the sampled households in the target age-sex group. If $N_{s_I} \leq n_{\textrm{target}}$, then survey all. Otherwise, we consider two options:
\begin{itemize}
\item {\bf stratify} by sampling $n_h$ individuals from each household, where $n_h$ is proportional (up to rounding) to $N_h$, and the constant of proportionality is determined by the total in the sampled households,
$$n_h = \textrm{round} \left ( N_h * \frac{n_{\textrm{target}}}{N_{s_I}} \right )\textrm{; or}$$
\item take an equal-probability {\bf systematic sample} of $n_{\textrm{target}}$ people. We order the households randomly, and people (in the module's target age-sex group) within households randomly, so that the people within a household are listed consecutively. We then take a sample using the fractional interval method described in \citet[p.77]{SarndalBook} and Appendix \ref{systematic_sampling_proofs}. This procedure enables us to control sample sizes and spread the sample across households such that the sample size in a household is always either the ceiling or the floor of the expected sample size in that household under simple random sampling (see Appendix \ref{systematic_sampling_proofs}). Conceptually, this is similar to stratifying on household, except that there is dependence of the samples between strata (i.e. households). 

\end{itemize}
\item For {\bf PPS} sampling of households - if $n_{\textrm{target}} \geq m_I$, sample a fixed number of people, $n_h = 1$, per household (regardless of household size) if available.\footnote{It is possible that a household is sampled without any members of the target age-sex group. Therefore, if $n_h = 1$, then the PPS scheme will result in a smaller sample size than the SRS scheme. Additionally, for the adult module (where $n_{\textrm{target}} = 400$), if $n_h = 1$ then the PPS scheme will at most sample only 300 adults, one per sampled household.} If $n_{\textrm{target}} < m_I$, take a simple random sample of $n_{\textrm{target}}$ households from $r_I$ to obtain a smaller PPS sample of households. Then sample $n_h = 1$ per household if available. 
\end{itemize}

\subsubsection*{Household survey}

For both the SRS and PPS schemes, the household survey module is administered to the head of household in each sampled household.

\subsection{Simulated data}

For each survey module, we simulate one outcome measured by that module. For the household survey we use the total household consumption; for the adult male survey we use the number of days after illness began when the man first sought advice or treatment; for the adult female survey we use the number of times a woman received antenatal care during her most recent pregnancy; for malaria and anemia testing we use hemoglobin blood concentration; for the anthropometry module we use the weight for age z-score. In generating simulated data, we make the simplifying assumption that all individuals in a target age-sex group have non-missing outcomes. For example, we generate antenatal care outcomes for all women of reproductive age.

To generate data, we use the multilevel model
\begin{align}\label{generate_data_continuous}
y_i &\sim \textrm{Normal}(\alpha_{h[i]}, \sigma_y) \textrm{ for individuals } i\\
\alpha_h &\sim \textrm{Normal}(\mu + \beta_1 N_{h,\textrm{total}} + \beta_2 N_{h,\textrm{total}}^2,\ \sigma_\alpha) \textrm{ for households }h, \nonumber
\end{align}
For total household consumption we use a model analogous to model \ref{generate_data_continuous}:\footnote{Model \ref{generate_data_unlogged_consumption} can be motivated by assuming that model \ref{generate_data_continuous} holds for individual-level consumption (this would assume that within a household consumption is identically distributed, not taking into account age-sex differences). This model implies that $t_h = \sum_{i | h[i] = h} y_i \sim \textrm{Normal}(N_{h,\textrm{total}} \alpha_h, N_{h, \textrm{total}} \sigma_y)$ and that the marginal variance of $t_h$ is $N_{h,\textrm{total}}^2 (\sigma_\alpha^2 + \sigma_y^2)$.}
\begin{align}\label{generate_data_unlogged_consumption}
t_h &\sim \textrm{Normal}(\mu + \beta_1 N_{h,\textrm{total}} + \beta_2 N_{h,\textrm{total}}^2,\ N_{h,\textrm{total}} \sigma_t) \textrm{ for households }h.
\end{align}
We also use a model for the log total consumption (which in our data is more Normally distributed than the total consumption),
\begin{align}\label{generate_data_consumption}
\log(t_h) &\sim \textrm{Normal}(\mu + \beta_1 N_{h,\textrm{total}} + \beta_2 N_{h,\textrm{total}}^2,\ \sigma_t) \textrm{ for households }h.
\end{align}
We use the demographic information from the census and the multilevel model with estimated parameter values (from the survey data) to generate simulated populations. If when models \ref{generate_data_continuous}, \ref{generate_data_unlogged_consumption} or \ref{generate_data_consumption} are fit to past survey data, the 50\% posterior interval of $\beta_1$ or $\beta_2$ contains 0, we set the parameter to 0 when simulating populations. This prevents us from using very noisy estimates of coefficients. Within each simulated population, we randomly sample according to the sampling plans described above, and estimate the finite population mean using either model-based or design-based inference.

\section{Bayesian model-based inference}

To generalize from the data to the population, both design-based and model-based inference must take into account how the data are collected. Let $y = (y_1,...,y_N)$ denote data for the population of interest, and $I = (I_1,...,I_N)$ indicators of the observation of $y$, where $I_i = 1$ if $y_i$ is sampled,\footnote{We assume that all units that are sampled are observed.} and $I_i = 0$ if $y_i$ is missing. Let `obs' = $\{ i : I_i = 1\}$ and `mis' = $\{ i : I_i = 0\}$. Thus, the information available is $y_{\textrm{obs}}, I$, and the likelihood is $p(y_{\textrm{obs}}, I | x, \theta, \phi) = \int p(y | x, \theta) p(I | x, y, \phi) d y_{\textrm{mis}}$, where $x$ are observed covariates. Bayesian inference computes the posterior distribution $p(\theta, \phi | x, y_{\textrm{obs}}, I)$ (superpopulation inference) and $p(y_{\textrm{mis}} | x, y_{\textrm{obs}}, I, \theta, \phi)$ (finite population inference). Under the {\it ignorability} condition, these inferences can be simplified to $p(\theta | x, y_{\textrm{obs}})$ and $p(y_{\textrm{mis}} | x, y_{\textrm{obs}}, \theta)$. Ignorability is satisfied if both the {\it missing at random} and {\it distinct parameters} conditions are satisfied \citep[p.202, 206-211]{BDA3}. Missing at random requires that the missingness be independent of the missing values conditional on observed variables and a parameter $\phi$: $p(I | x, y, \phi) = p(I | x, y_{\textrm{obs}}, \phi)$. The distinct parameters condition requires that the parameters of the missingness mechanism ($\phi$) be independent of the parameters of the data generating process ($\theta$), conditional on covariates: $p(\phi | x, \theta) = p(\phi | x)$.

We include design variables such that the data collection mechanism is ignorable with respect to this model. For example, in our SRS-stratified sampling plan, the data collection mechanism is:
$$p(I | x, y, \phi) = 1/\left [ \sum_{\substack{s_I \subseteq \{1,...,N_I \} \\ |s_I| = n_I}} \prod_{h \in s_I} {N_h \choose n_h} \right ] \textrm{ where }n_h = \textrm{round} \left ( N_h \frac{n}{N_{s_I}} \right )$$ if $\exists s_I \subseteq \{1,...,N_I \}$ s.t. $|s_I| = n_I$ and $\sum_{i : h[i] = h} I_i = \textrm{round} \left ( N_h \frac{n}{N_{s_I}} \right )$ for all $h \in s_I$. Otherwise, the probability of missingness pattern $I$ is zero. 

%

Thus, we include as design variables the household identifiers and the $N_h$ (e.g. the number of women per household, if the survey module targets women). Similar computations show that under the SRS-systematic sampling scheme these variables are also sufficient to satisfy missing at random. For the PPS scheme, we also will need the measure of household size used to select the households (e.g. the total number of household members under 50) \citep[p.211]{BDA3}. For simplicity, we fit the same ignorable model for both the SRS and PPS schemes. For the anthropometry, blood, and adult survey modules we fit
\begin{align}\label{ignorable_model}
y_i &\sim \textrm{Normal}(\alpha_{h[i]}, \sigma_y) \textrm{ for individuals } i\\
\alpha_h &\sim \textrm{Normal}(\mu + \beta_1 N_{h,\textrm{total}} + \beta_2 N_{h,\textrm{total}}^2 + \beta_3 N_h + \beta_4 N_h^2,\ \sigma_\alpha) \textrm{ for households }h, \nonumber
\end{align}
For the household survey, we fit models \ref{generate_data_unlogged_consumption} and \ref{generate_data_consumption}. 

Our parameter of interest is the finite population mean $\overline{Y} = \frac{1}{N} \sum_{h = 1}^{N_I} N_h \overline{y}_h$, where $\overline{y}_h = \frac{n_h}{N_h} \overline{y}_{h,obs} + \frac{N_h - n_h}{N_h} \overline{y}_{h,mis}$ \citep[p.205]{BDA3}. We obtain posterior simulations of $\overline{Y}$ as follows: if household $h$ is sampled, we use a simulation of $\alpha_h$ to generate $N_h - n_h$ simulated $y_i$'s. If household $h$ is not sampled, we use simulations of $\mu$ and $\sigma_\alpha$ to simulate a new $\alpha_h$, then generate $N_h$ simulated $y_i$'s. 

\section{Frequentist design-based inference}

We use the {\tt survey} package to compute design-based estimates and variances \citep{Lumley2004}. Though we perform our SRS schemes without replacement, we compute all variances without finite population corrections, using the Horvitz Thompson (Hajek) ratio estimator and its with-replacement variance \citep[p.247]{Lohr2010}.

Our SRS schemes are two-phase rather than two-stage designs, since the sampling within a household depends on which households were sampled in the first stage \citep[p.134-135]{SarndalBook}. This dependence is reflected in the design weights we compute, see below. For the SRS-systematic sampling scheme, the independence assumption of two-stage sampling is also violated, with the sampling in each household dependent on the sampling in other households. Our design-based analysis approximates these two-phase designs with a two-stage analysis. In contrast, in model-based inference the details of the design do not matter in the analysis once we include design variables in our model \citep[p.202, 206-211]{BDA3}.

\subsection{Design weights}

For the SRS-systematic design, the inclusion probabilities are:
\begin{align*}
\pi_{hi} &\equiv P[\textrm{person }i\textrm{ in household }h\textrm{ is sampled}] \\
&= P(h \in s_I) P(i \in s_h | h \in s_I) \\
&= \frac{n_I}{N_I} \sum_{s_I | h \in s_I} P(i \in s_h | h \in s_I, s_I) * P(s_I | h \in s_I) \\
&= \frac{n_I}{N_I} \sum_{s_I | h \in s_I} \underbrace{\min \left ( \frac{n}{N_{s_I}}, 1 \right )}_{(*)} * \frac{1}{{N_I - 1 \choose n_I - 1}}
\end{align*}
For SRS-stratified, we replace $(*)$ with $\min \left ( \frac{\textrm{round} \left ( \frac{n}{N_{s_I}} N_h \right )}{N_h}, 1 \right )$. In the simulations, instead of computing this precisely, we estimate it by randomly sampling $s_I$ such that $h \in s_I$. This avoids the computationally intensive loop over all ${N_I - 1 \choose n_I - 1}$ such sets. Although these weights are not equal for all individuals, because the distributions of household sizes (from the MVP demographic data) have no extreme outliers, in our simulations the weights are nearly equal. 

For the PPS scheme, the inclusion probabilities are:
\begin{align*}
\pi_{hi} &= E_k \left [ P \left ( \textrm{person }i\textrm{ in household }h\textrm{ is sampled } | \textrm{ household }i\textrm{ is chosen } k \textrm{ times}\right ) \right ] \\
&= E_k [ 1 - (1-n_h/N_h)^k ] \textrm{ since we independently subsample a household as many times as it is drawn.} \\
&\textrm{Since }k \sim \mbox{Bin}(m_I, p_h), \textrm{ by its probability generating function, we obtain} \\
&= 1 - (p_h (1-n_h/N_h) + (1-p_h))^{m_I} \\
&= 1 - \left ( 1 - p_h \frac{n_h}{N_h} \right )^{m_I} \\
&\textrm{if }p_h \frac{n_h}{N_h} \textrm{ is small, we can approximate this as: }\\
&= m_I p_h \frac{n_h}{N_h}
\end{align*}
In PPS sampling, $p_h \propto x_h$, where $x_h$ is a measure of household size \citep[p.97]{SarndalBook}. So the PPS weights are:
\begin{align*}
w_{hi} &= \frac{\sum_{h \in U_I} x_h}{m_I x_h} \frac{N_h}{n_h}.
\end{align*}
If $x_h \propto N_h$, and $n_h \propto 1$, then the design is self-weighted. We take $n_h = c$, a constant, but we cannot choose $x_h$ such that $x_h \propto N_h$ for all modules, since the target age-sex groups differ from module to module. We chose $x_h = N_{h,\textrm{total}}$, the number of household members under 50 years of age, because it represented a compromise between the different target age-sex groups. Thus, our weights are $w_{hi} \propto \frac{N_h}{N_{h,\textrm{total}}}$.

\section{Comparisons between sampling schemes: variances and design effects}

We want to compare the PPS and SRS designs (in either the Bayesian model-based or the design-based paradigms). In general, the two schemes will have slightly different sample sizes, making direct comparisons of variances less relevant. For the household survey module, we fix the sample sizes to be equal, and for the adult, anthropometry, and blood modules, we adjust for the differing sample sizes by computing a design effect, defined below.

The household survey module is administered to the heads of households only, not individual members. Therefore, the time cost of the household module is mostly determined by the number of households surveyed. We set up our simulations such that the number of household heads to be interviewed (i.e. sample size) is the same for the SRS and PPS sampling schemes. We first perform a PPS sampling of households. Then, we use the number of unique sampled households to obtain the number of households to sample for the SRS scheme. We then directly compare the variances in estimating $\overline{Y}$, the finite population mean consumption per person.

For the remaining modules, we compute design effects. To define the design effect (often abbreviated as ``deff"), we first introduce the following notation. Let $\widehat{\theta}_\pi = \widehat{\theta}_\pi(I, y_{\textrm{obs}})$ be the estimator of $\theta$ (in our case, $\theta = \overline{Y}$) where $\pi$ is the sampling distribution assumed to have been used in drawing sample $S$. Let $V_{\pi_1} (\widehat{\theta}_{\pi_2}; y)$ be the sampling variance of an estimator of $\theta$ that assumes sampling distribution $\pi_2$, and $\pi_1$ is the distribution with respect to which we want the variance. Let $\widehat{V}_{\pi_1} (\widehat{\theta}_{\pi_2}; \pi_3; I, y_{\textrm{obs}})$ be an estimator where $\pi_3$ is the sampling distribution assumed to have been used in drawing sample $S$. The population design effect is defined as $= V_p(\widehat{\theta}_p; y)/V_{SRS}(\widehat{\theta}_{SRS}; y)$. The estimated design effect is defined as $= \widehat{V}_p(\widehat{\theta}_p; p; y_{\textrm{obs}})/\widehat{V}_{SRS}(\widehat{\theta}_{SRS}; p; y_{\textrm{obs}})$. 

In the design-based setting, we compute design effects assuming sampling with-replacement in both numerator and denominator variances. This is done in the {\tt survey} package by specifying {\tt deff = `replace'}.

For the model-based simulations, we estimate the numerator of the deff with the posterior variance for $\overline{Y}$ from fitting a model that includes enough design variables such that the data collection mechanism is ignorable with respect to this model. This posterior variance includes an implicit finite population correction, so we compute a denominator variance that also includes such a correction:
\begin{align}\label{SRS_variance}
V_{SRS}(\widehat{\theta}_{SRS}; y) &= V_{SRS}(\overline{y}; y) \\
 &= \left ( 1 - \frac{n}{N} \right ) \frac{S^2}{n} \nonumber
\end{align}
where $S^2 = \frac{1}{N-1} \sum_{i = 1}^N (y_i - \overline{Y})^2$. 

To assess our estimated deff in the model-based setting, we compare the posterior variance $V_{p\textrm{-ignorable}}(\theta | y_{\textrm{obs}})$ from fitting an ignorable model with respect to a sampling distribution $p$ to the design-based sampling variance of the posterior means, $E_{p\textrm{-ignorable}}(\theta | y_{\textrm{obs}})$. The latter can be computed by simulation: we sample repeatedly from the full population using distribution $p$, fit the $p$-ignorable model, obtain a posterior mean of $\theta$, and compute the variance of these across the samples from $p$. Fixing one finite population, in Figure \ref{histogram_p} we create a histogram of posterior variances from fitting the $p$-ignorable model to each sample, and indicate with a vertical line the design-based variance of the posterior means, which is computed by simulation. We make the same comparison for $p = $ a simple random sample (and its ignorable model with flat priors and no design variables), and include the closed-form design-based estimate (\ref{SRS_variance}) as a vertical line, in addition to the simulation-computed design-based estimate. See Figure \ref{histogram_SRS}. We see that the posterior variances appear unbiased for the design-based variances.

\begin{figure}[h!]
    \centering
\subcaptionbox{Sampling distribution $p$ is SRS sampling of households followed by an equal-probability systematic sample within households. \label{histogram_p}}
 [0.49\textwidth]{\includegraphics[width=0.49\textwidth]{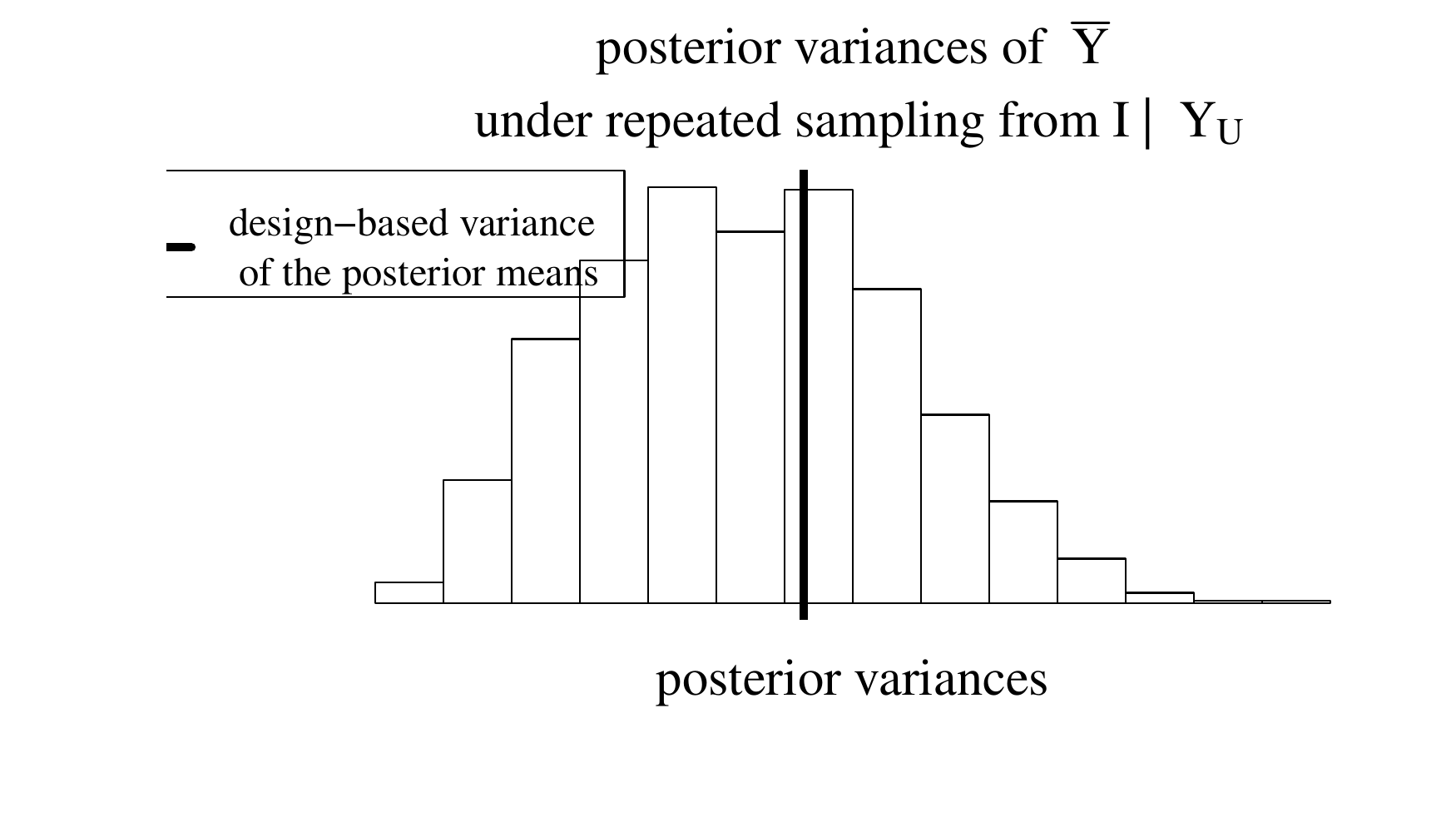}}
\subcaptionbox{Sampling distribution $p$ is SRS sampling of people. \label{histogram_SRS}}
 [0.49\textwidth]{\includegraphics[width=0.49\textwidth]{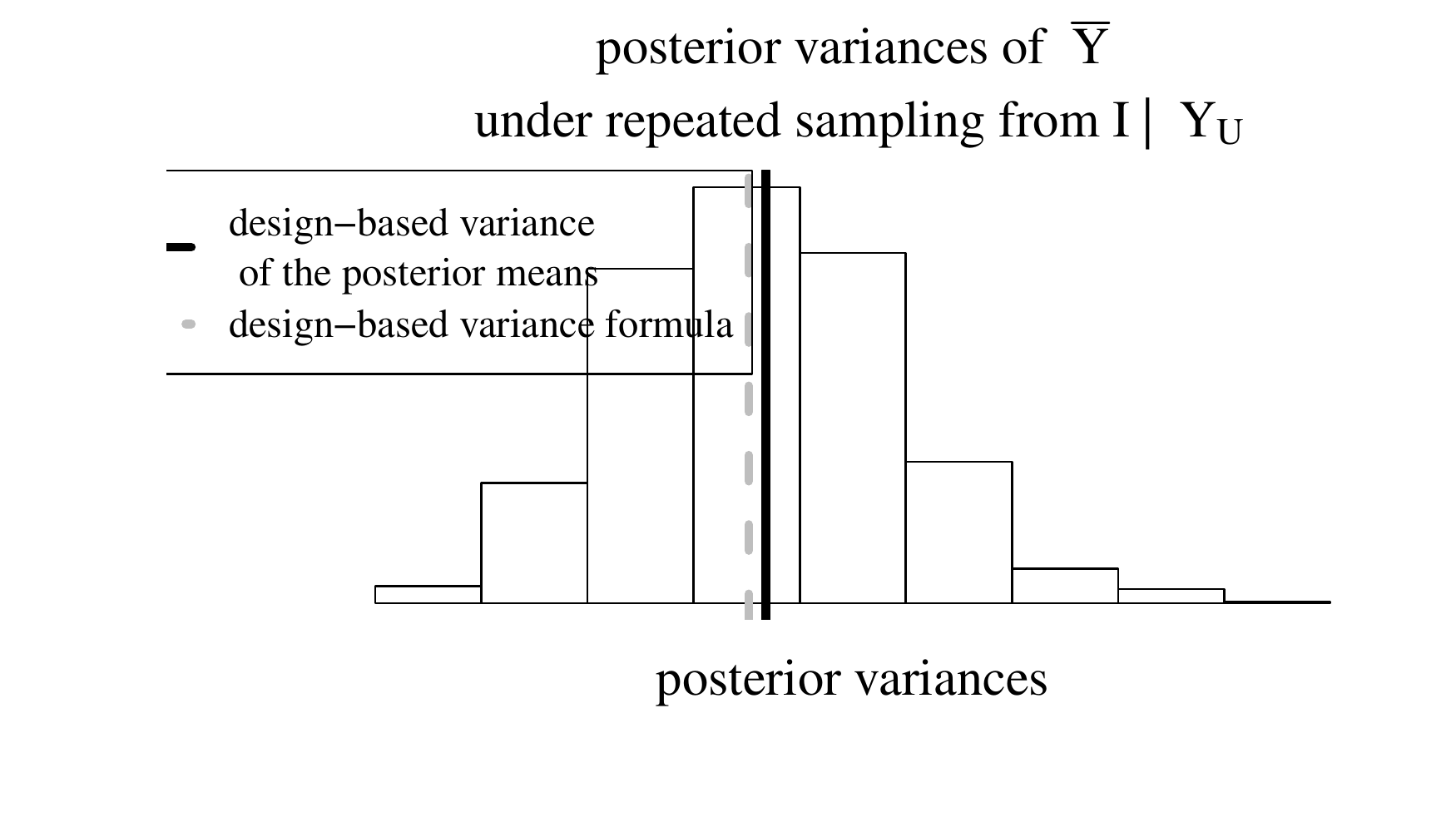}} 
  \caption[]{Fixing one finite population, we show a histogram of posterior variances from fitting a $p$-ignorable model to each sample using sampling distribution $p$, and indicate in a vertical line the design-based variance of the posterior means, which is computed by simulation. When $p$ is simple random sample of people, we also include the closed-form design-based estimate (\ref{SRS_variance}).}\label{Uganda_ACCESS_by_POPD}
\end{figure}

\section{Simulation results}

Our results are displayed in Appendix \ref{survey_sampling_results_appendix}, where we see that neither the SRS nor PPS sampling of households is more efficient (i.e. has a lower design effect) in general. 

We see that for modules with higher target sample sizes, SRS tends to be less efficient. For example, in the under-5 blood ($n_{\textrm{target}} = 300$) and adult ($n_{\textrm{target}} = 400$) modules the SRS scheme is less efficient. One explanation for this observation is the different numbers of people sampled per household in the SRS versus PPS schemes, which has efficiency implications due to the intra-house correlation. In the PPS scheme, the households sampled in the first stage are larger and therefore more likely to include people in the target demographics. In contrast, in the SRS scheme, the sample is often drawn from fewer households, with more people sampled per household. Moreover, the PPS scheme only samples one person per household draw (though this can result in more than one person being sampled per household due to the with-replacement sampling at the first stage). 

For modules where the target sample size is low, there are fewer people sampled per household in the SRS sampling scheme, and the intra-house correlation does not substantially impact the design efficiency. Therefore, because SRS has near-equal individual-level probability of sampling (see the design-weights computed above), its design effect in the absence of household clustering should be close to one. In contrast, the PPS scheme does not have near-equal individual-level sampling probabilities because the measure of household size is not proportional to the target demographic (see the design-weights computed above).

The relative efficiency of SRS versus PPS is similar between design-based and model-based simulations. In the few cases where they differ, design-based results show that SRS has higher design effects than PPS, relative to model-based results. In general, our model-based simulations show more variability across simulations than the design-based simulations. Comparing systematic to stratified sampling at the second stage of the SRS schemes, we see few differences except that stratified sampling tends to have higher variance across simulations. 

\section{Final sampling plan}

As described above, the PPS scheme requires a sampling frame that includes household sizes, whereas the SRS scheme only requires a list of households. Given our results, we cannot justify the additional resources required to collect the more detailed household list for the PPS scheme. Therefore, our sampling scheme will begin with an SRS sample of households. For the second stage of sampling for the adult, anthropometry, and blood modules, we prefer the control over sample size achieved by systematic sampling (as opposed to stratified sampling). 

The household and nutrition modules follow a different sampling scheme.  As mentioned above, the household module is administered to all household heads (or other knowledgeable household members) within the sampled households. The nutrition module consists of a food frequency questionnaire, which takes longer to administer than other modules. We suspect that the within-household correlation is very high for data on food frequency, because household members are likely to eat similar foods. (This intra-house correlation cannot be measured from project data, because the project has always limited this module to one member per household.)  For these reasons, we limit the nutrition module to one adult (age 15 to 49 years) per household.  

\section{Software}

For fitting multilevel models we use Stan in {\tt R}, \citep{Stan, R}.

\clearpage
\newpage

\appendix

\section*{Acknowledgements}

The authors would like to thank the following people for very valuable feedback and ideas: Jeffrey D. Sachs, Joseph K. Blitzstein, Qixuan Chen, Jennifer Hill, Macartan Humphreys, Michael Clemens, Alberto Abadie, Marc Levy, Linda Pistolesi, Keli Liu, Peng Ding, Natalie Exner, Abhishek Chakrabortty, Rachael Meager, and Natalie Bau.  

All mistakes are our own.

\newpage
\clearpage

\section{Properties of systematic sampling}\label{systematic_sampling_proofs}

\begin{definition}[The fractional interval method of systematic sampling]\label{systematic_def}
Consider a population of size $N^*$ consisting of people grouped into $n_I$ households indexed by $h$, with $N_h$ people within-household $h$.  Let $n_{target} < N^*$ be the desired sample size.  Set $a = \frac{N^*}{n_{target}}$.  Order the households randomly, and randomly order the people in the target group within the households.  Let $k = 1,...,N^*$ label the people in this order:
$$\underbrace{1......N_1}_{\textrm{household 1}}\underbrace{(N_1 + 1)......(N_1 + N_2)}_{\textrm{household 2}}............\underbrace{......N^*}_{\textrm{household }n_I}$$
Draw a random real number $\xi$ uniformly between 0 and $a$, $\xi \sim U(0,a)$, and sample all people with $k$ such that 
$$k - 1 < \xi + (j-1) a \leq k \textrm{ for }j = 1,...,n_{target}.$$
\citep[p.77]{SarndalBook}
\end{definition}

\begin{claim}\label{sample_size}
When performing the sampling scheme in Definition \ref{systematic_def}, the sample size will be $n_{target}$.
\end{claim}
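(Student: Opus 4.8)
The plan is to reinterpret the selection rule one index at a time, and then check that the $n_{target}$ proposed labels are pairwise distinct and all lie in $\{1,\dots,N^*\}$; these three facts together force the sample size to equal $n_{target}$. First I would observe that for each fixed $j \in \{1,\dots,n_{target}\}$ the double inequality $k-1 < \xi + (j-1)a \leq k$ holds for exactly one integer $k$, namely $k_j = \lceil \xi + (j-1)a \rceil$, because $k-1 < v \leq k$ is equivalent to $k = \lceil v \rceil$. Hence the scheme proposes precisely one label $k_j$ for each of the $n_{target}$ values of $j$, so the sample size is at most $n_{target}$, with equality exactly when the $k_j$ are distinct and valid.

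Next I would establish distinctness, which is where the hypothesis $n_{target} < N^*$ enters: it gives $a = N^*/n_{target} > 1$. For any real $v$ one has $\lceil v \rceil < v+1$ and $\lceil v+a \rceil \geq v+a$, so $\lceil v+a \rceil \geq v+a > v+1 > \lceil v \rceil$, and since both sides are integers $\lceil v+a \rceil \geq \lceil v \rceil + 1$. Applying this with $v = \xi + (j-1)a$ shows $k_{j+1} > k_j$, so $k_1 < k_2 < \cdots < k_{n_{target}}$ is strictly increasing and the labels are distinct.

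Finally I would check the range. For the lower end, $k_1 = \lceil \xi \rceil \geq 1$ since $\xi \sim U(0,a)$ is positive (almost surely, or by adopting the convention $\xi \in (0,a]$). For the upper end, $\xi \leq a$ gives $\xi + (n_{target}-1)a \leq a + (n_{target}-1)a = n_{target}\,a = N^*$, so $k_{n_{target}} = \lceil \xi + (n_{target}-1)a \rceil \leq N^*$; as the $k_j$ increase, every intermediate label also lies in $\{1,\dots,N^*\}$. Combining the three steps, the scheme selects exactly $n_{target}$ distinct admissible people.

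The main obstacle is not computational but a matter of bookkeeping with the ceiling function and the boundary behavior of $\xi$: distinctness fails in general once $a \leq 1$ (two consecutive selection points could share a ceiling, dropping the count below $n_{target}$), so the argument must lean on $a>1$ precisely at that step, and the degenerate value $\xi = 0$ (which would yield $k_1 = 0$) must be excluded, either as a probability-zero event or by the half-open convention $\xi \in (0,a]$.
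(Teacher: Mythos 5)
Your proposal is correct and follows essentially the same route as the paper's proof: identify the selected label as $k_j = \lceil \xi + (j-1)a\rceil$, use $a>1$ to get strict increase (hence distinctness) of the $k_j$, and bound $k_1 \geq 1$ and $k_{n_{target}} \leq N^*$ using $0 < \xi < a$. Your explicit handling of the boundary case $\xi = 0$ is a small refinement the paper leaves implicit, but the argument is the same.
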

\begin{proof}[Proof of Claim \ref{sample_size}]
Since $a \equiv \frac{N^*}{n_{target}}$ and $N^* > n_{target}$, $a > 1$.  Since $k - 1 < x \leq k \Leftrightarrow \ceil*{x} = k$, we can write $\ceil*{\xi + (j-1) a} = k$.  The ceiling function is monotone increasing and $\ceil*{x+1} = \ceil*{x} + 1$, so each time $j$ increases by 1, we get a different value of $k$.
Now we must show that the $k$'s stay in the set $\{1,...,N^*\}$, i.e. those from which we are sampling.  The first $k$ is such that $k - 1 < \xi \leq k$, where $\xi \in (0, a)$.  Since $\xi > 0$, we must have $k \geq 1$. The last $k$ is such that $k - 1 < \xi + (n_{target} - 1) a \leq k$, and we know $\xi + (n_{target} - 1) a \leq \xi + N^* - a < N^*$ because $\xi < a$.  Then $k \leq N^*$.  Thus, since each $j$ maps to a unique $k$, we've proven we get a sample size of exactly $n_{target}$.
%
%
\end{proof}

\bigskip

\begin{claim}\label{within_hh_sample_size}
When performing the sampling scheme in Definition \ref{systematic_def}, the sample size within each household $h$ is always the ceiling or the floor of the expected sample size in household $h$ under simple random sampling: $\frac{N_h}{a}$.
\end{claim}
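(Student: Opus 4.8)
The plan is to reduce the within-household count to a lattice-point counting problem on the real line. First I would fix coordinates: since the households are listed consecutively, household $h$ occupies the block of labels $(L_h, L_h + N_h]$, where $L_h = \sum_{h' < h} N_{h'}$ is the number of people placed before it, and these half-open blocks partition $(0, N^*]$. I would also record the identity that fixes the meaning of the claim: under simple random sampling each person is included with probability $n_{target}/N^* = 1/a$, so the expected within-household sample size is exactly $N_h/a$, the quantity whose floor and ceiling appear in the statement.

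Next, using the ceiling reformulation already established in the proof of Claim \ref{sample_size} --- that the person selected at step $j$ is $k = \ceil*{\xi + (j-1)a}$ --- I would observe that this person lies in household $h$ precisely when the sample point $\xi + (j-1)a$ falls in $(L_h, L_h + N_h]$. Hence the within-household sample size for $h$ equals the number of points of the arithmetic progression $\{\xi + (j-1)a : j = 1, \dots, n_{target}\}$ landing in $(L_h, L_h + N_h]$. Before counting, I would dispose of the restriction $j \in \{1, \dots, n_{target}\}$: Claim \ref{sample_size} guarantees all these points lie in $(0, N^*]$, whereas points for $j \le 0$ or $j > n_{target}$ lie outside $(0, N^*]$ (since $0 < \xi < a$); because the household blocks partition $(0, N^*]$, extending the range to all integers $j$ leaves the count in each block unchanged.

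The core step is then the counting lemma: the number of terms of a progression with common difference $a$ inside a half-open interval of length $N_h$ is either $\floor*{N_h/a}$ or $\ceil*{N_h/a}$. I would prove this by rewriting the membership condition $L_h < \xi + (j-1)a \le L_h + N_h$ as $j-1 \in (x, x + N_h/a]$ with $x = (L_h - \xi)/a$, so that the count is the number of integers in a half-open interval of length $N_h/a$. Writing $x = \floor*{x} + f$ with $f \in [0,1)$, this count equals $\floor*{f + N_h/a}$, which always lies between $\floor*{N_h/a}$ and $\ceil*{N_h/a}$.

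The main obstacle --- really the only place requiring care --- is the bookkeeping of the half-open endpoints: I must ensure the ceiling convention $k-1 < x \le k \Leftrightarrow \ceil*{x} = k$ and the interval $(L_h, L_h + N_h]$ are half-open on the same side, so that a point landing exactly on an integer boundary is attributed to one household and counted once. Once the intervals are aligned this way, the floor/ceiling bound is immediate and the claim follows for every $h$.
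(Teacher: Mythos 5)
Your proof is correct, and it rests on the same underlying idea as the paper's --- counting how many terms of the arithmetic progression $\xi + (j-1)a$ land in household $h$'s consecutive block of labels --- but you package the key step differently. The paper's Lemma~\ref{mylemma} computes the maximum of $A(x) = \{ j \in \BZ^+ : \xi + (j-1)a \leq x\}$ (the number of people sampled up through position $x$) as a floor or ceiling of $x/a$ depending on whether $\xi \leq da$, and then obtains the within-household count as $\max A(k^* + N_h) - \max A(k^*)$. You instead count directly the integers $j-1$ in the half-open interval $\left( \frac{L_h - \xi}{a}, \frac{L_h - \xi}{a} + \frac{N_h}{a} \right]$, which gives the closed form $\floor*{f + N_h/a}$ with $f$ the fractional part of $(L_h - \xi)/a$; this is manifestly $\floor*{N_h/a}$ or $\ceil*{N_h/a}$. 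Your version buys something concrete: the paper's last line writes the difference of the two maxima as $\frac{k^* + N_h}{a} - \frac{k^*}{a} = \frac{N_h}{a}$, as though each maximum were exactly $x/a$, and leaves implicit the case analysis needed to verify that a difference of two floor-or-ceiling roundings of $\frac{k^*+N_h}{a}$ and $\frac{k^*}{a}$ is itself the floor or ceiling of $\frac{N_h}{a}$. Your single-interval count avoids that bookkeeping entirely. Your attention to the half-open endpoint convention (so boundary points are attributed to exactly one household) and your justification for extending $j$ over all of $\BZ$ (the excluded terms fall outside $(0, N^*]$ because $0 < \xi < a$) are both sound and are points the paper passes over silently.
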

We first prove the following lemma which is used to prove the above claim:
\begin{lemma}\label{mylemma}
Consider the set $A(x) \equiv \{ j \in \BZ^+ | \xi + (j-1) a \leq x\}$.  The maximum of $A(x)$ is $\frac{x}{a}$ if $\frac{x}{a} \in \BZ$, $\floor*{\frac{x}{a}}$ if $\xi > da$, or $\ceil*{\frac{x}{a}}$ if $\xi \leq da$, where $d \equiv \frac{x}{a} - \floor*{\frac{x}{a}}$, the ``decimal part."
\end{lemma}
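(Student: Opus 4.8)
The plan is to solve the defining inequality of $A(x)$ explicitly for $j$, so that $\max A(x)$ becomes the floor of an affine expression in $x$ and $\xi$, and then to read off the three stated values by a short case analysis on the decimal part $d$. First I would rewrite the condition $\xi + (j-1)a \leq x$ as $j \leq 1 + \frac{x-\xi}{a}$. Assuming $A(x) \neq \emptyset$ (equivalently $\xi \leq x$, so that $j=1$ qualifies), the set is a nonempty bounded set of positive integers, so its maximum is the largest integer obeying this bound, namely
$$\max A(x) = \floor*{1 + \tfrac{x-\xi}{a}} = 1 + \floor*{\tfrac{x}{a} - \tfrac{\xi}{a}},$$
where the integer $1$ has been pulled outside the floor.

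Next I would write $\frac{x}{a} = \floor*{\frac{x}{a}} + d$ with $d \in [0,1)$ the decimal part and again pull the integer part $\floor*{\frac{x}{a}}$ outside the floor, reducing everything to evaluating $\floor*{d - \frac{\xi}{a}}$. The key quantitative input is that $\xi \sim U(0,a)$ gives $\frac{\xi}{a} \in (0,1)$ strictly, so $d - \frac{\xi}{a} \in (-1,1)$ and the inner floor can only be $-1$ or $0$.

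The three cases then fall out directly. When $d = 0$ (i.e.\ $\frac{x}{a} \in \BZ$), strictness of $\frac{\xi}{a} > 0$ forces $d - \frac{\xi}{a} \in (-1,0)$, so the inner floor is $-1$ and $\max A(x) = \floor*{\frac{x}{a}} = \frac{x}{a}$. When $d > 0$ and $\xi > da$ we have $\frac{\xi}{a} > d$, hence $d - \frac{\xi}{a} \in (-1,0)$, inner floor $-1$, and $\max A(x) = \floor*{\frac{x}{a}}$. When $d > 0$ and $\xi \leq da$ we have $0 \leq d - \frac{\xi}{a} < d < 1$, inner floor $0$, and $\max A(x) = 1 + \floor*{\frac{x}{a}} = \ceil*{\frac{x}{a}}$, the last equality using $d > 0$.

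The main point to be careful about is the boundary bookkeeping: isolating the integer case $d = 0$ (where the $\xi > da$ versus $\xi \leq da$ dichotomy would otherwise be ambiguous), keeping the boundary $\xi = da$ on the $\ceil*{\cdot}$ side, and exploiting the openness of $(0,a)$ so that $\frac{\xi}{a}$ is never exactly $0$ or $1$ and the inner floor is pinned down cleanly. I would also record at the outset the nonemptiness hypothesis $\xi \leq x$, which is the regime in which the lemma is invoked for the cumulative household boundaries in Claim \ref{within_hh_sample_size}.
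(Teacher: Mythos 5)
Your proof is correct, and it reaches the same trichotomy as the paper but by a slightly different mechanism. The paper argues by direct membership verification: it plugs the candidate $j=\floor*{\frac{x}{a}}$ (or $j=\frac{x}{a}$ in the integer case) into the defining inequality, confirms it lies in $A(x)$, and then checks whether incrementing $j$ by one keeps it in $A(x)$, which reduces to the comparison $\xi \lessgtr da$. You instead solve the inequality for $j$ once and for all, obtaining the closed form $\max A(x) = 1 + \floor*{\frac{x-\xi}{a}}$, and then do floor-function bookkeeping on $\floor*{d - \frac{\xi}{a}} \in \{-1,0\}$. The two routes hinge on the same comparison of $\xi$ with $da$, but yours has a small advantage: the closed form automatically certifies that the stated value is the maximum in every case, whereas the paper's argument in the non-integer case only explicitly shows $\ceil*{\frac{x}{a}} \in A(x)$ when $\xi \leq da$ and leaves the converse direction (and the exclusion of $\ceil*{\frac{x}{a}}+1$) implicit. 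Your explicit handling of the nonemptiness condition $\xi \leq x$ is also a worthwhile addition; note only that in Claim \ref{within_hh_sample_size} the lemma is applied with $x = k^* = 0$ for the first household, where $A(x)$ is empty and the formula is read with the convention $\max\emptyset = 0$, so you would want to state that convention rather than exclude the case.
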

\begin{proof}[Proof of Lemma \ref{mylemma}]
If $\frac{x}{a} \in \BZ$, let $j = \frac{x}{a}$, and we see that $\xi + \left ( \frac{x}{a} - 1 \right )a = \xi + x - a \leq x$ because $\xi < a$, so $\frac{x}{a} \in A(x)$.  Increasing $j$ by 1 increases the lefthand side of the inequality by $a$, and since $\xi > 0$, we see this $\frac{x}{a} + 1 \not\in A(x)$.  Therefore, $\frac{x}{a}$ is the maximum.

If $\frac{x}{a} \not\in \BZ$, let $d \equiv \frac{x}{a} - \floor*{\frac{x}{a}}$, the ``decimal part."  We see that 
$\xi + \left ( \floor*{\frac{x}{a}} - 1 \right )a \underset{+ da}{<} \xi + \left (\frac{x}{a} - 1 \right )a = \xi + x - a \underset{+a-\xi}{\leq} x$, so $\floor*{\frac{x}{a}} \in A(x)$.  Increasing $j$ by 1 (to $\ceil*{\frac{x}{a}}$) increases the leftmost side of the inequality by $a$.  If $a \leq da + (a - \xi)$, i.e. if $\xi \leq da$, then $\ceil*{\frac{x}{a}} \in A(x)$.
\end{proof}

\begin{proof}[Proof of Claim \ref{within_hh_sample_size}]

Consider household $h$, of size $N_h$.  Let $k^*$ be the last person before household $h$ in the ordering used by the systematic sampling.  Then $k^* + 1,...,k^* + N_h$ are the indices for all members of household $h$. In order to get the number sampled in household $h$, we consider the maximum of set $A(k^* + N_h)$ (the number of people sampled up through household $h$) and subtract from it the maximum of set $A(k^*)$ (the number of people sampled before household $h$).  
This gives, by Lemma \ref{mylemma}, $\frac{k^* + N_h}{a} - \frac{k^*}{a} = \frac{N_h}{a}$.
\end{proof}

\bigskip

\begin{claim}
The sampling scheme in definition \ref{systematic_def} is self-weighted.
\end{claim}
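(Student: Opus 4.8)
The claim is that the design is self-weighted, meaning every person in the population is included with the same probability. The plan is to fix an arbitrary person $k \in \{1,\dots,N^*\}$ and show that the inclusion probability $\pi_k$ equals $n_{target}/N^*$, independent of $k$. Since $\xi \sim U(0,a)$, this probability is the Lebesgue measure of the set of $\xi \in (0,a)$ for which person $k$ is sampled, divided by $a$; so the entire problem reduces to computing that measure and showing it equals $1$ for every $k$.

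First I would record that the rule selects exactly the people hit by the arithmetic progression $\xi, \xi + a, \dots, \xi + (n_{target}-1)a$, and that person $k$ is sampled precisely when one of these points falls in the half-open interval $(k-1,k]$ (using $k-1 < x \le k \Leftrightarrow \ceil*{x} = k$, as in the proof of Claim \ref{sample_size}). Because $a = N^*/n_{target} > 1$ while $(k-1,k]$ has length $1$, two consecutive progression points are more than a unit apart, so \emph{at most one} point can land in $(k-1,k]$. Thus $\pi_k$ is exactly the probability that some admissible $j$ yields $\xi + (j-1)a \in (k-1,k]$.

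The natural way to evaluate this is to reduce modulo $a$. Writing $m = j-1$, the condition $\xi + ma \in (k-1,k]$ says that $\xi$ lies in one of the unit intervals $(k-1-ma,\ k-ma]$, which form a periodic family with period $a$. The portion of $(0,a)$ covered by the union of these intervals has measure exactly $1$, because $a \ge 1$ guarantees that a length-$1$ interval reduces modulo $a$ without self-overlap (it may wrap into two pieces, but their lengths still sum to $1$). Dividing by $a$ then gives $\pi_k = 1/a = n_{target}/N^*$ for every $k$, which is exactly the self-weighting property.

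The one genuine obstacle — and the step I would treat most carefully — is the finite range restriction $j \in \{1,\dots,n_{target}\}$, since the periodicity argument naively permits all integers $m$. Here I would invoke Claim \ref{sample_size}: for $\xi \in (0,a)$ the progression points lying in $(0,N^*]$ are exactly those with $j \in \{1,\dots,n_{target}\}$ (the point for $j=0$ is $\xi - a < 0$, and the point for $j = n_{target}+1$ is $\xi + N^* > N^*$). Consequently, for any target person $k \in \{1,\dots,N^*\}$, a progression point landing in $(k-1,k] \subseteq (0,N^*]$ is automatically produced by an admissible $j$, so the restriction discards nothing and the measure computation above is valid as stated. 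This yields $\pi_k = n_{target}/N^*$ uniformly in $k$, establishing that the scheme is self-weighted.
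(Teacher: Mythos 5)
Your proof is correct and follows essentially the same route as the paper: both compute $P(\textrm{person }k\textrm{ is sampled})$ as the measure of the set of $\xi \in (0,a)$ falling in one of the unit intervals $(k-1-(j-1)a,\ k-(j-1)a]$, which form a period-$a$ family covering measure $1$ per period, giving $\pi_k = 1/a = n_{target}/N^*$. Your explicit check that the restriction $j \in \{1,\dots,n_{target}\}$ discards no progression point that could land in $(k-1,k] \subseteq (0,N^*]$ is a welcome tightening of a step the paper justifies only by a picture.
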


\begin{proof}
\begin{align*}
P(\textrm{person }k\textrm{ is sampled}) &= P \left ( \exists j \in \left \{1,...,n_{target} \right \} \ s.t. \ k - 1 < \xi + (j-1) a \leq k \right ) \\
&= P \left (\cup_{j=1}^{n_{target}} \big \{ k - 1 - (j-1) a < \xi \leq k - (j-1) a \big \} \right ) \\
&\textrm{each interval is length 1} = [k - (j-1)a] - [k - 1 - (j-1)a]  \\
&\textrm{space between intervals j and j+1 is }a - 1 = [k - 1 - j*a ] - [k - (j-1) a] \\
&\textrm{so by the picture below, we see that }(0,a)\textrm{ has overlaps with the intervals } \\
&\textrm{of length totaling 1.  So since }\xi \sim U(0,a), \\
&= \frac{1}{a}
\end{align*}
See below for a visual, in \textcolor{orange}{orange} is the interval $(0,a)$, which can overlap at most 2 intervals of length 1 (shown as over-braces, with overlaps totaling a length 1:
$$\underbrace{\overbrace{-----}^{j = 1, \textrm{ length}=1}--}_{a}\underbrace{\overbrace{-----}^{j = 2, \textrm{ length}=1}--}_{a}\underbrace{\overbrace{---\textcolor{orange}{ \bf --}}^{j = 3, \textrm{ length}=1}\textcolor{orange}{ \bf--}}_{a}\underbrace{\overbrace{\textcolor{orange}{ \bf---}--}^{j = 4, \textrm{ length}=1}--}_{a}...$$
\end{proof}

\clearpage
\newpage

\section{Survey Sampling Simulation Results}\label{survey_sampling_results_appendix}

\subsection{Design-based results}

\begin{figure}[h!]
    \centering
\subcaptionbox{\label{deff_adult_man_1_per_PPS_draw_design_based_systematic}}
 [0.49\textwidth]{\includegraphics[width=0.34\textwidth]{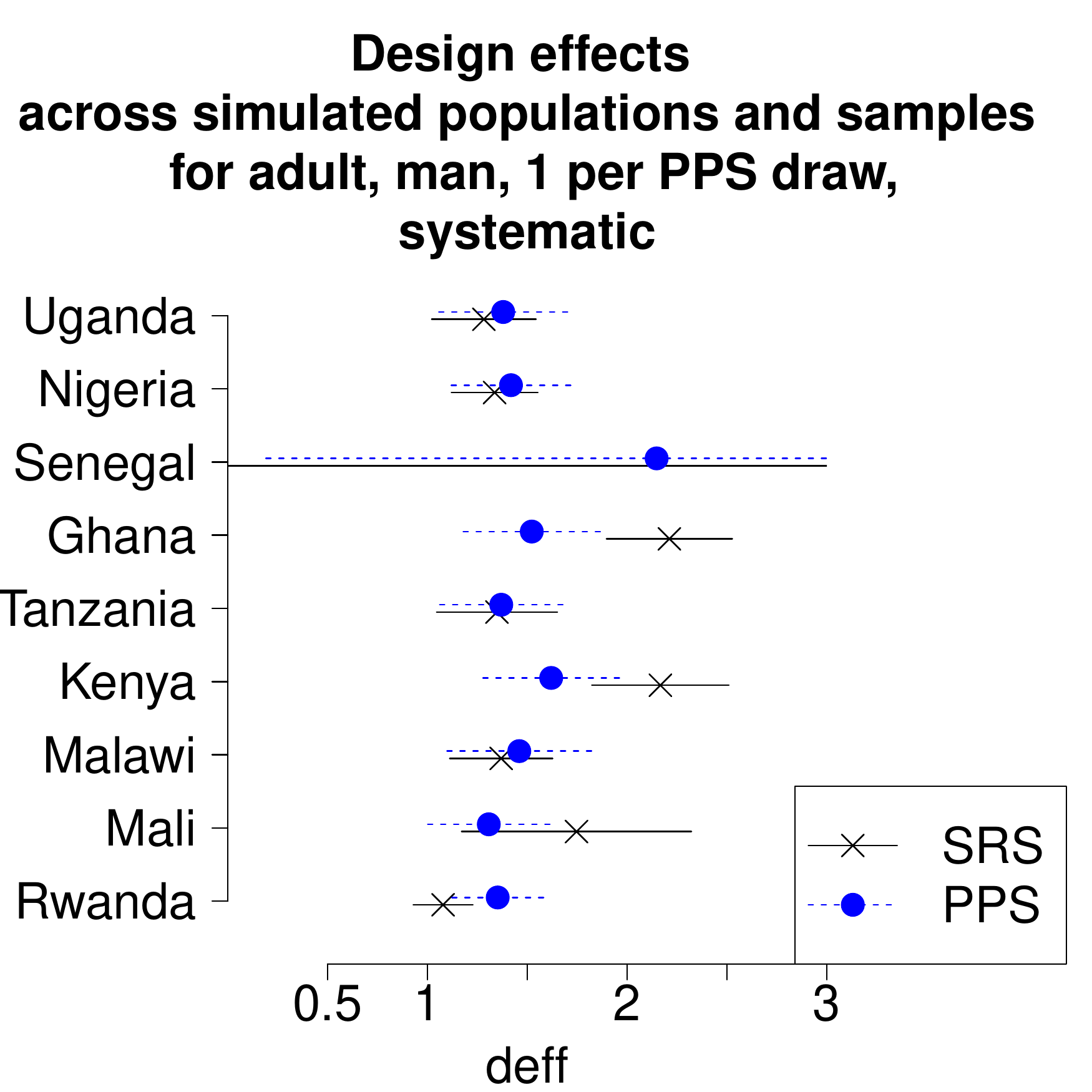}}
\subcaptionbox{\label{deff_adult_woman_1_per_PPS_draw_design_based_systematic}}
 [0.49\textwidth]{\includegraphics[width=0.34\textwidth]{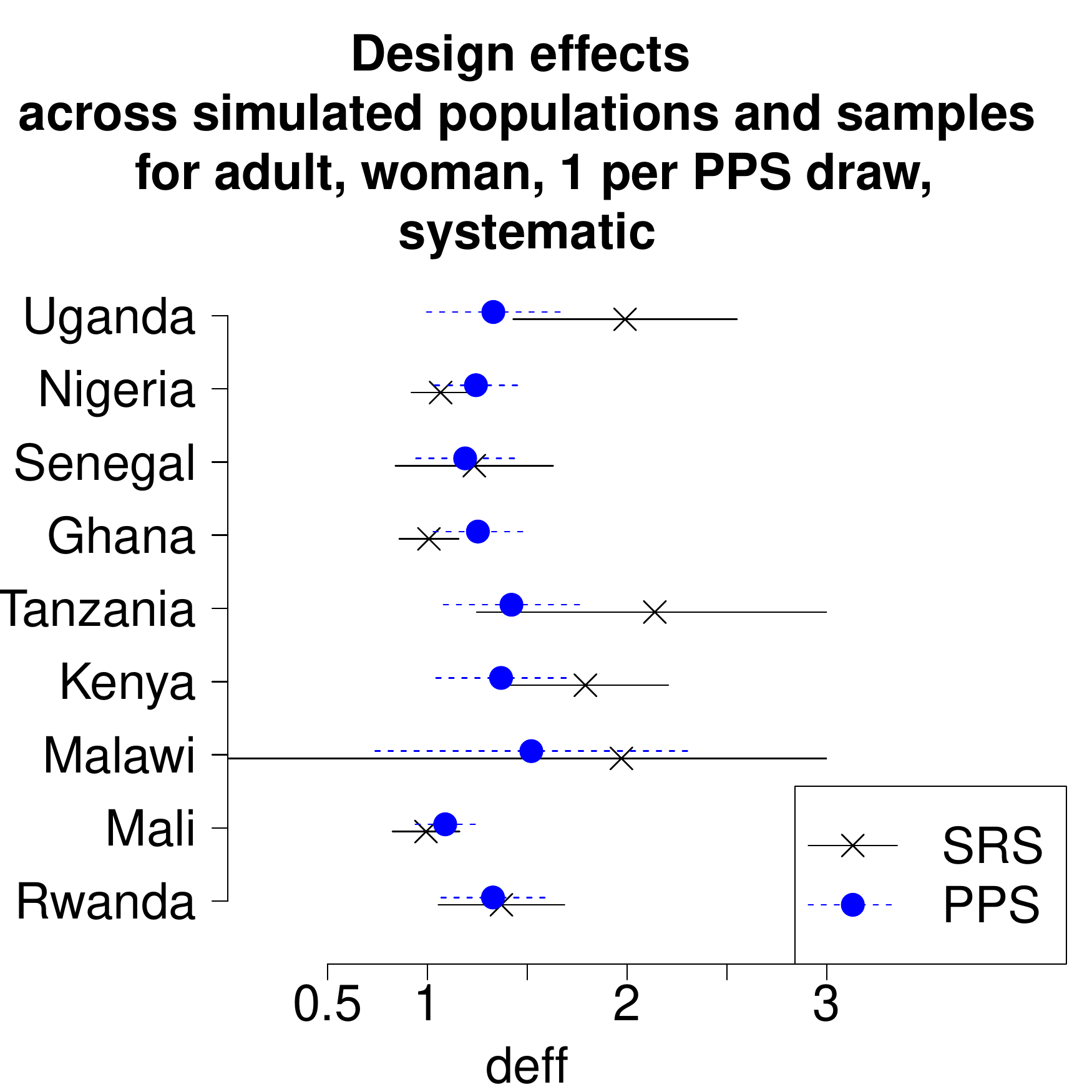}} \\
\subcaptionbox{\label{n_adult_man_1_per_PPS_draw_systematic}}
 [0.49\textwidth]{\includegraphics[width=0.34\textwidth]{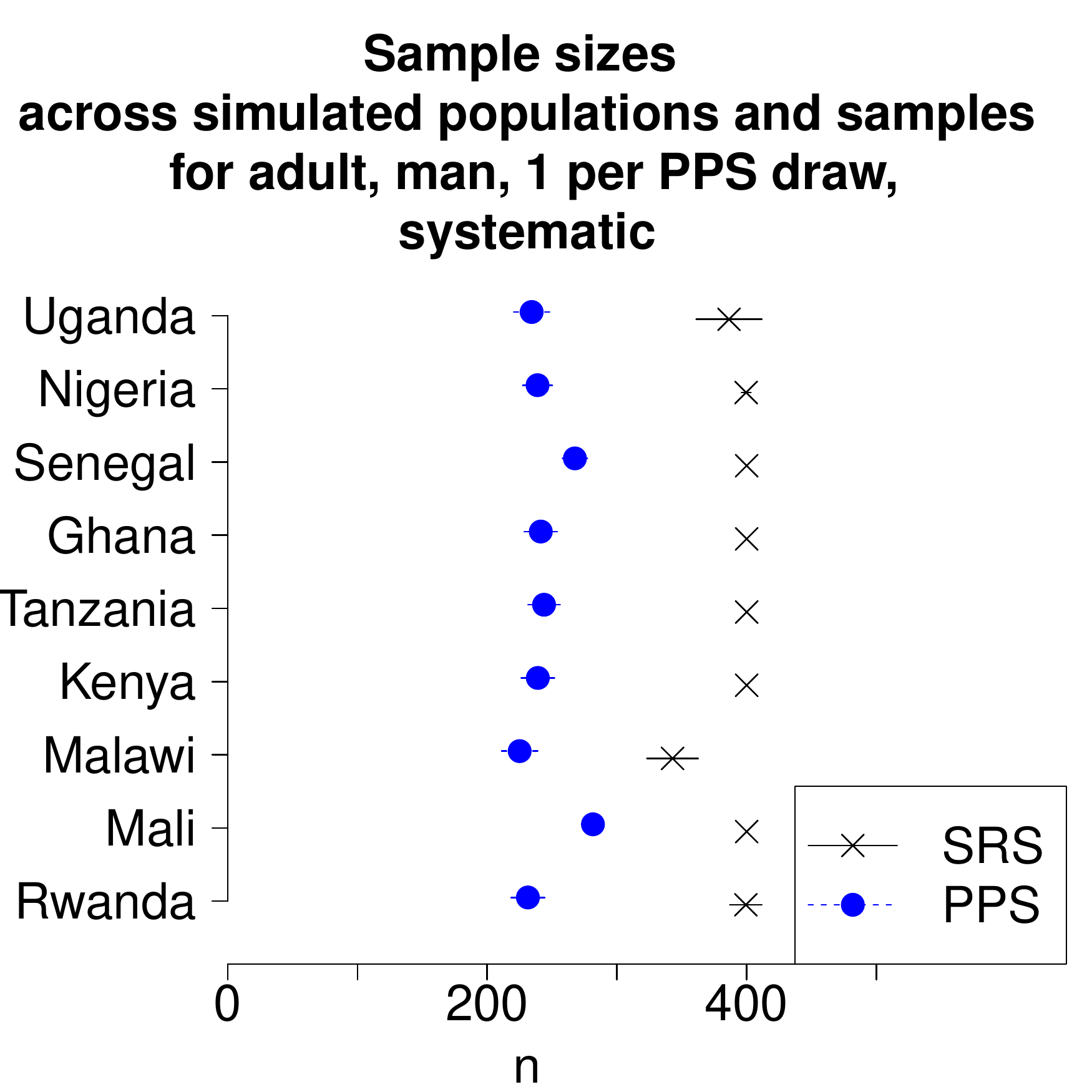}}
\subcaptionbox{\label{n_adult_woman_1_per_PPS_draw_systematic}}
 [0.49\textwidth]{\includegraphics[width=0.34\textwidth]{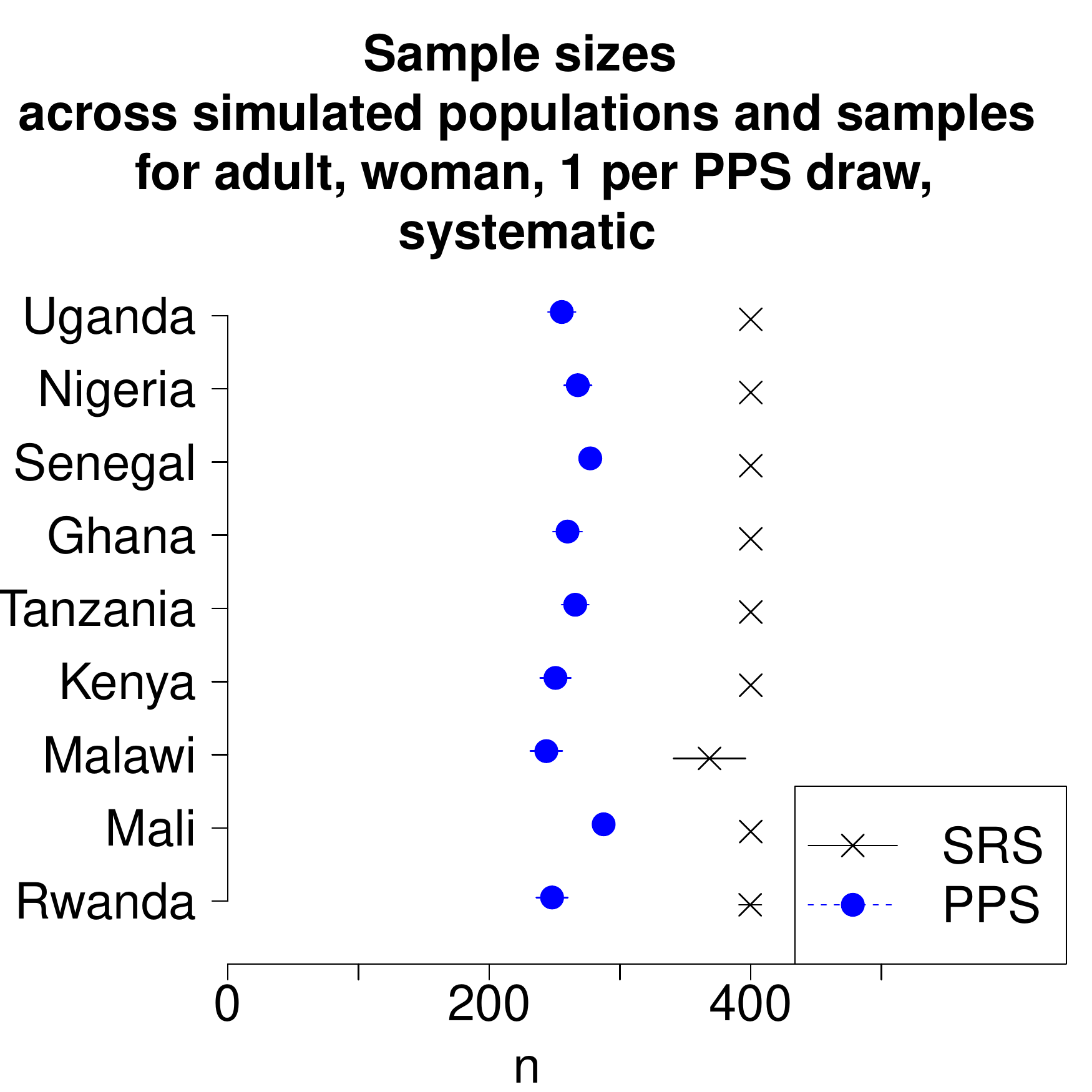}} \\
\subcaptionbox{\label{n_eff_adult_man_1_per_PPS_draw_design_based_systematic}}
 [0.49\textwidth]{\includegraphics[width=0.34\textwidth]{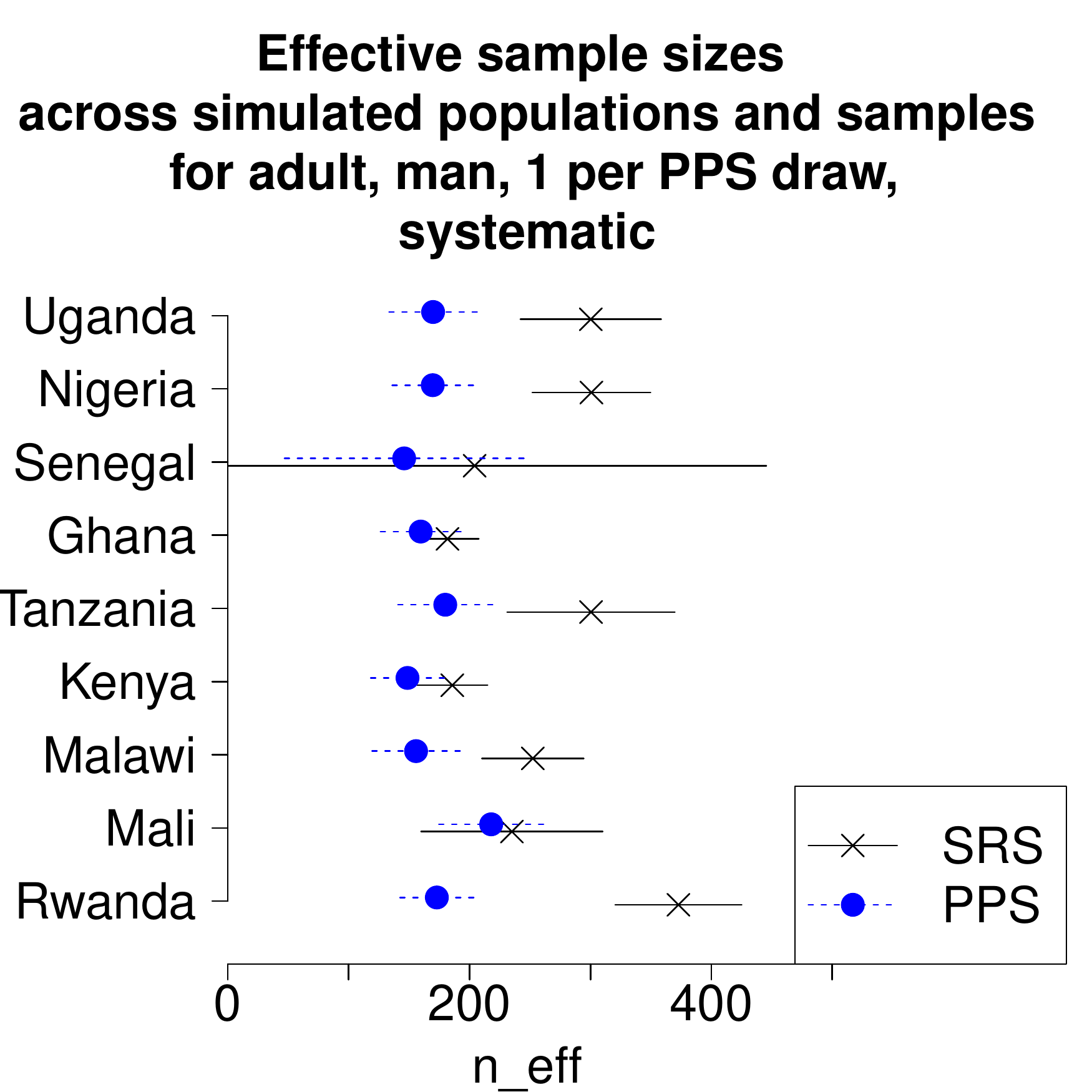}}
\subcaptionbox{\label{n_eff_adult_woman_1_per_PPS_draw_design_based_systematic}}
 [0.49\textwidth]{\includegraphics[width=0.34\textwidth]{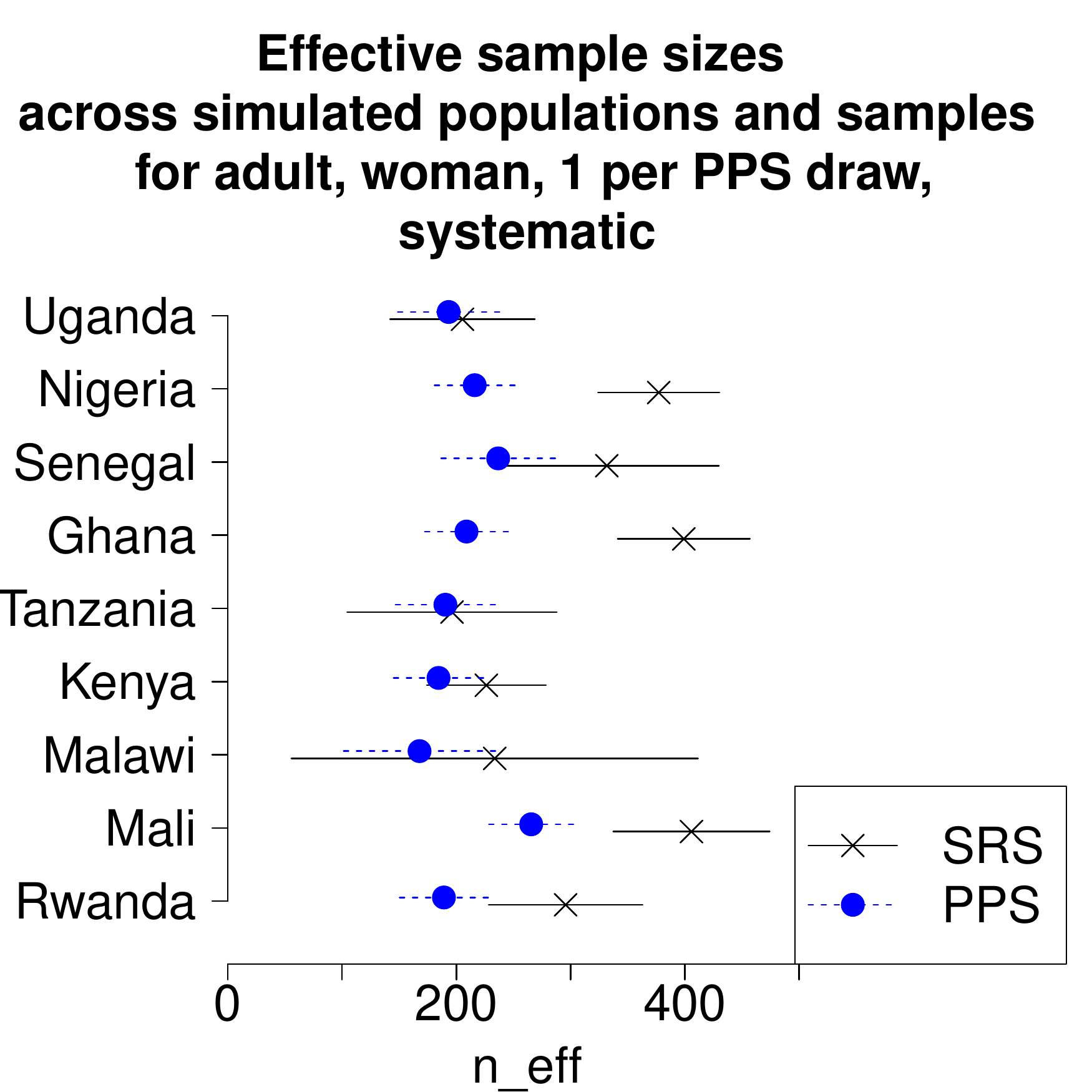}} 
\caption[]{Design-based adult module results}\label{adult_sampling_results_design_based_systematic}
\end{figure}

\begin{figure}[h!]
    \centering
\subcaptionbox{\label{deff_RDT_under_5_1_per_PPS_draw_design_based_systematic}}
 [0.49\textwidth]{\includegraphics[width=0.34\textwidth]{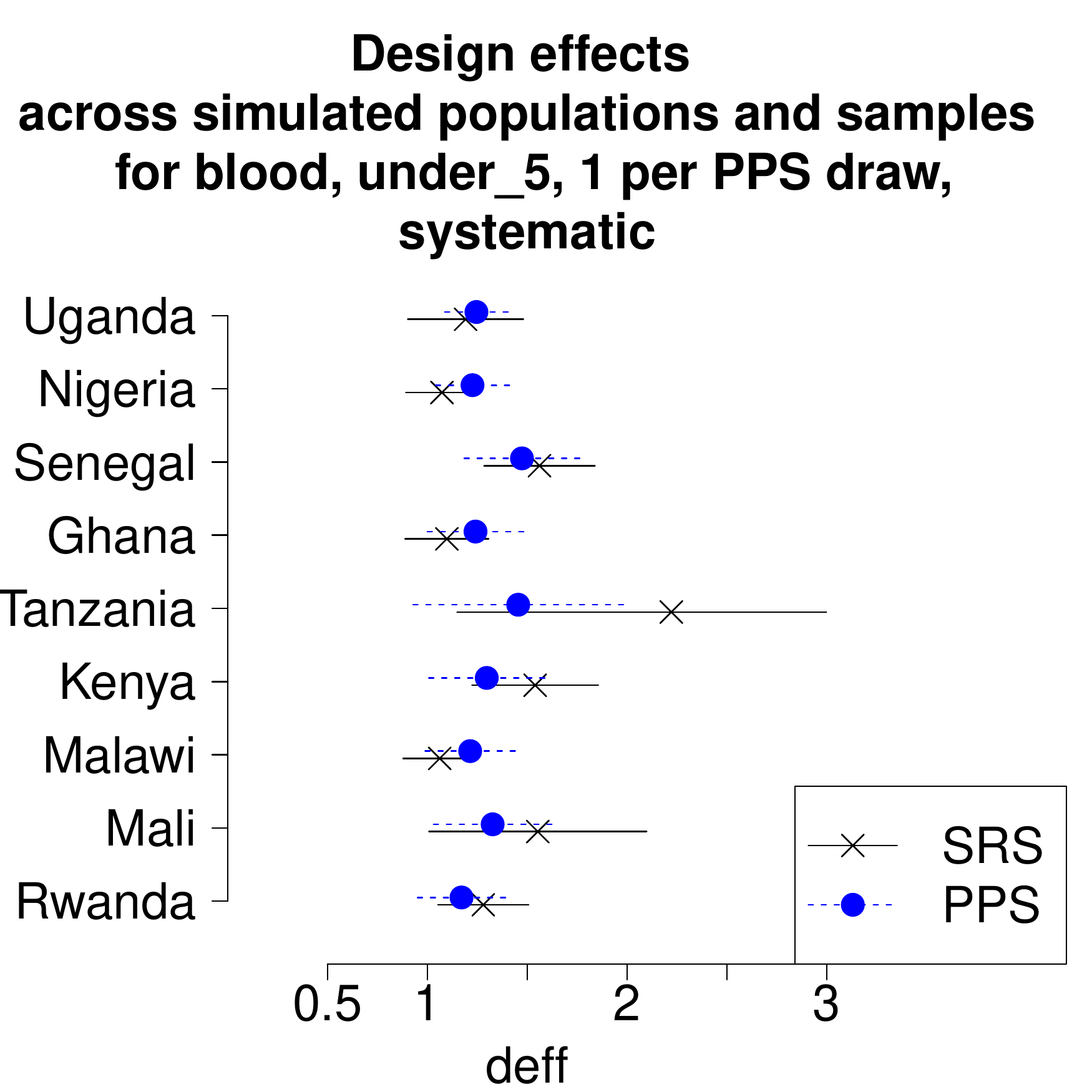}}
\subcaptionbox{\label{deff_RDT_school_age_1_per_PPS_draw_design_based_systematic}}
 [0.49\textwidth]{\includegraphics[width=0.34\textwidth]{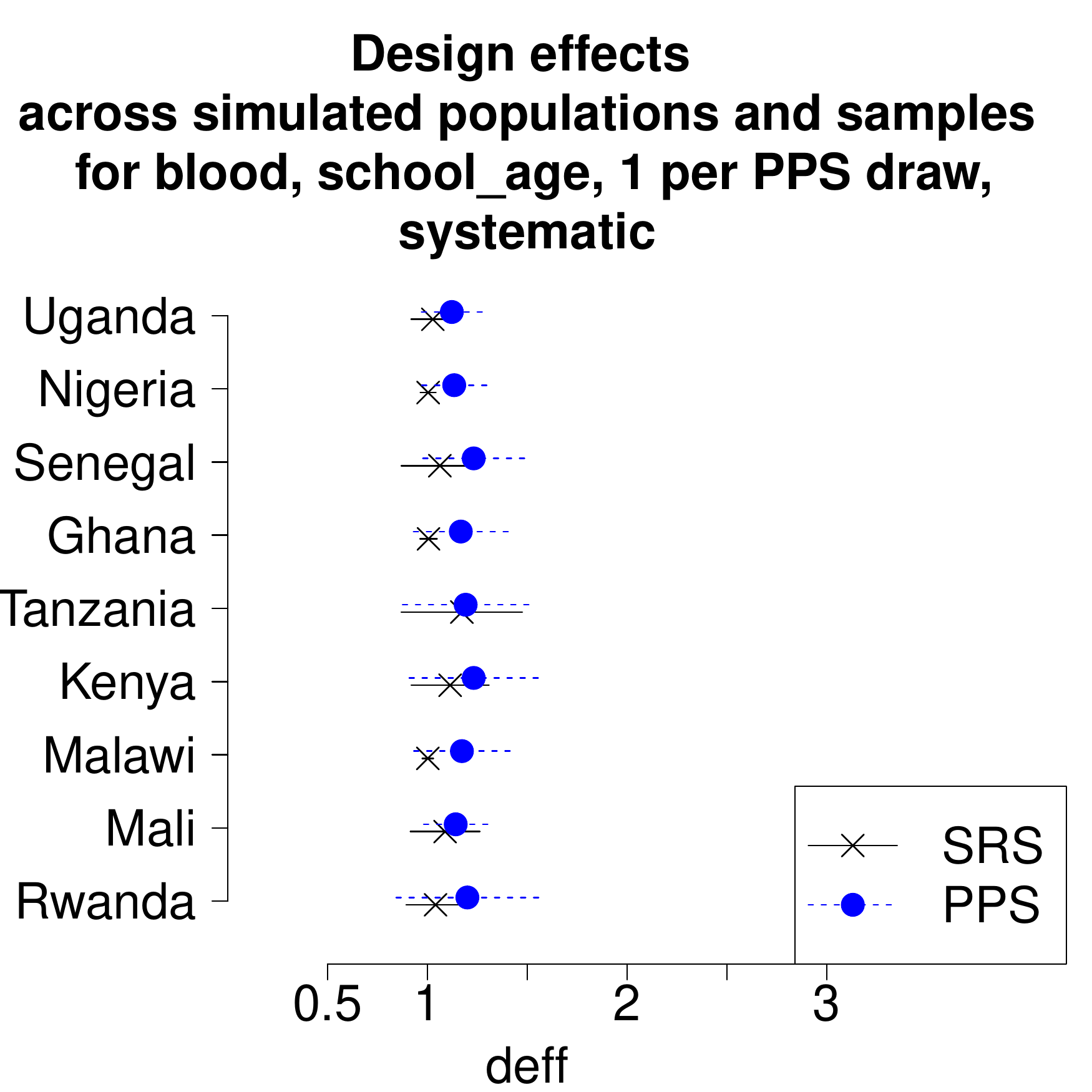}} \\
\subcaptionbox{\label{n_RDT_under_5_1_per_PPS_draw_systematic}}
 [0.49\textwidth]{\includegraphics[width=0.34\textwidth]{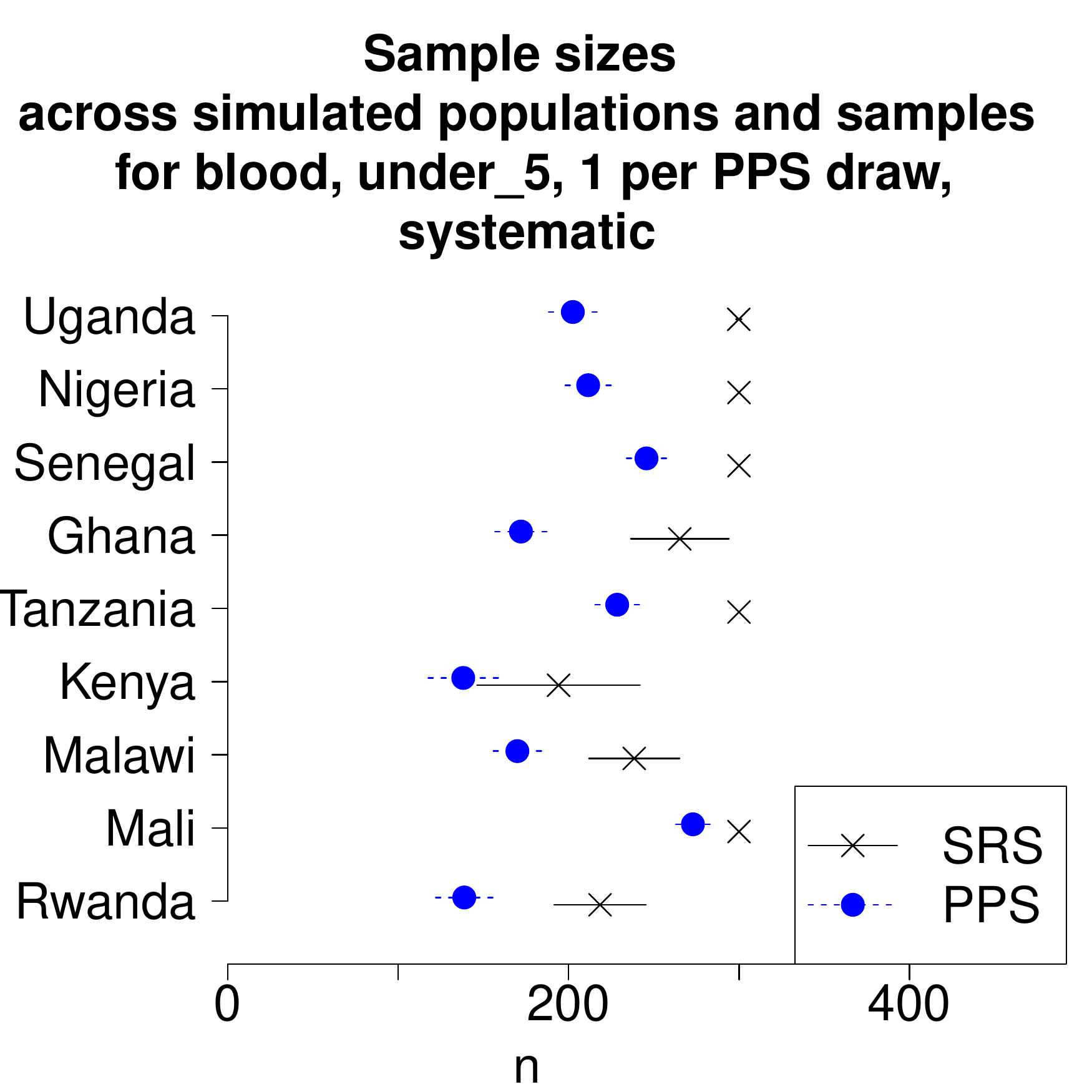}}
\subcaptionbox{\label{n_RDT_school_age_1_per_PPS_draw_systematic}}
 [0.49\textwidth]{\includegraphics[width=0.34\textwidth]{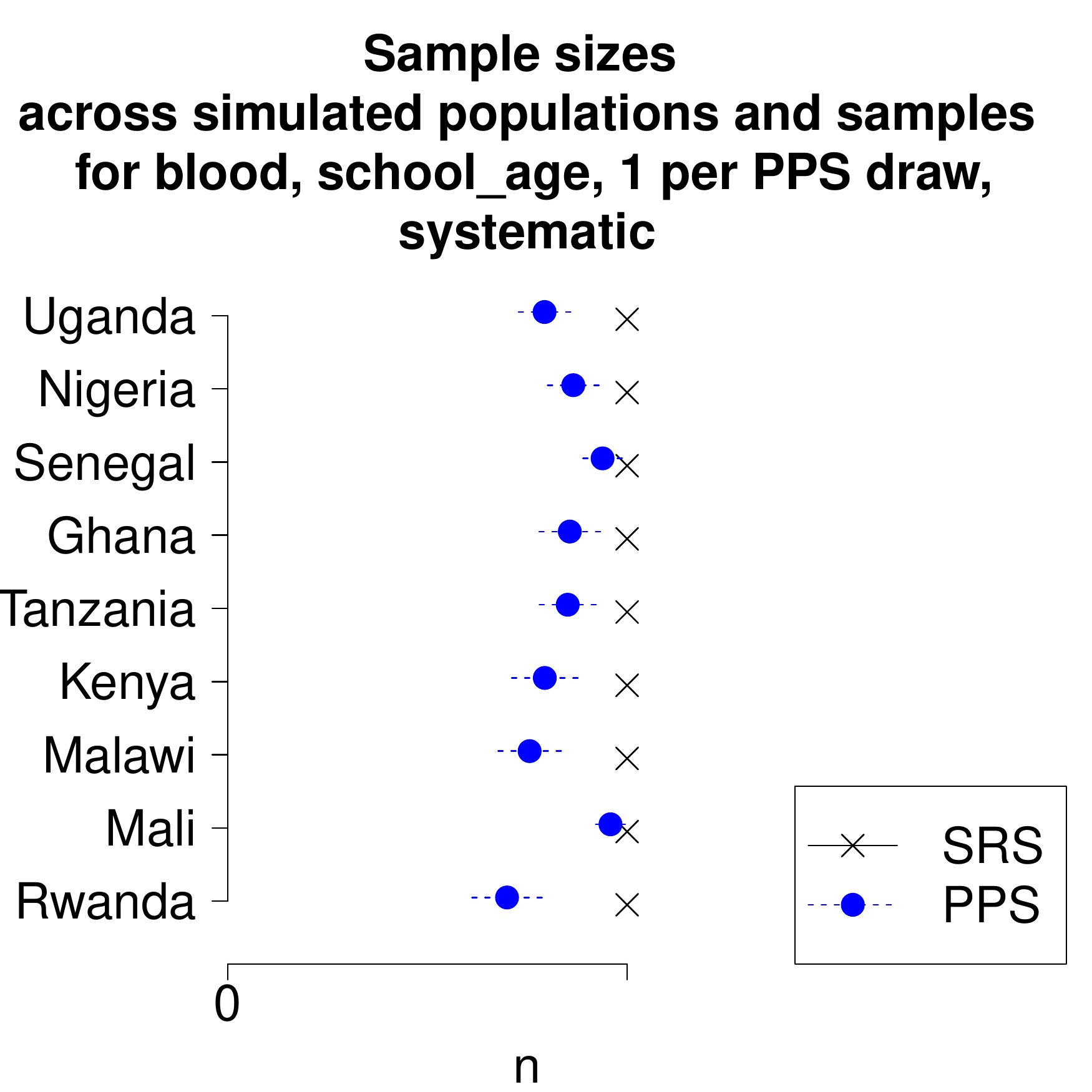}} \\
\subcaptionbox{\label{n_eff_RDT_under_5_1_per_PPS_draw_design_based_systematic}}
 [0.49\textwidth]{\includegraphics[width=0.34\textwidth]{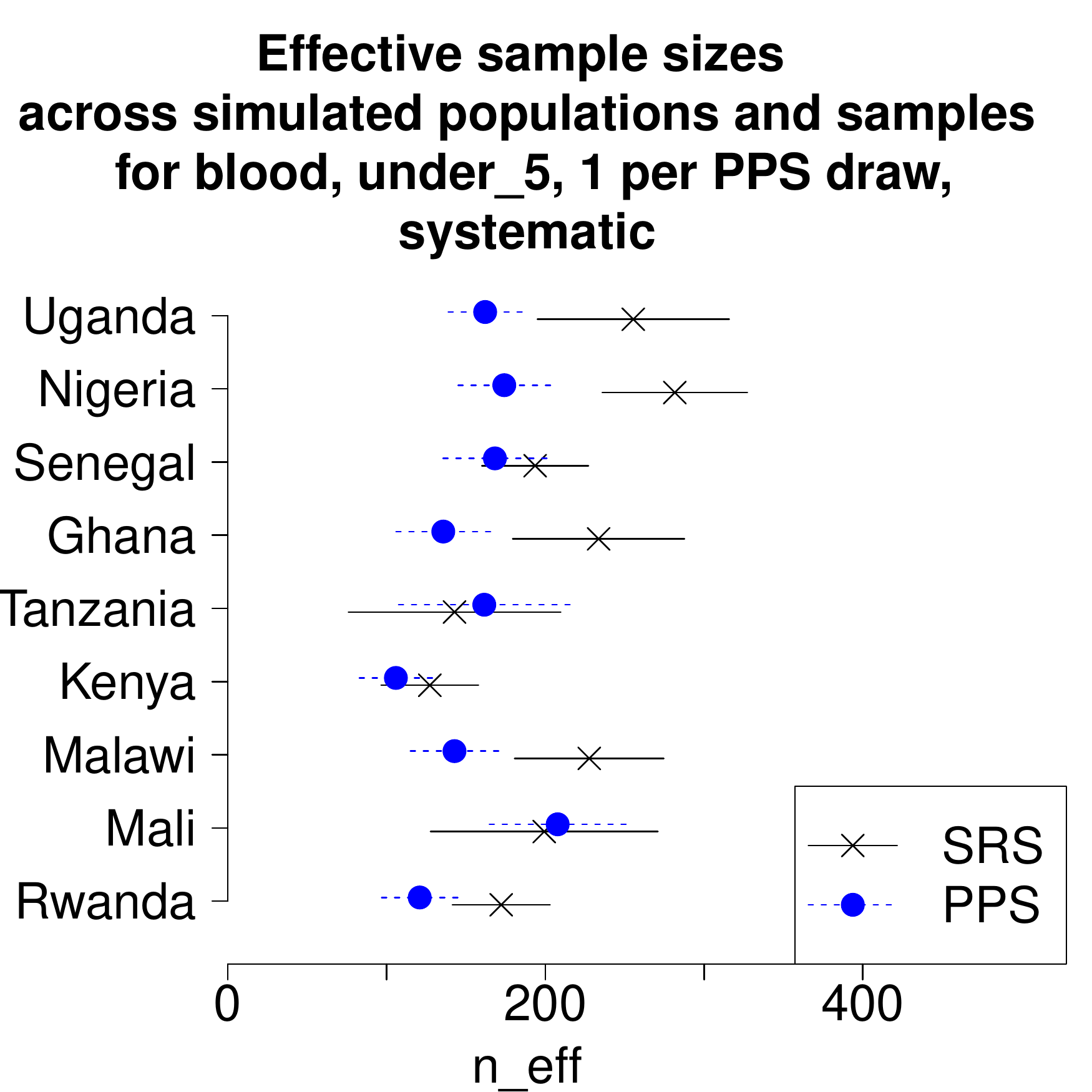}}
\subcaptionbox{\label{n_eff_RDT_school_age_1_per_PPS_draw_design_based_systematic}}
 [0.49\textwidth]{\includegraphics[width=0.34\textwidth]{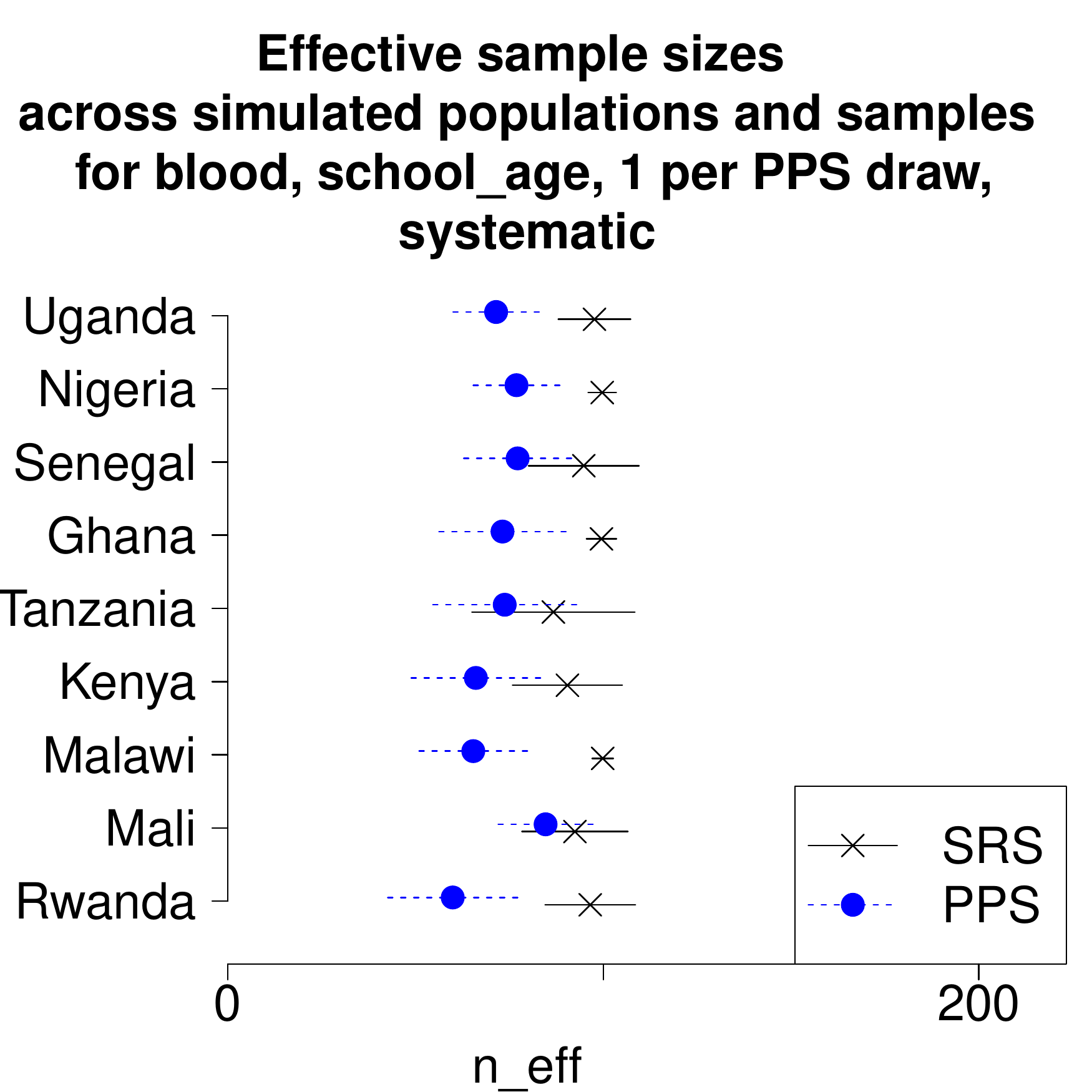}} 
\caption[]{Design-based blood (malaria and anemia) module results: under 5 and school-age children}\label{RDT_children_sampling_results_design_based_systematic}
\end{figure}

\begin{figure}[h!]
    \centering
\subcaptionbox{\label{deff_RDT_man_1_per_PPS_draw_design_based_systematic}}
 [0.49\textwidth]{\includegraphics[width=0.34\textwidth]{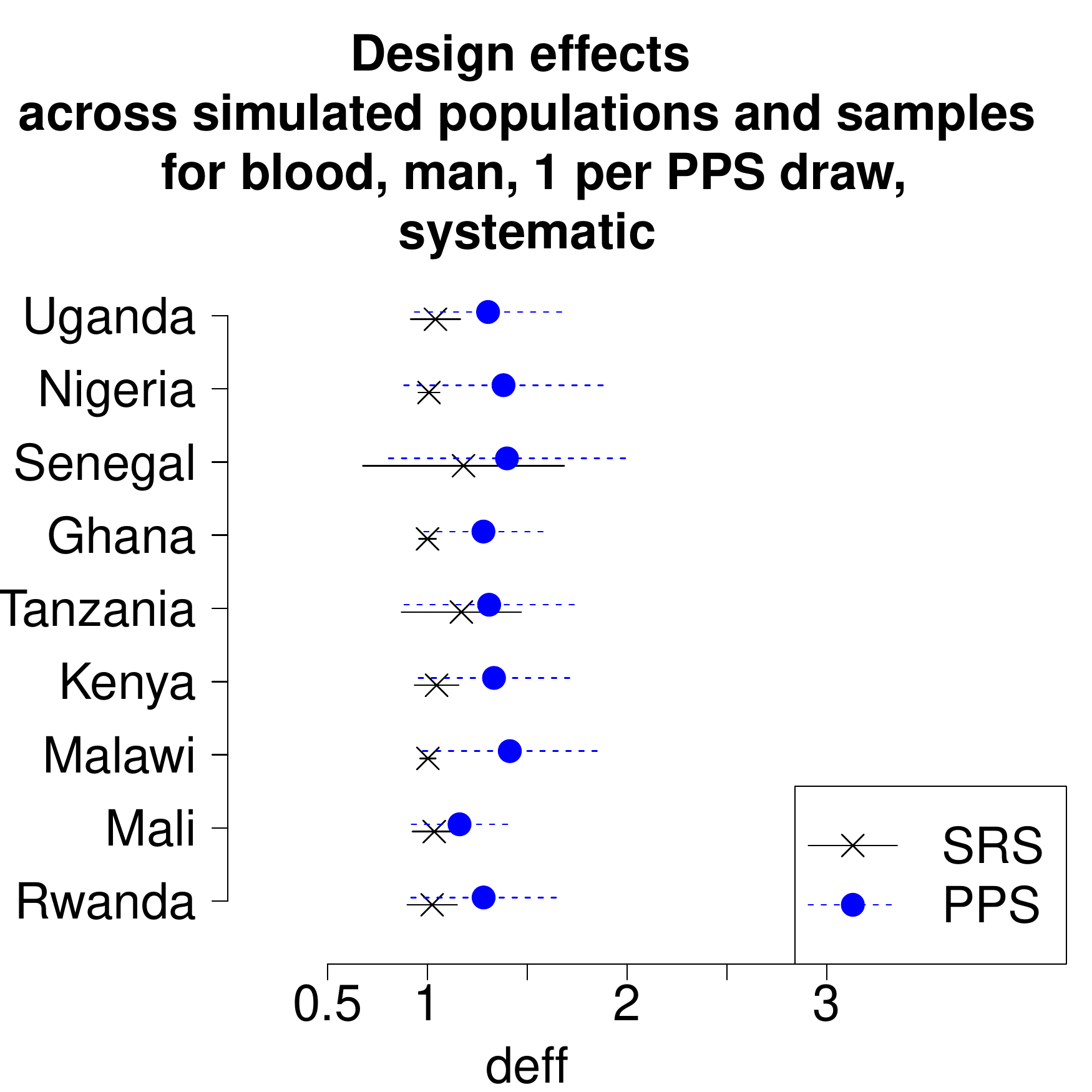}}
\subcaptionbox{\label{deff_RDT_woman_1_per_PPS_draw_design_based_systematic}}
 [0.49\textwidth]{\includegraphics[width=0.34\textwidth]{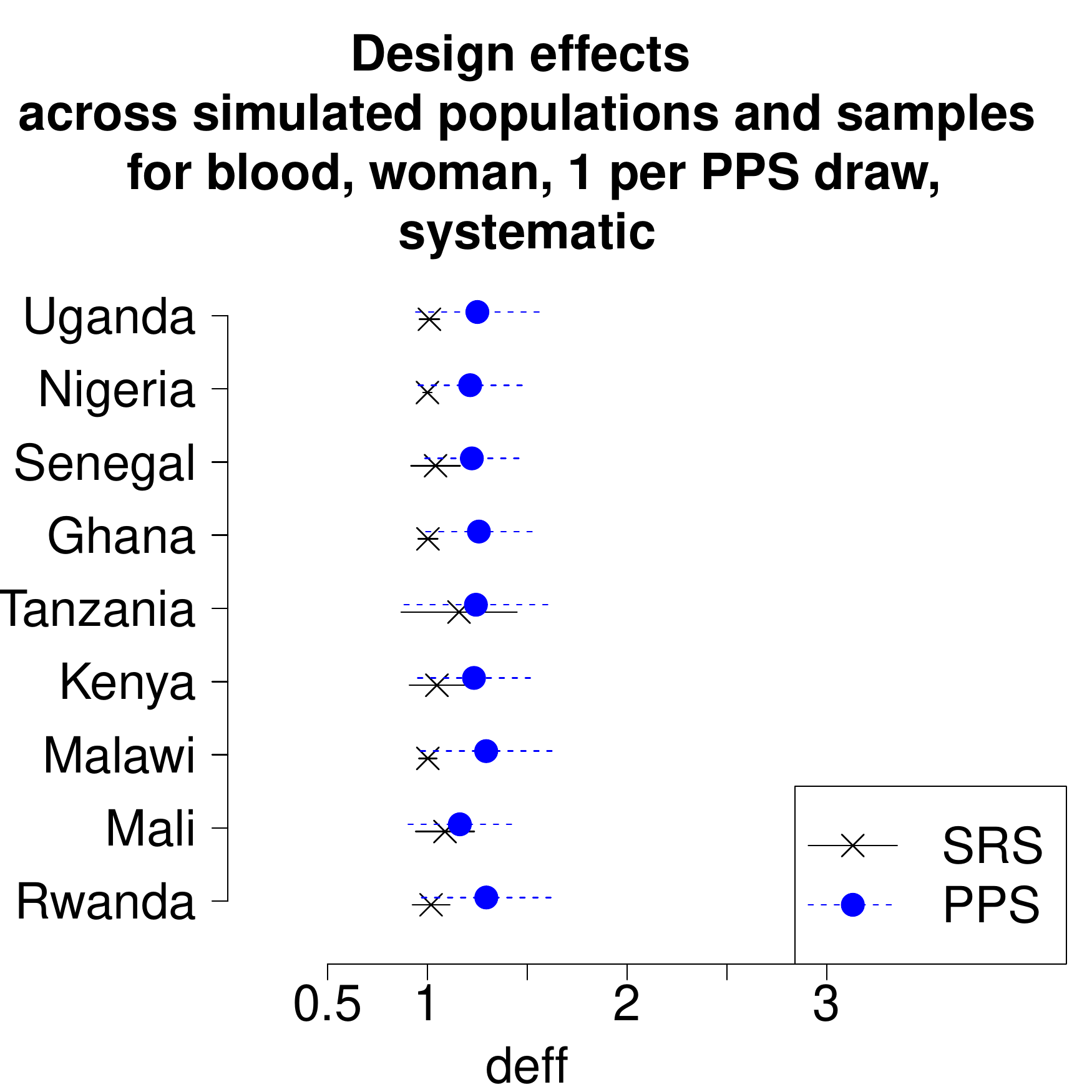}} \\
\subcaptionbox{\label{n_RDT_man_1_per_PPS_draw_systematic}}
 [0.49\textwidth]{\includegraphics[width=0.34\textwidth]{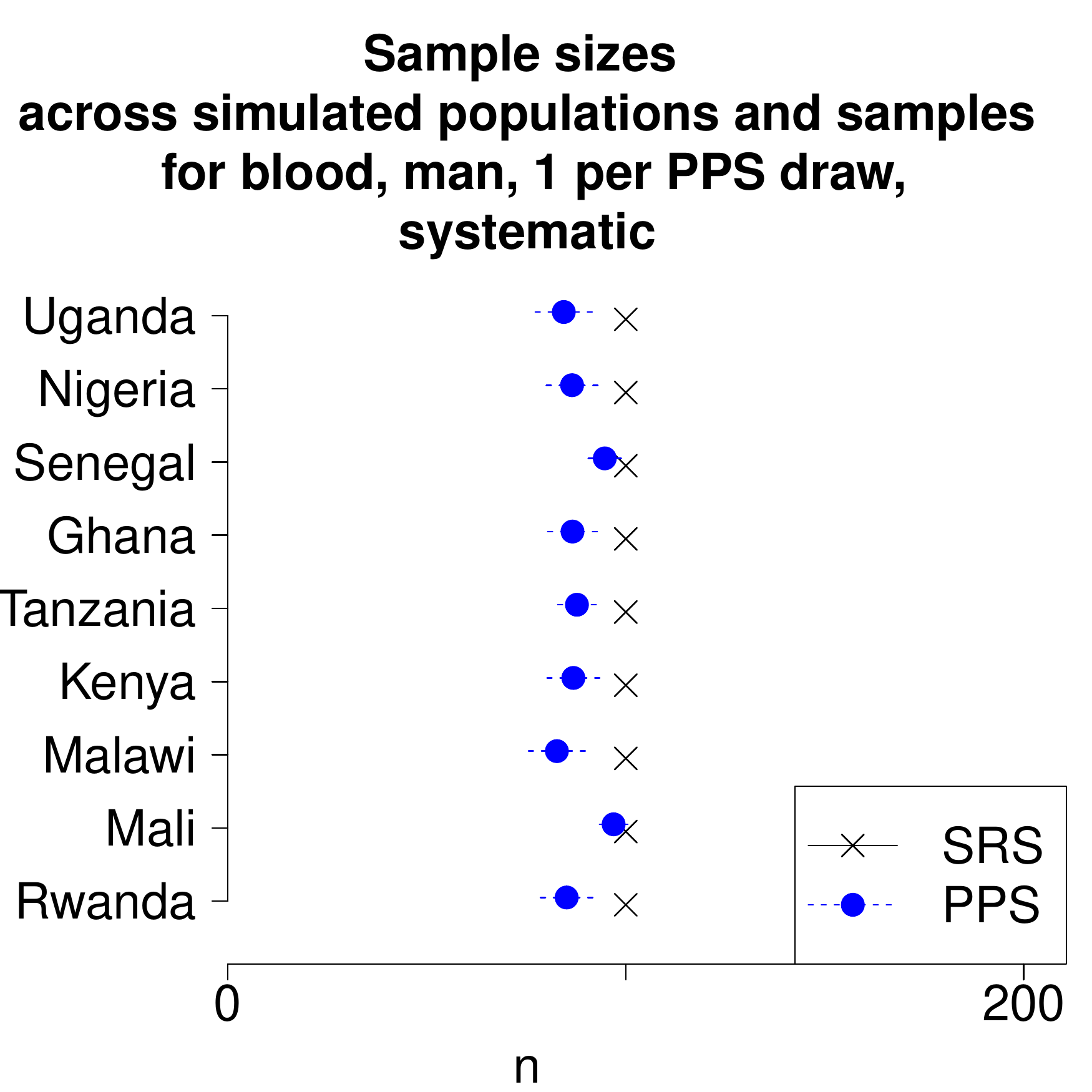}}
\subcaptionbox{\label{n_RDT_woman_1_per_PPS_draw_systematic}}
 [0.49\textwidth]{\includegraphics[width=0.34\textwidth]{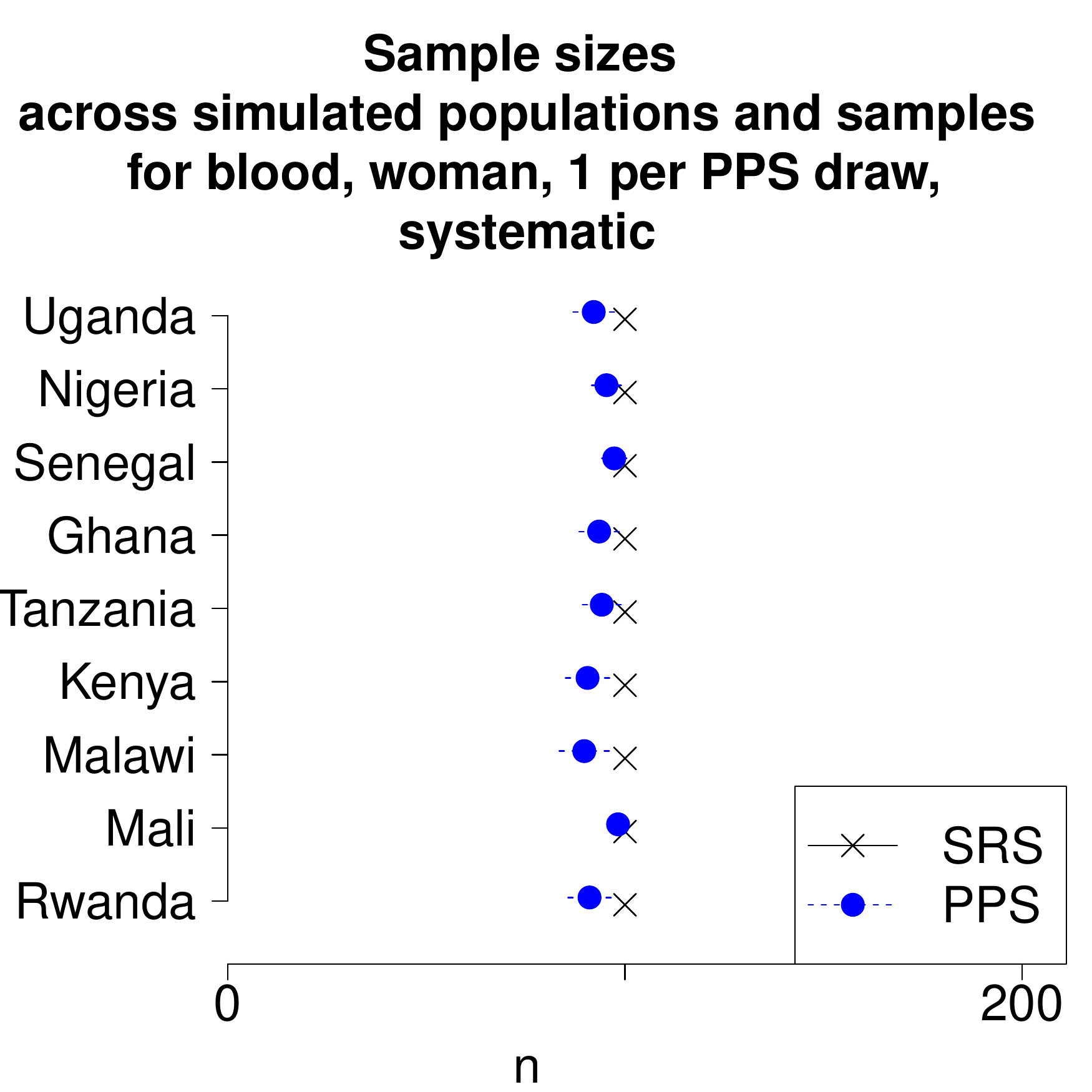}} \\
\subcaptionbox{\label{n_eff_RDT_man_1_per_PPS_draw_design_based_systematic}}
 [0.49\textwidth]{\includegraphics[width=0.34\textwidth]{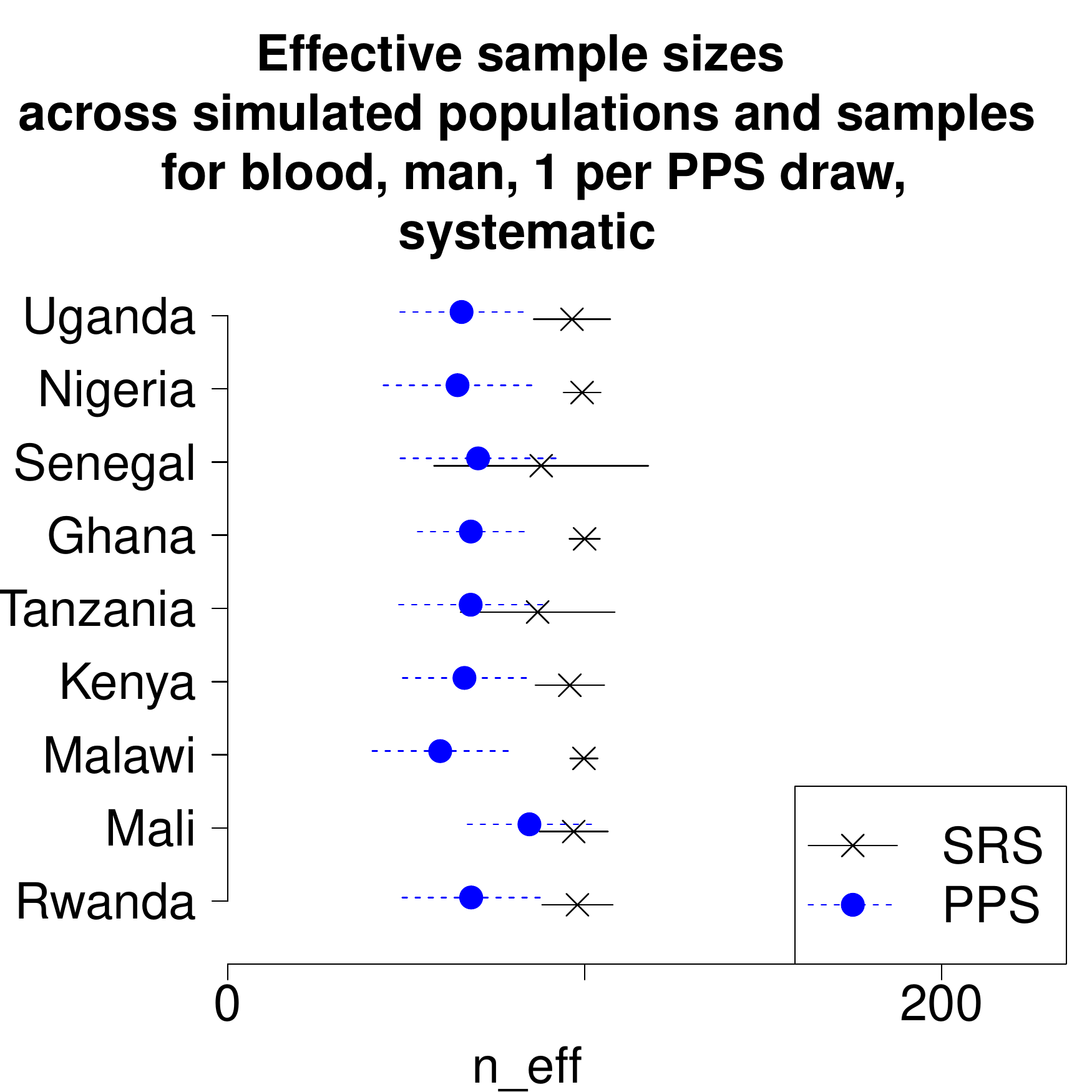}}
\subcaptionbox{\label{n_eff_RDT_woman_1_per_PPS_draw_design_based_systematic}}
 [0.49\textwidth]{\includegraphics[width=0.34\textwidth]{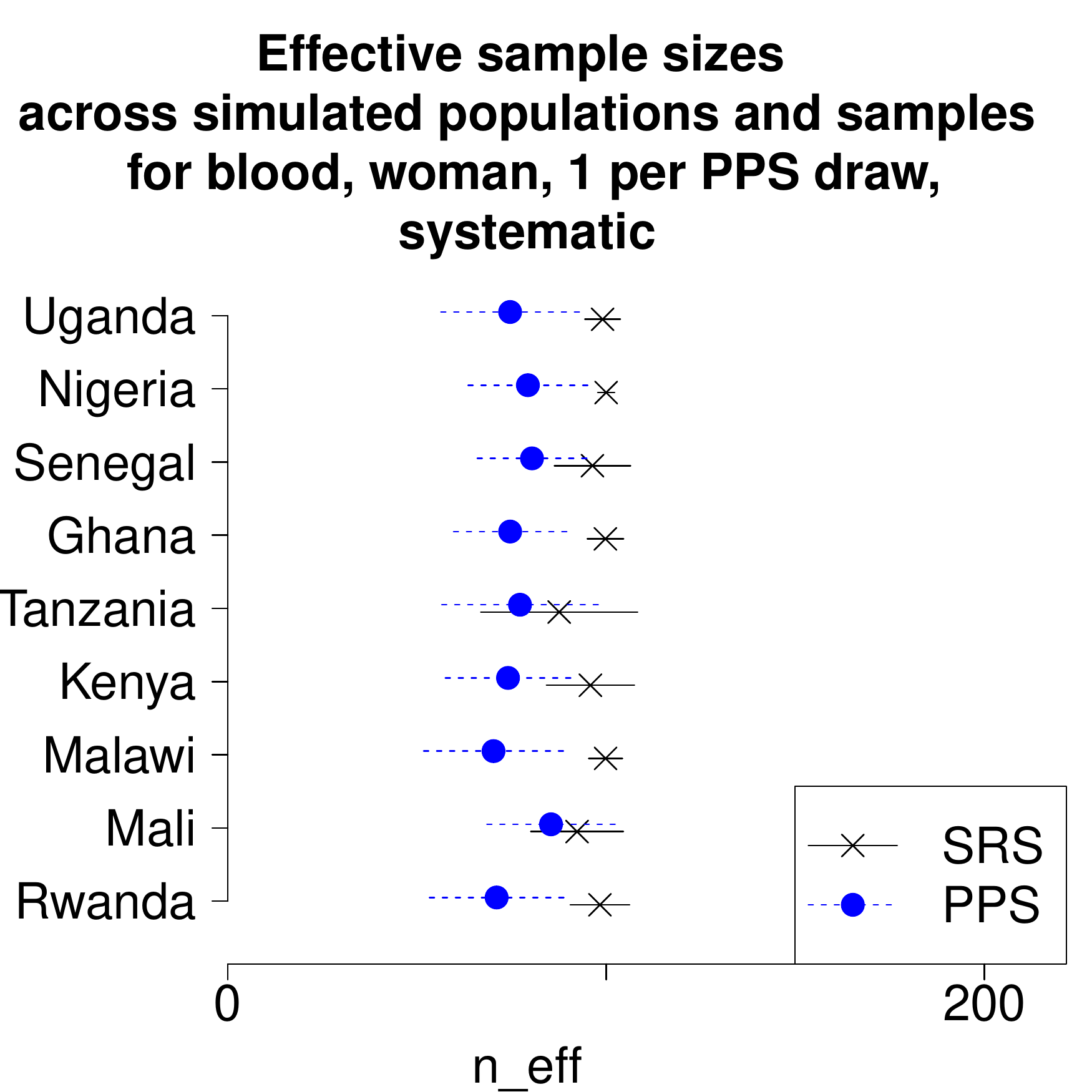}} 
\caption[]{Design-based blood (malaria and anemia) module results: men and women}\label{RDT_man_woman_sampling_results_design_based_systematic}
\end{figure}

\begin{figure}[h!]
    \centering
\subcaptionbox{\label{deff_anthro_under_5_1_per_PPS_draw_design_based_systematic}}
 [0.49\textwidth]{\includegraphics[width=0.34\textwidth]{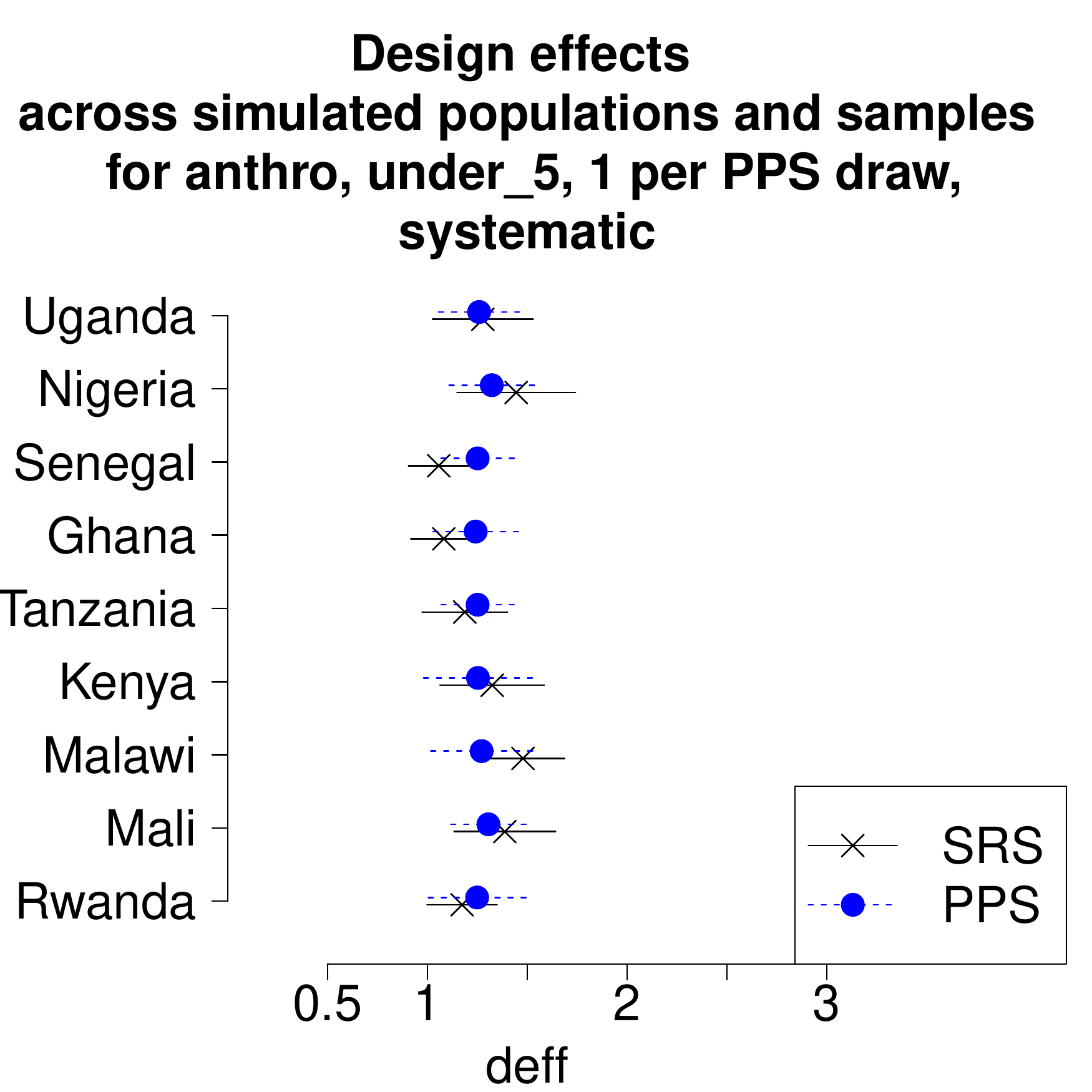}} \\
\subcaptionbox{\label{n_anthro_under_5_1_per_PPS_draw_systematic}}
 [0.49\textwidth]{\includegraphics[width=0.34\textwidth]{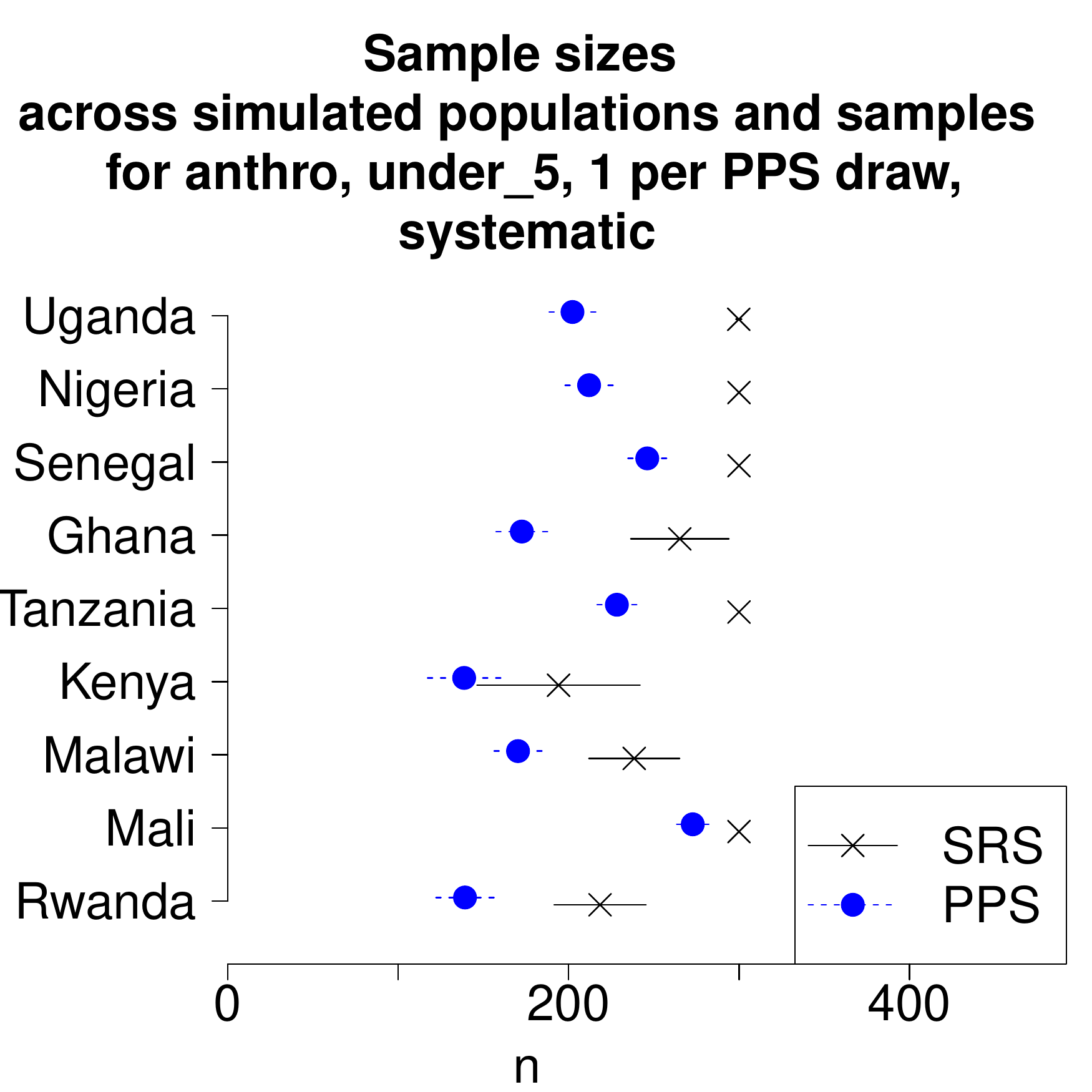}} \\
\subcaptionbox{\label{n_eff_anthro_under_5_1_per_PPS_draw_design_based_systematic}}
 [0.49\textwidth]{\includegraphics[width=0.34\textwidth]{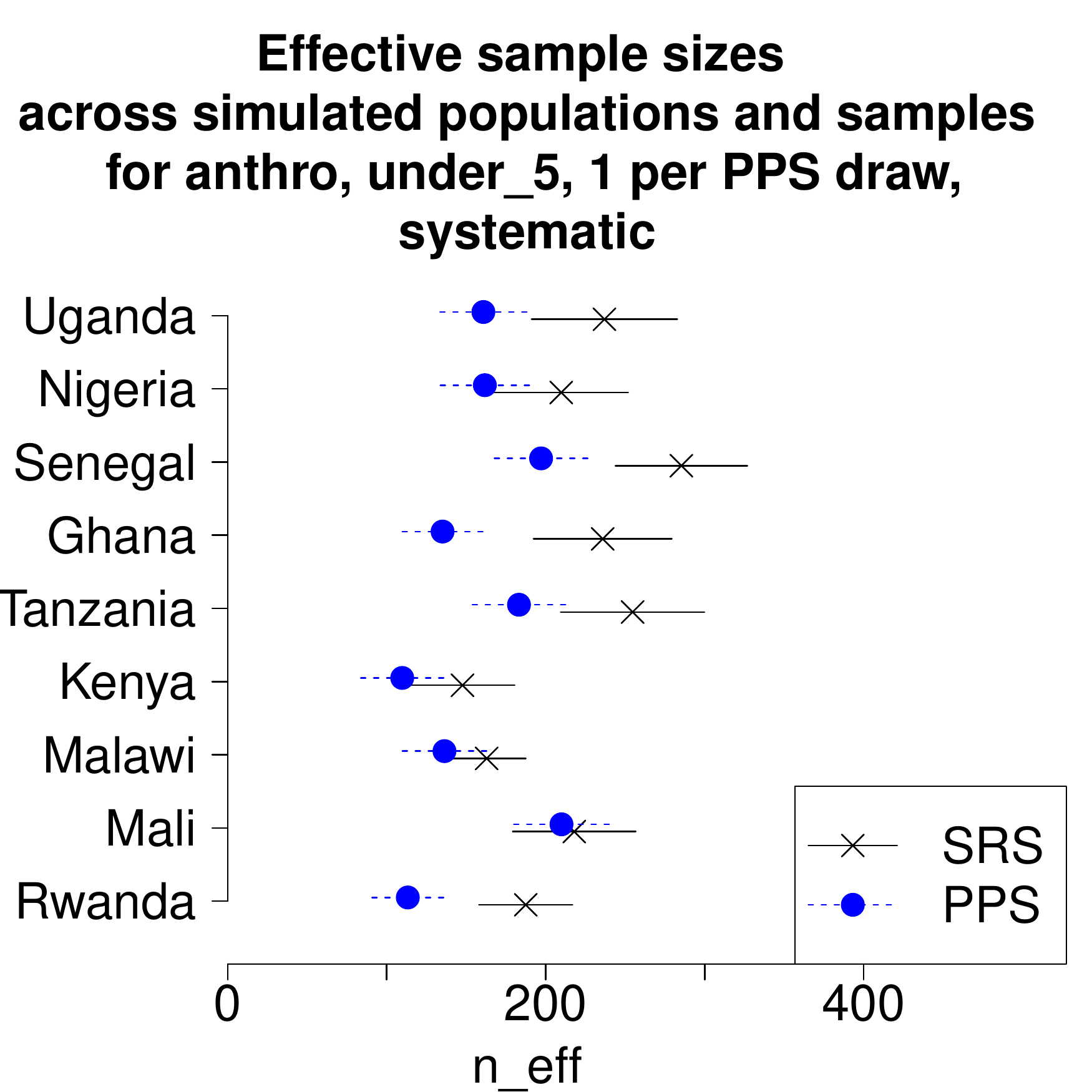}}
\caption[]{Design-based anthro module results}\label{anthro_sampling_results_design_based_systematic}
\end{figure}

\begin{figure}[h!]
    \centering
\subcaptionbox{\label{deff_adult_man_1_per_PPS_draw_design_based_stratified}}
 [0.49\textwidth]{\includegraphics[width=0.34\textwidth]{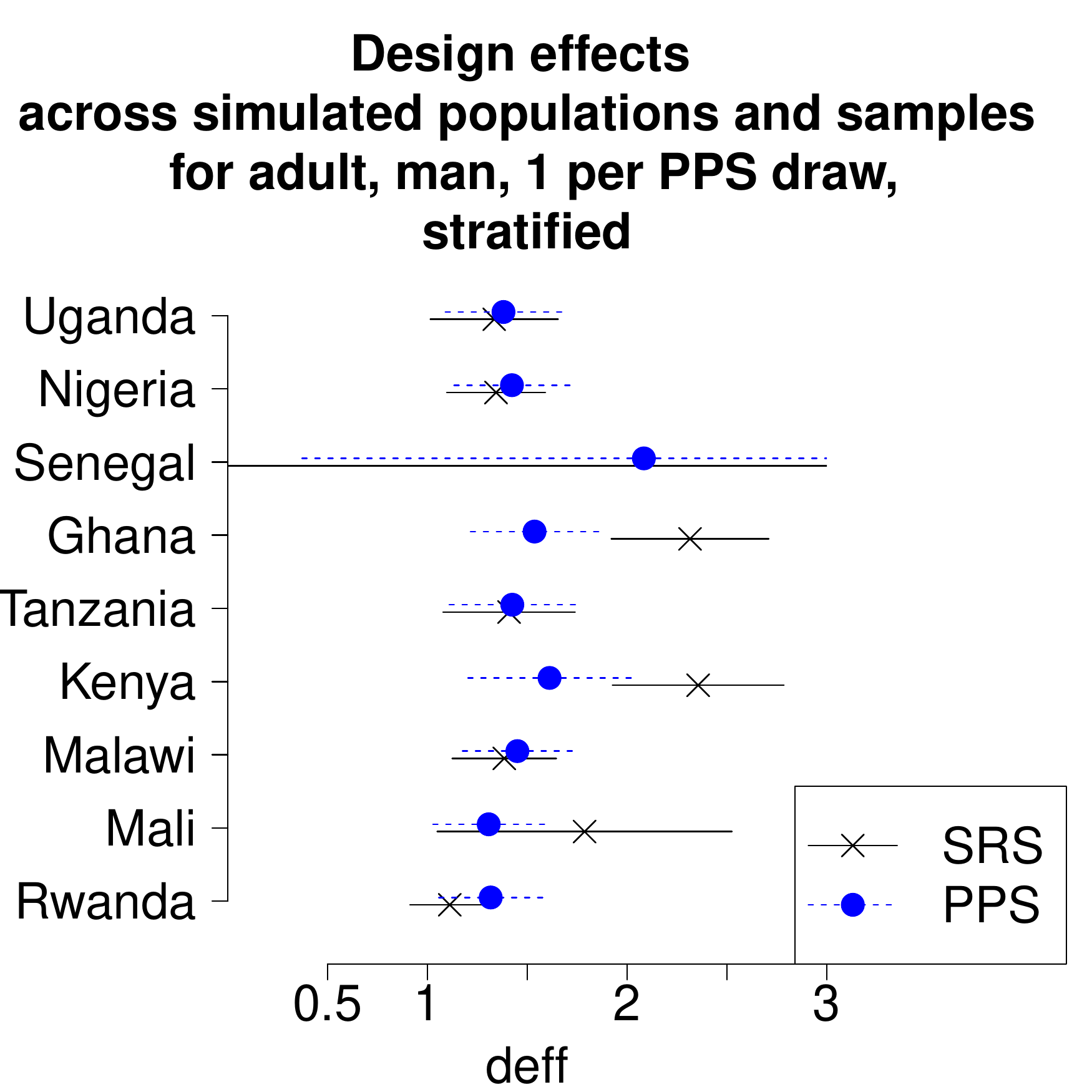}}
\subcaptionbox{\label{deff_adult_woman_1_per_PPS_draw_design_based_stratified}}
 [0.49\textwidth]{\includegraphics[width=0.34\textwidth]{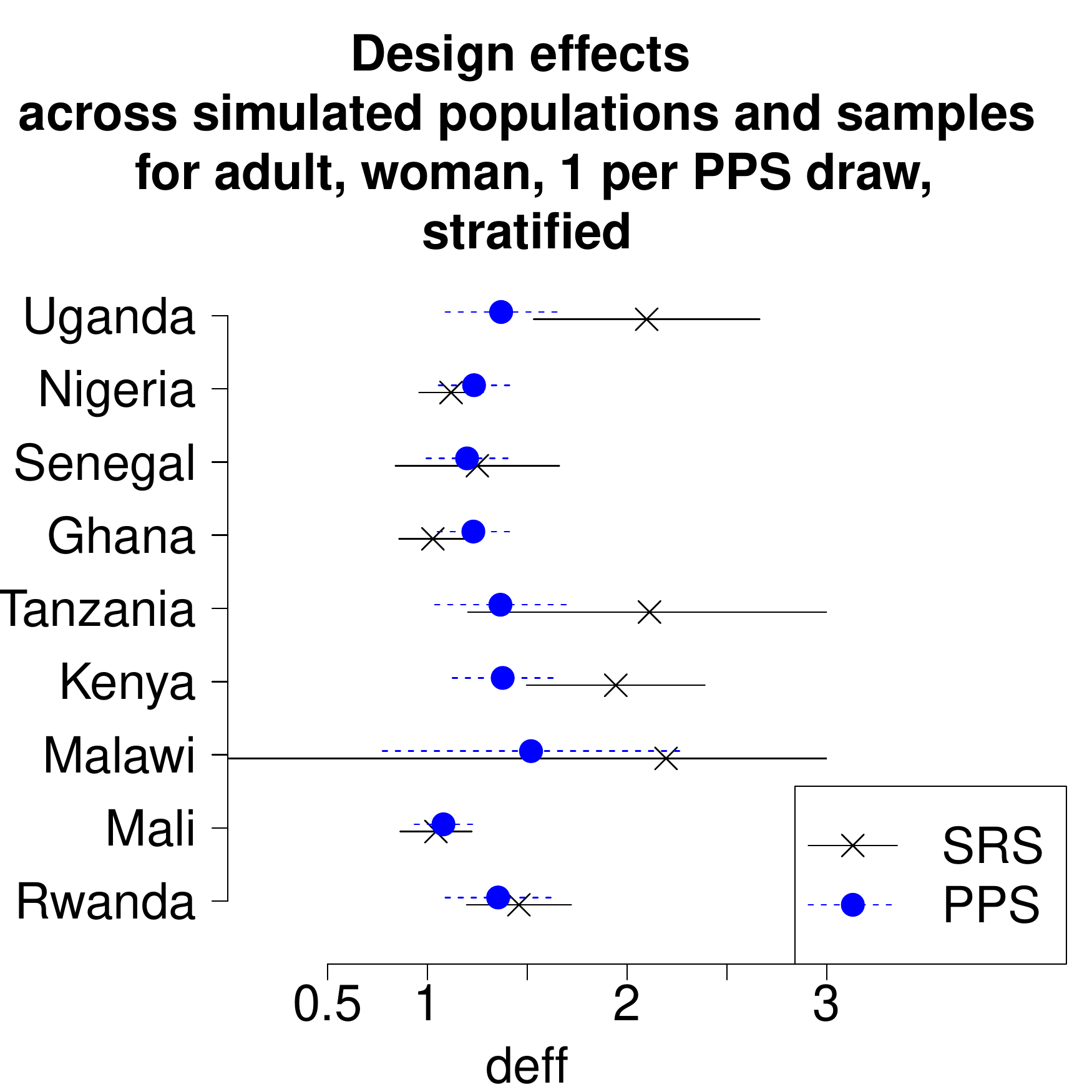}} \\
\subcaptionbox{\label{n_adult_man_1_per_PPS_draw_stratified}}
 [0.49\textwidth]{\includegraphics[width=0.34\textwidth]{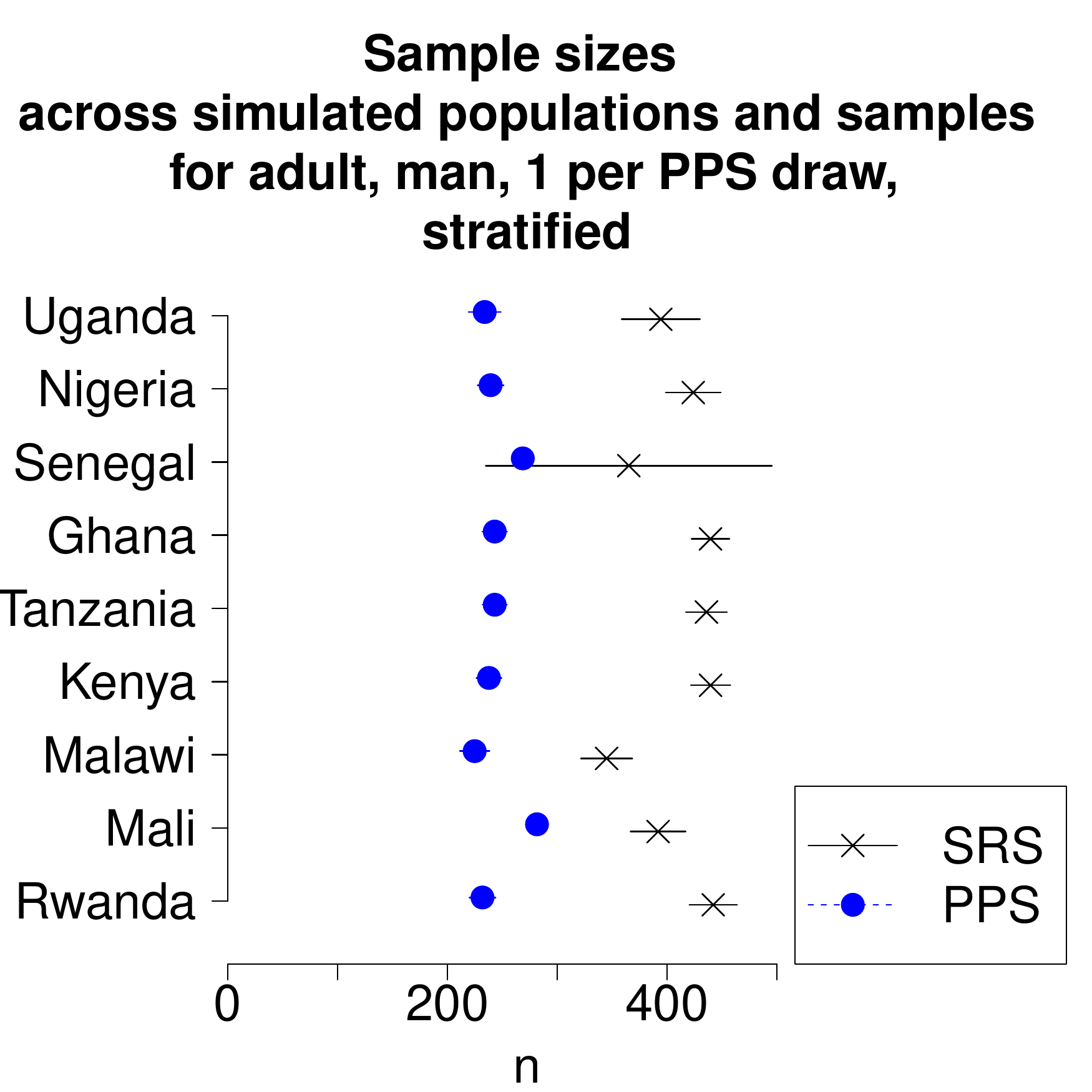}}
\subcaptionbox{\label{n_adult_woman_1_per_PPS_draw_stratified}}
 [0.49\textwidth]{\includegraphics[width=0.34\textwidth]{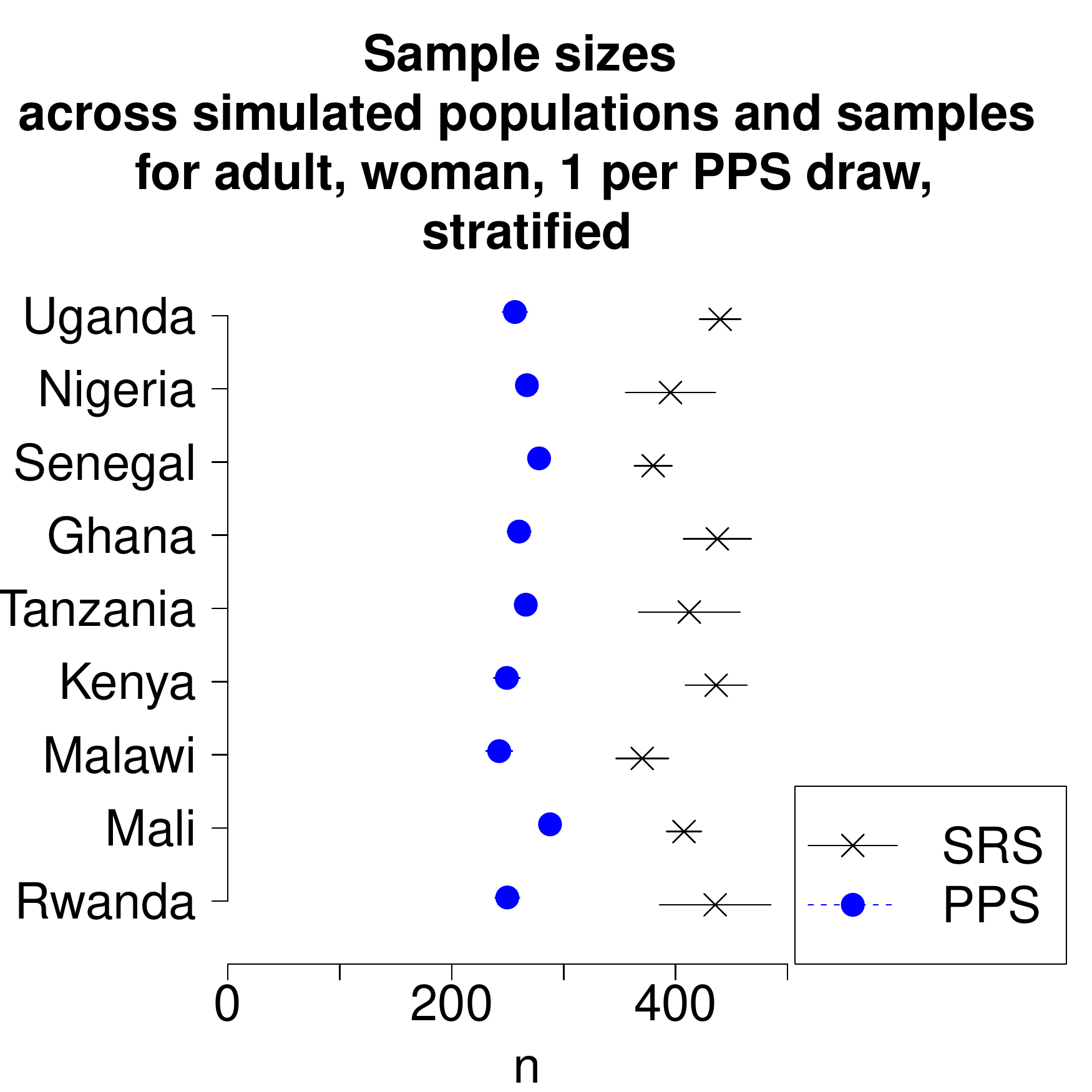}} \\
\subcaptionbox{\label{n_eff_adult_man_1_per_PPS_draw_design_based_stratified}}
 [0.49\textwidth]{\includegraphics[width=0.34\textwidth]{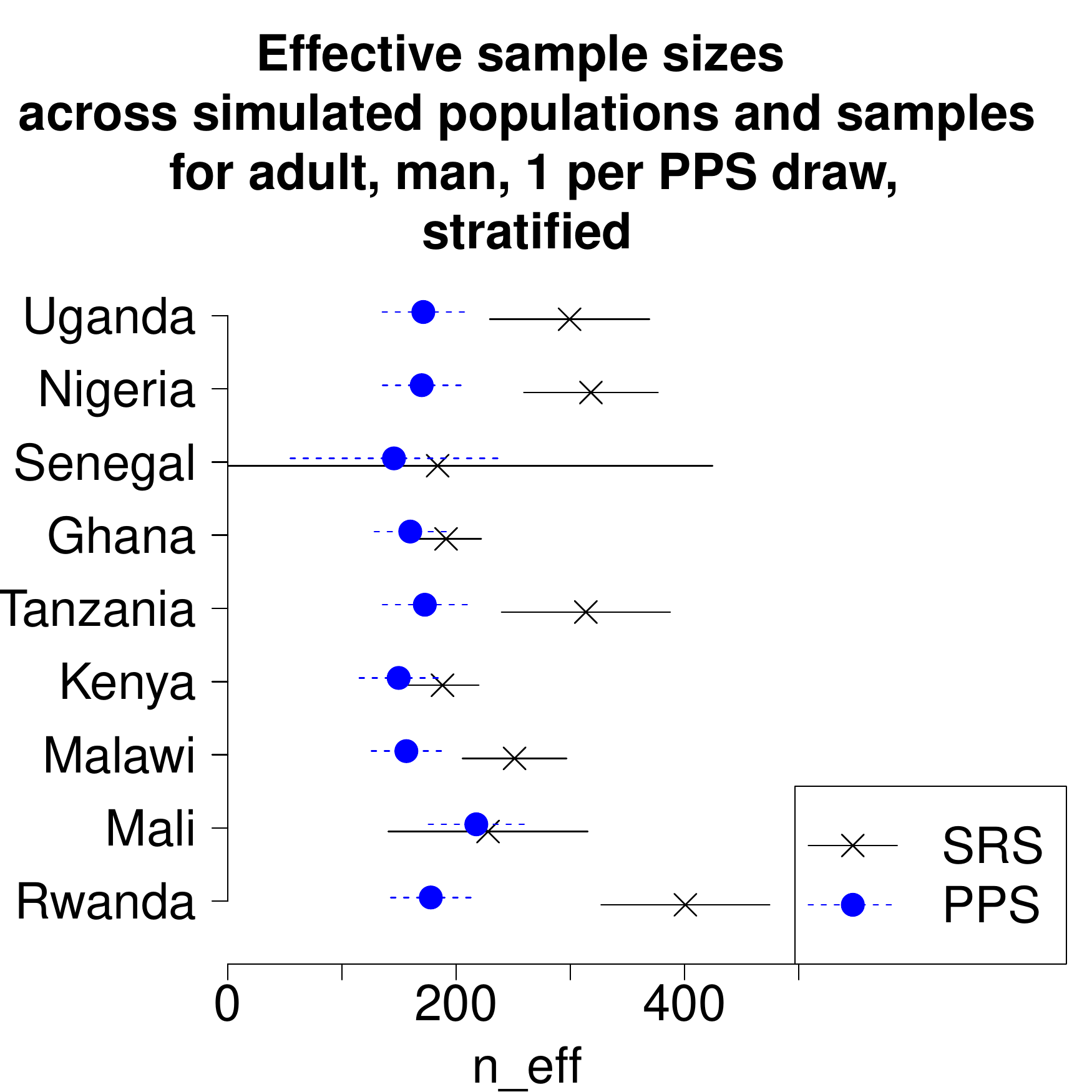}}
\subcaptionbox{\label{n_eff_adult_woman_1_per_PPS_draw_design_based_stratified}}
 [0.49\textwidth]{\includegraphics[width=0.34\textwidth]{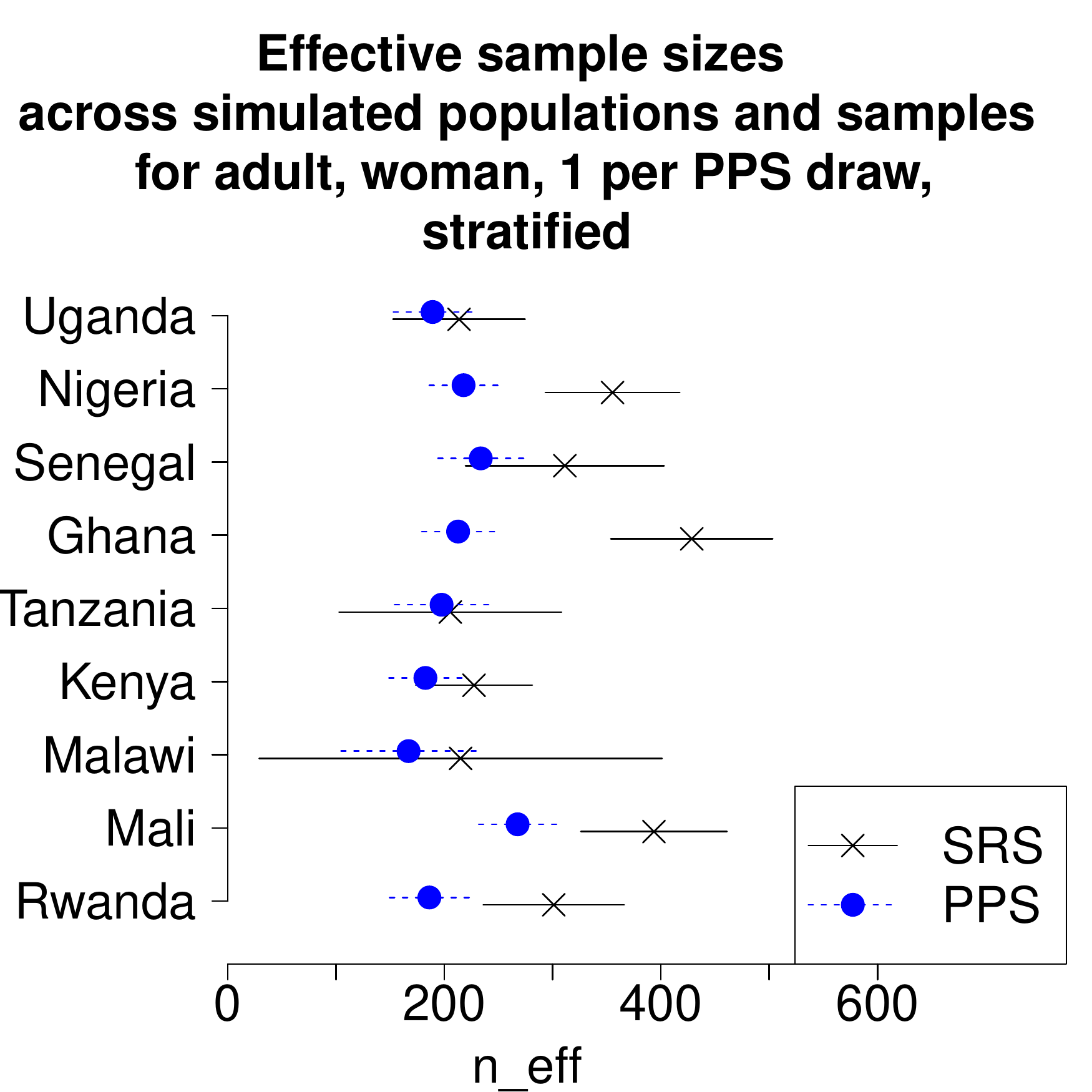}} 
\caption[]{Design-based adult module results}\label{adult_sampling_results_design_based_stratified}
\end{figure}

\begin{figure}[h!]
    \centering
\subcaptionbox{\label{deff_RDT_under_5_1_per_PPS_draw_design_based_stratified}}
 [0.49\textwidth]{\includegraphics[width=0.34\textwidth]{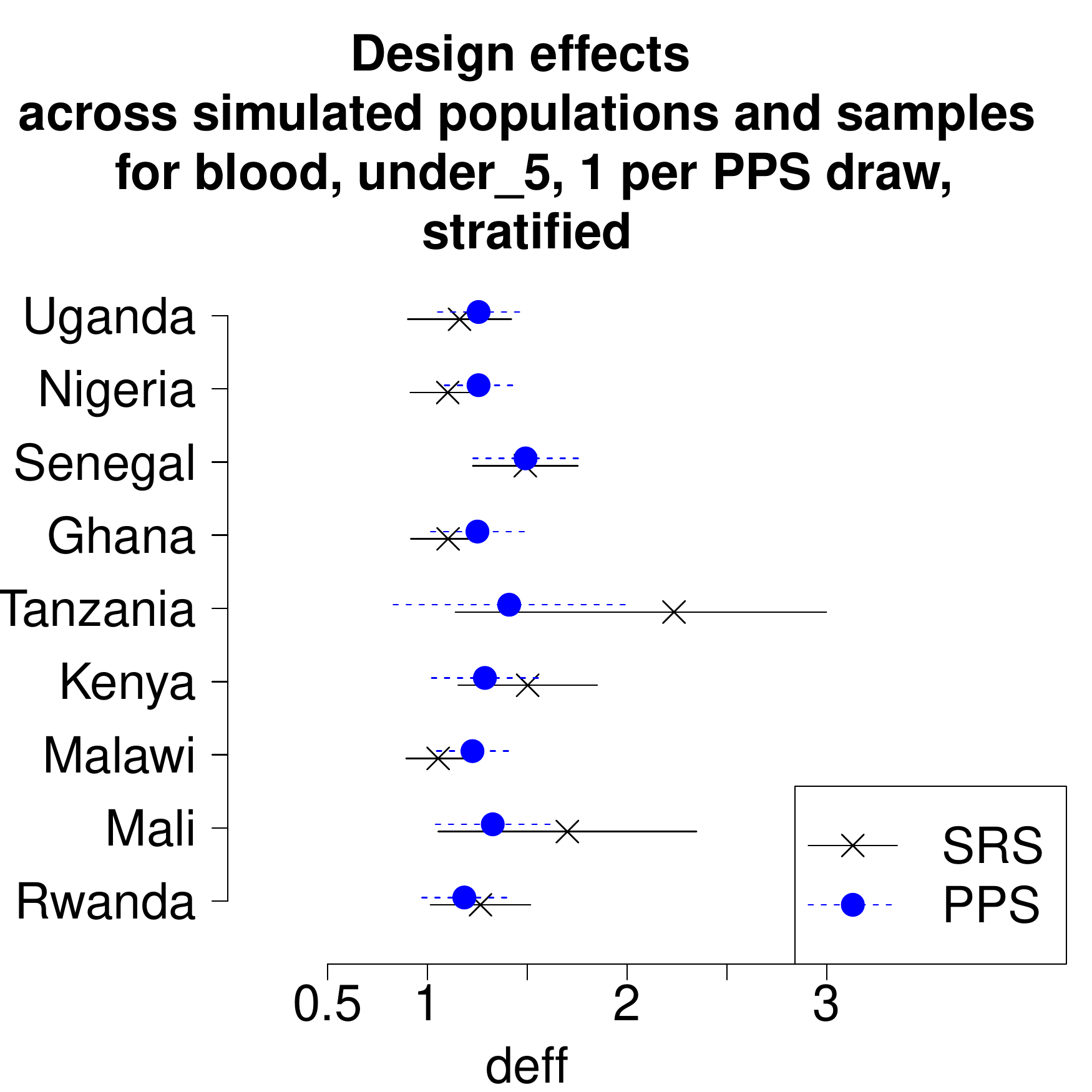}}
\subcaptionbox{\label{deff_RDT_school_age_1_per_PPS_draw_design_based_stratified}}
 [0.49\textwidth]{\includegraphics[width=0.34\textwidth]{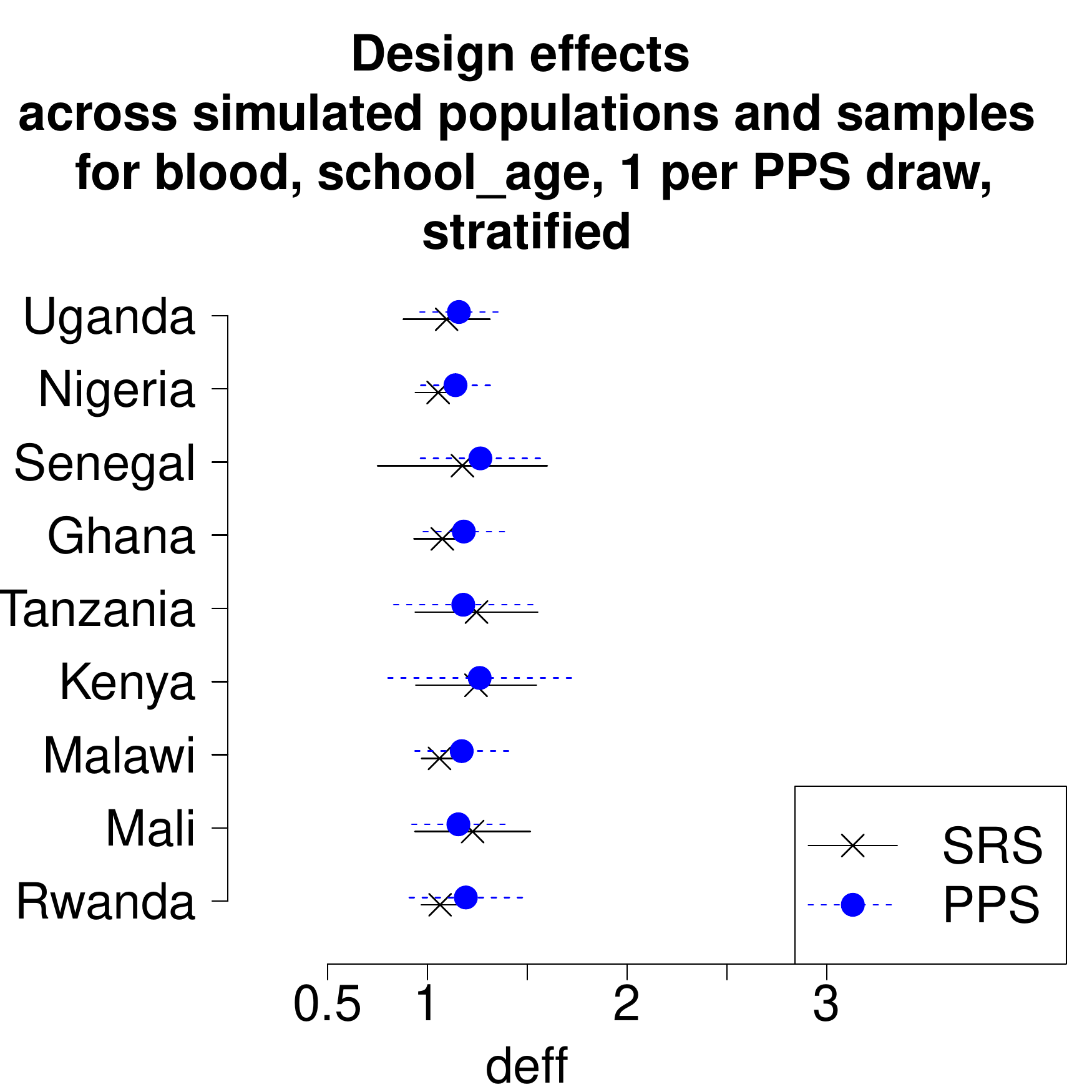}} \\
\subcaptionbox{\label{n_RDT_under_5_1_per_PPS_draw_stratified}}
 [0.49\textwidth]{\includegraphics[width=0.34\textwidth]{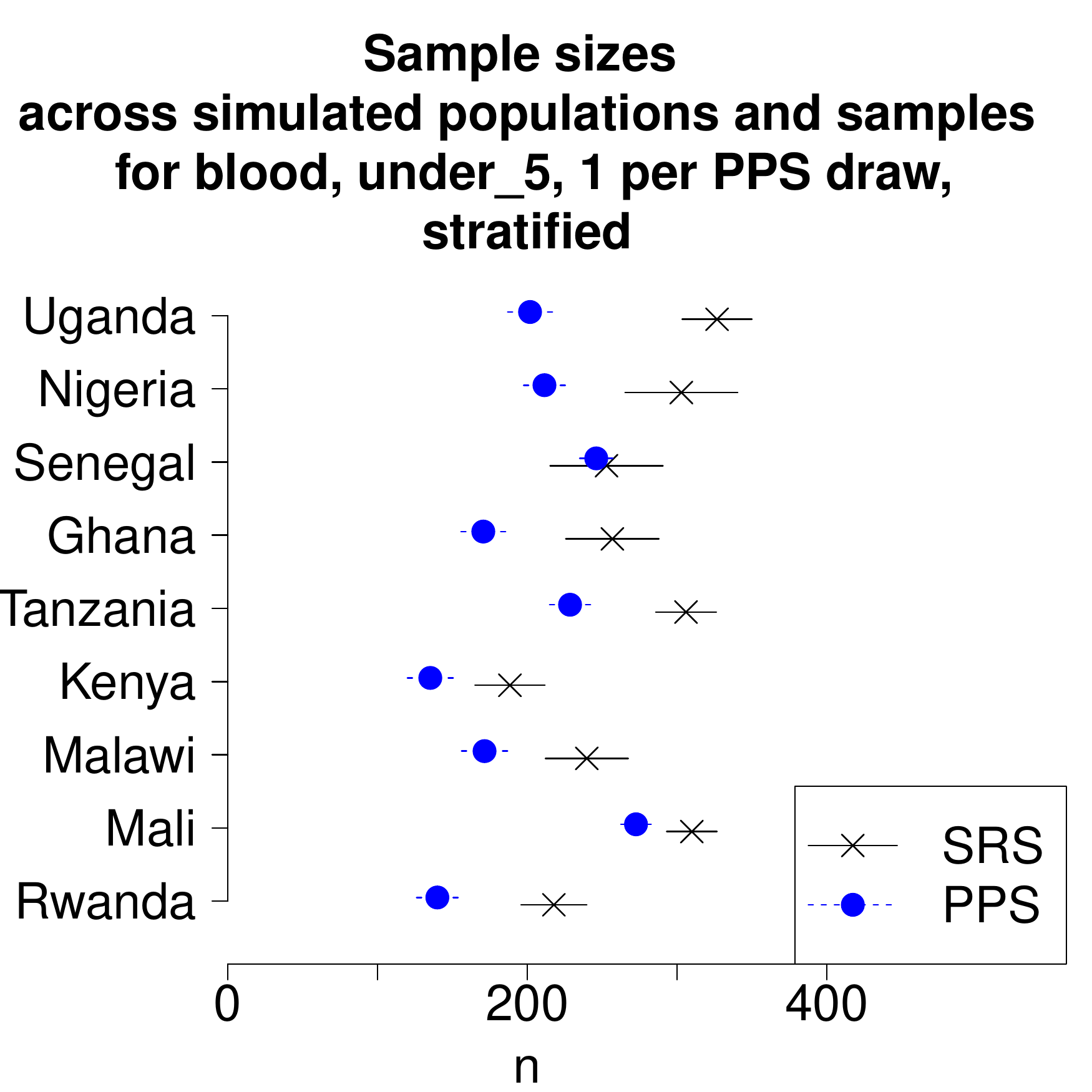}}
\subcaptionbox{\label{n_RDT_school_age_1_per_PPS_draw_stratified}}
 [0.49\textwidth]{\includegraphics[width=0.34\textwidth]{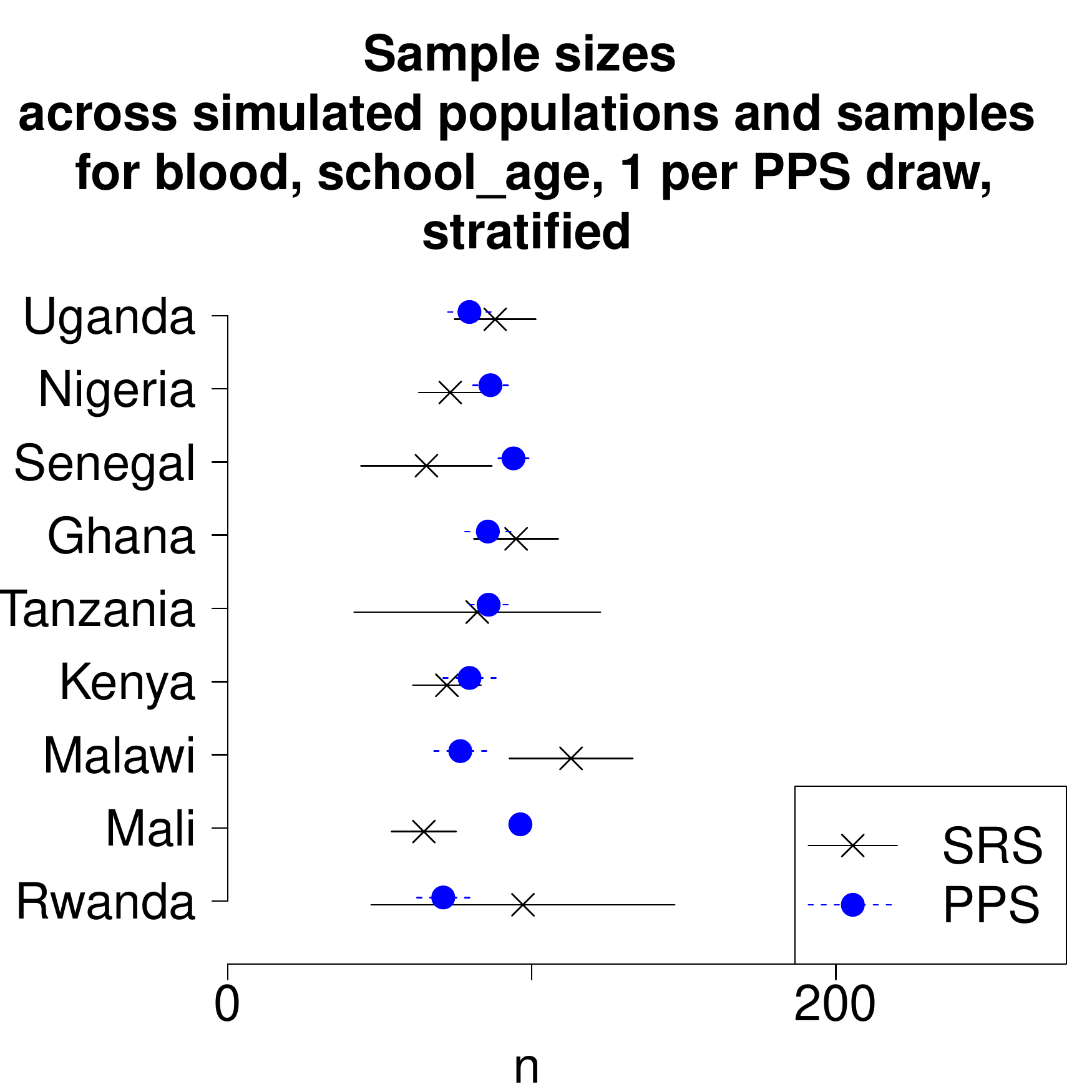}} \\
\subcaptionbox{\label{n_eff_RDT_under_5_1_per_PPS_draw_design_based_stratified}}
 [0.49\textwidth]{\includegraphics[width=0.34\textwidth]{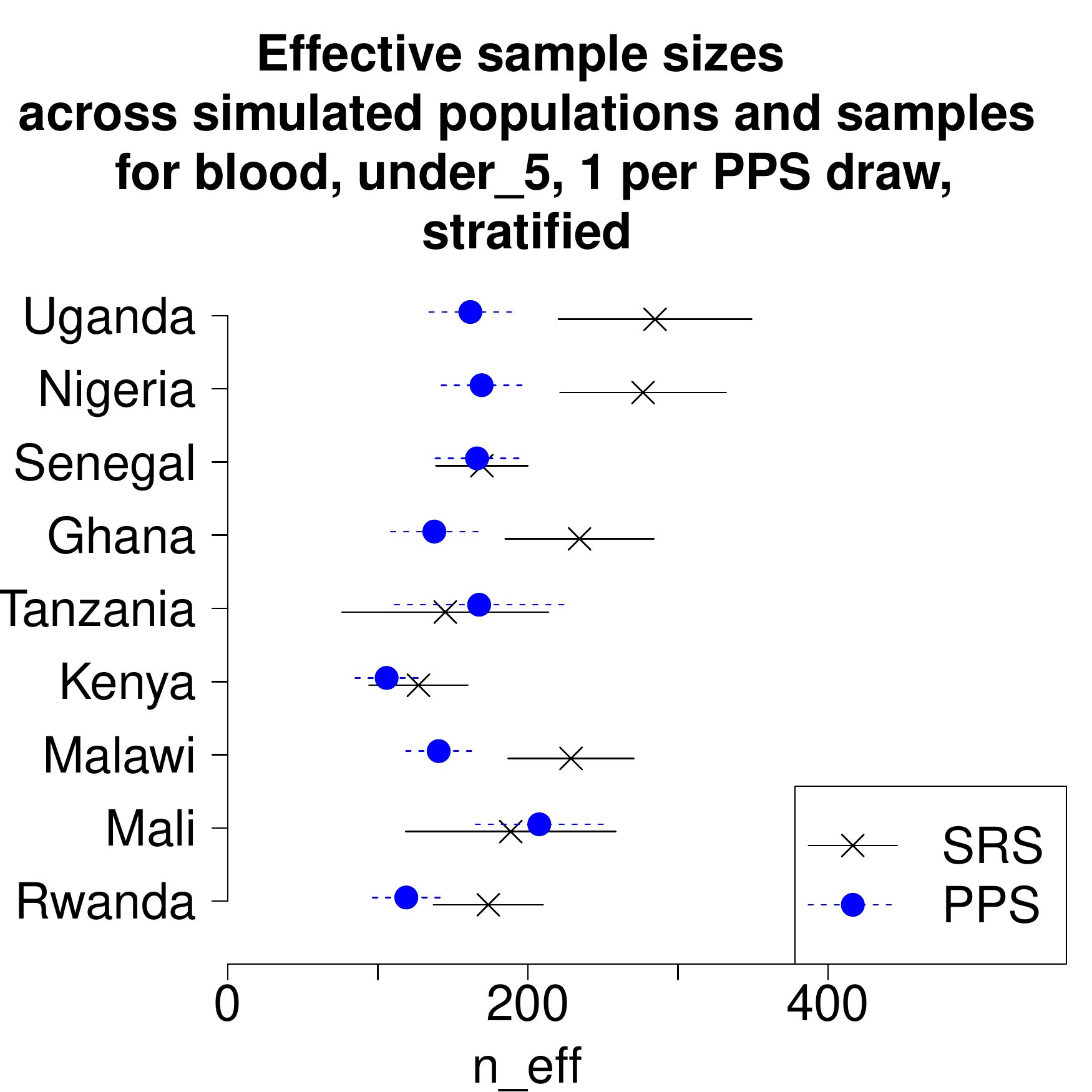}}
\subcaptionbox{\label{n_eff_RDT_school_age_1_per_PPS_draw_design_based_stratified}}
 [0.49\textwidth]{\includegraphics[width=0.34\textwidth]{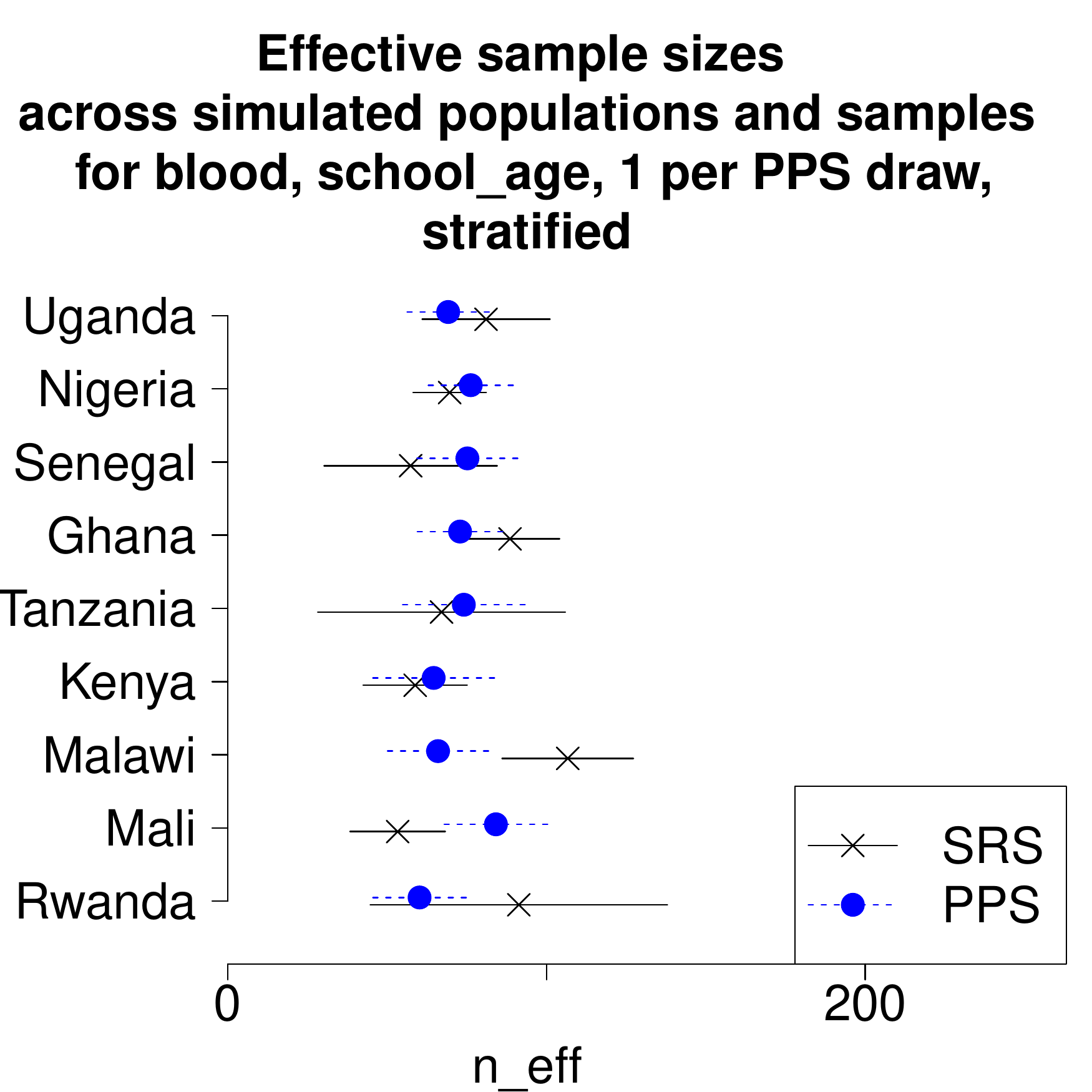}} 
\caption[]{Design-based blood (malaria and anemia) module results: under 5 and school-age children}\label{RDT_children_sampling_results_design_based_stratified}
\end{figure}

\begin{figure}[h!]
    \centering
\subcaptionbox{\label{deff_RDT_man_1_per_PPS_draw_design_based_stratified}}
 [0.49\textwidth]{\includegraphics[width=0.34\textwidth]{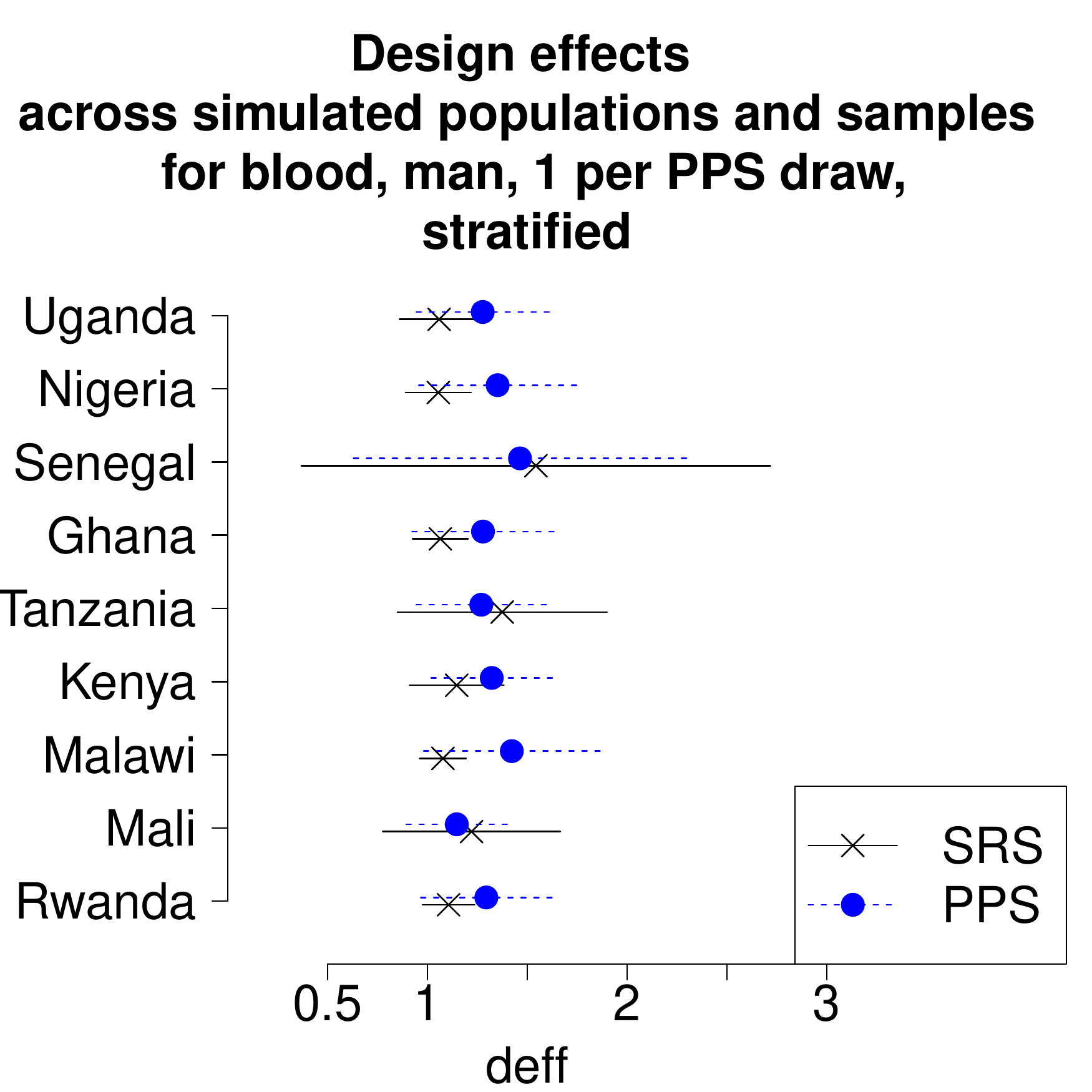}}
\subcaptionbox{\label{deff_RDT_woman_1_per_PPS_draw_design_based_stratified}}
 [0.49\textwidth]{\includegraphics[width=0.34\textwidth]{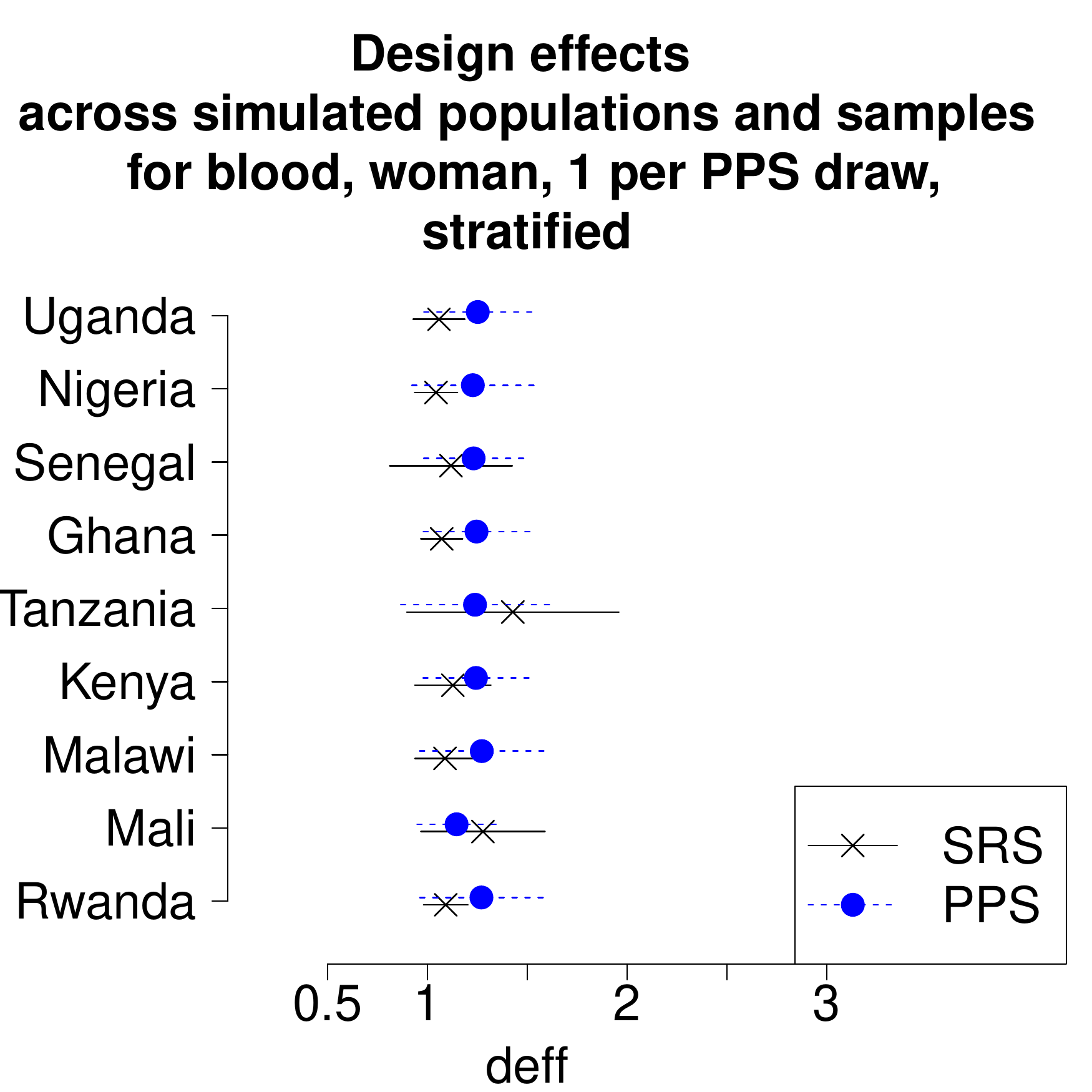}} \\
\subcaptionbox{\label{n_RDT_man_1_per_PPS_draw_stratified}}
 [0.49\textwidth]{\includegraphics[width=0.34\textwidth]{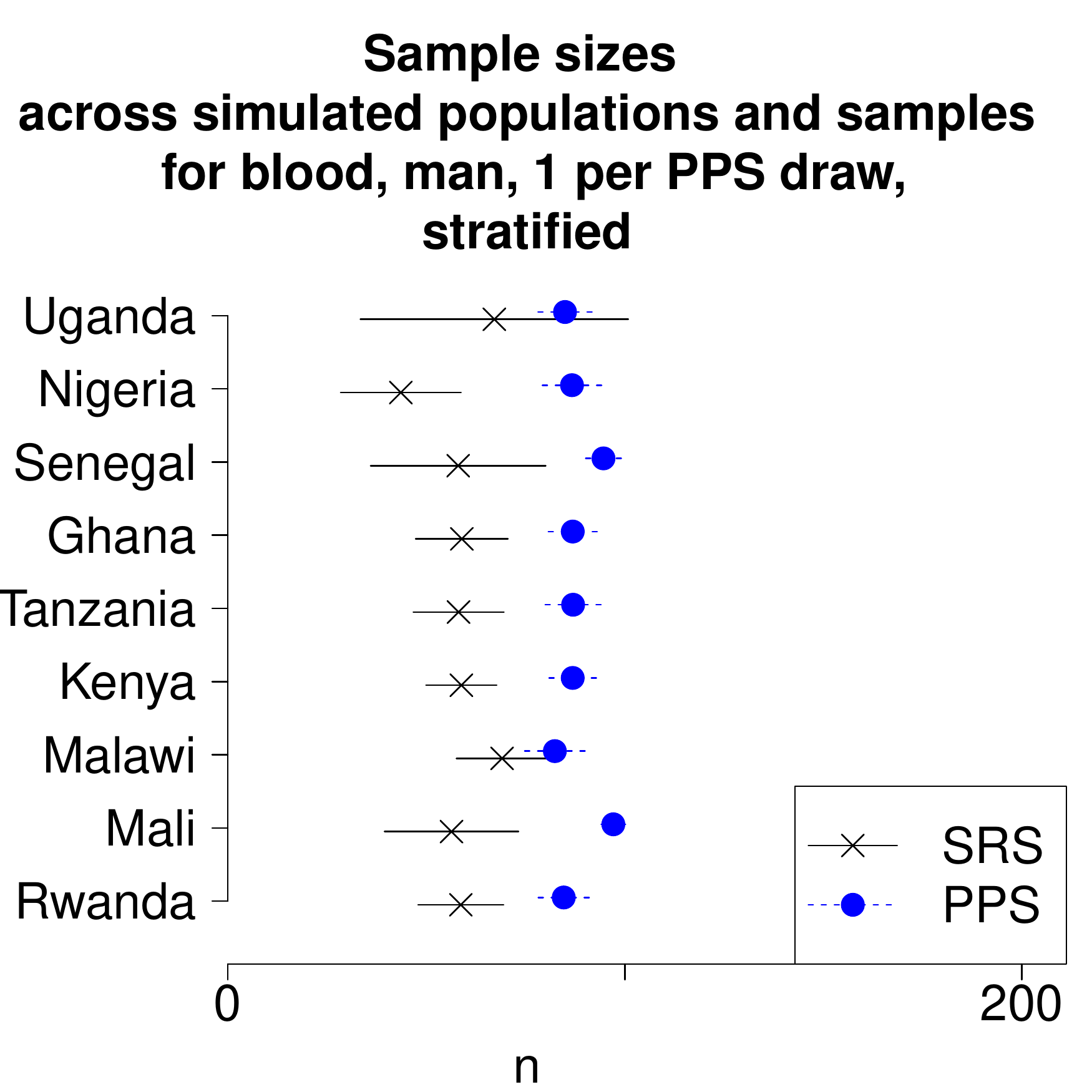}}
\subcaptionbox{\label{n_RDT_woman_1_per_PPS_draw_stratified}}
 [0.49\textwidth]{\includegraphics[width=0.34\textwidth]{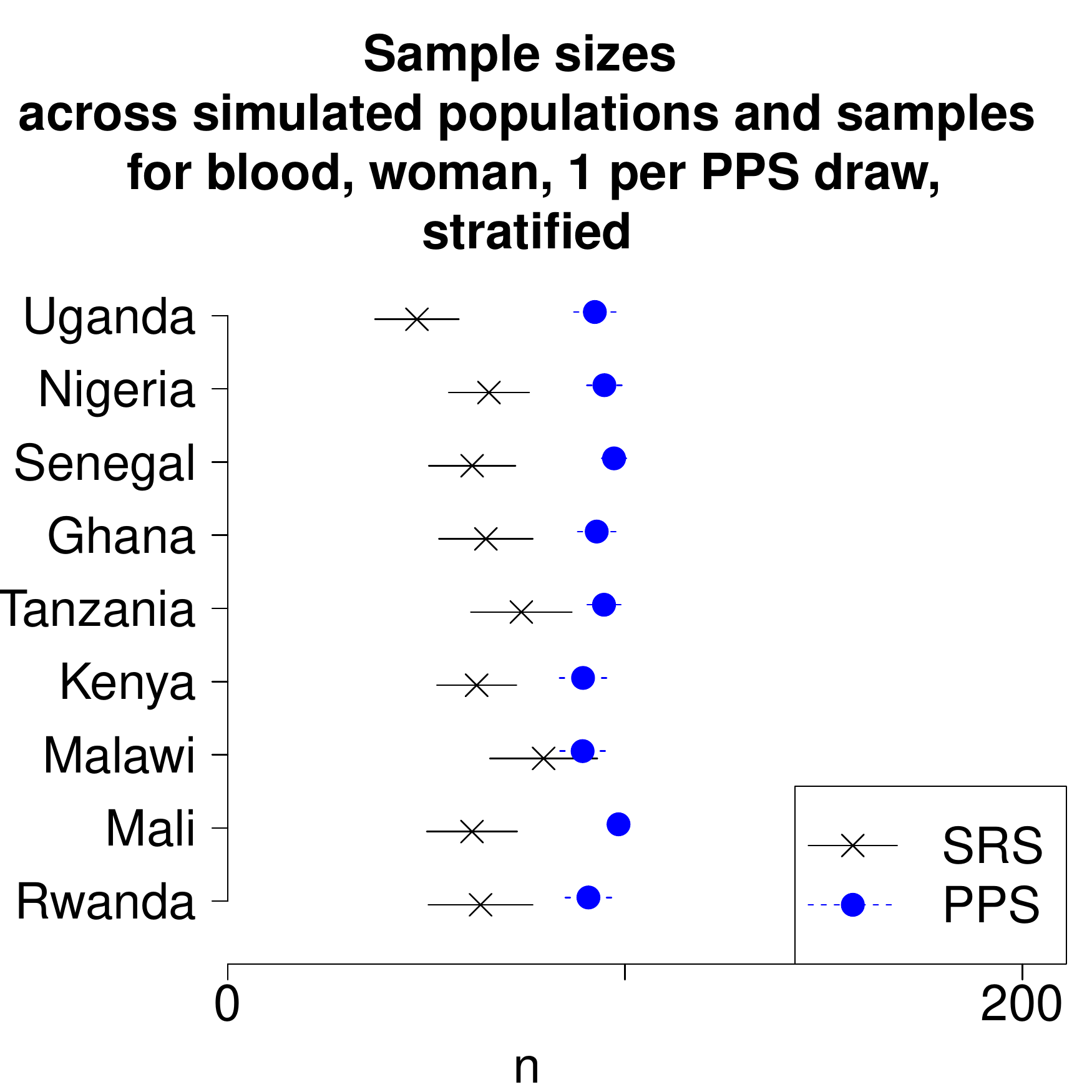}} \\
\subcaptionbox{\label{n_eff_RDT_man_1_per_PPS_draw_design_based_stratified}}
 [0.49\textwidth]{\includegraphics[width=0.34\textwidth]{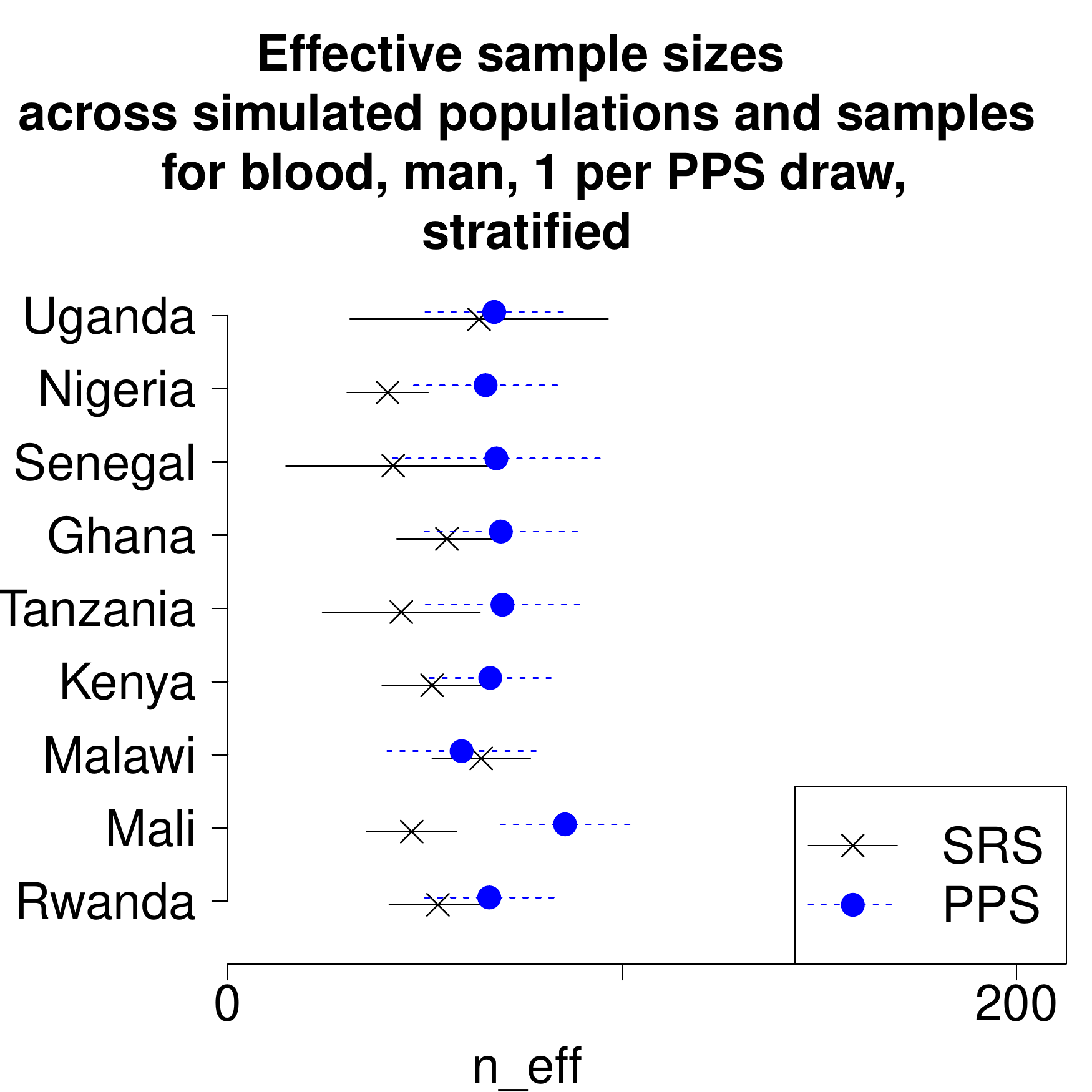}}
\subcaptionbox{\label{n_eff_RDT_woman_1_per_PPS_draw_design_based_stratified}}
 [0.49\textwidth]{\includegraphics[width=0.34\textwidth]{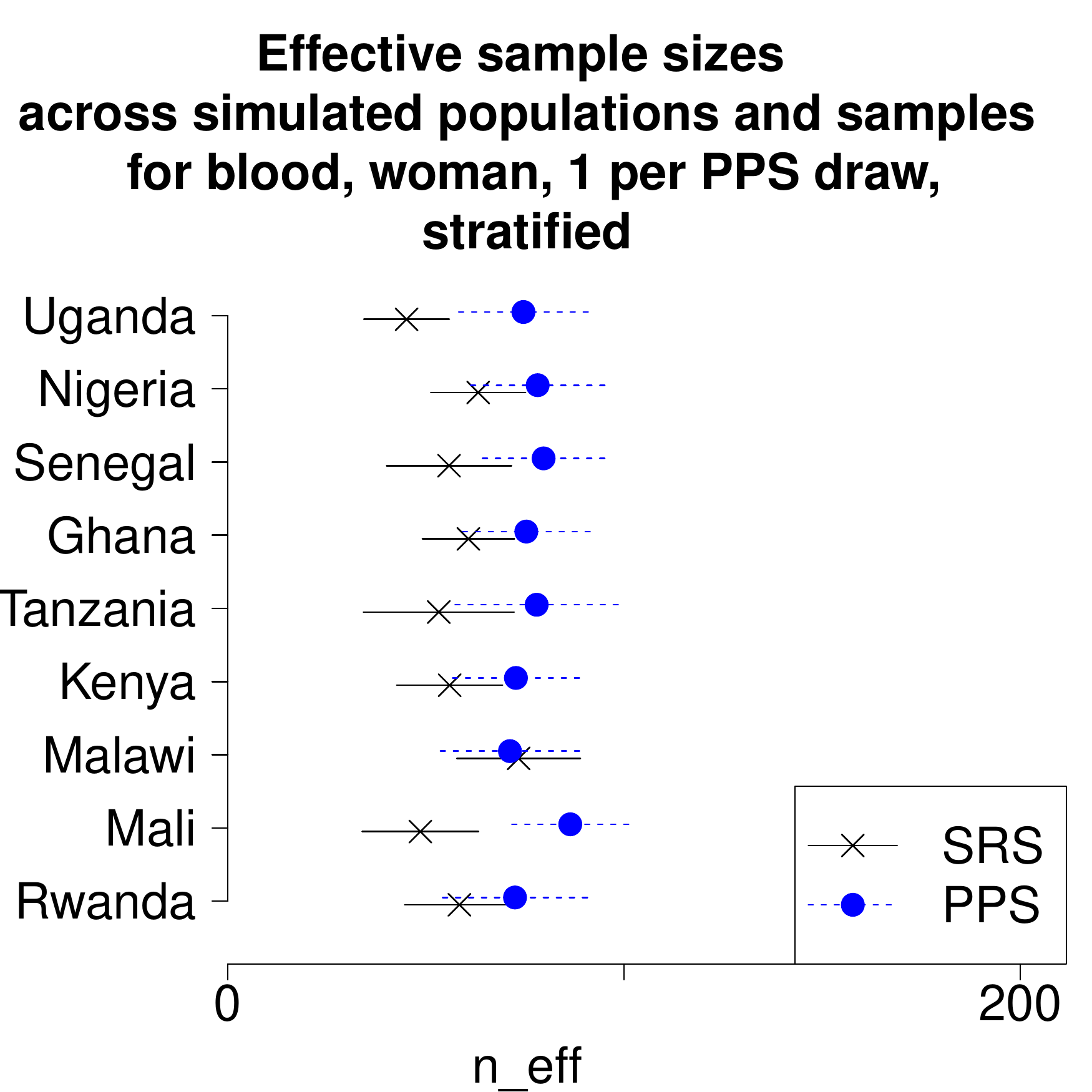}} 
\caption[]{Design-based blood (malaria and anemia) module results: men and women}\label{RDT_man_woman_sampling_results_design_based_stratified}
\end{figure}

\begin{figure}[h!]
    \centering
\subcaptionbox{\label{deff_anthro_under_5_1_per_PPS_draw_design_based_stratified}}
 [0.49\textwidth]{\includegraphics[width=0.34\textwidth]{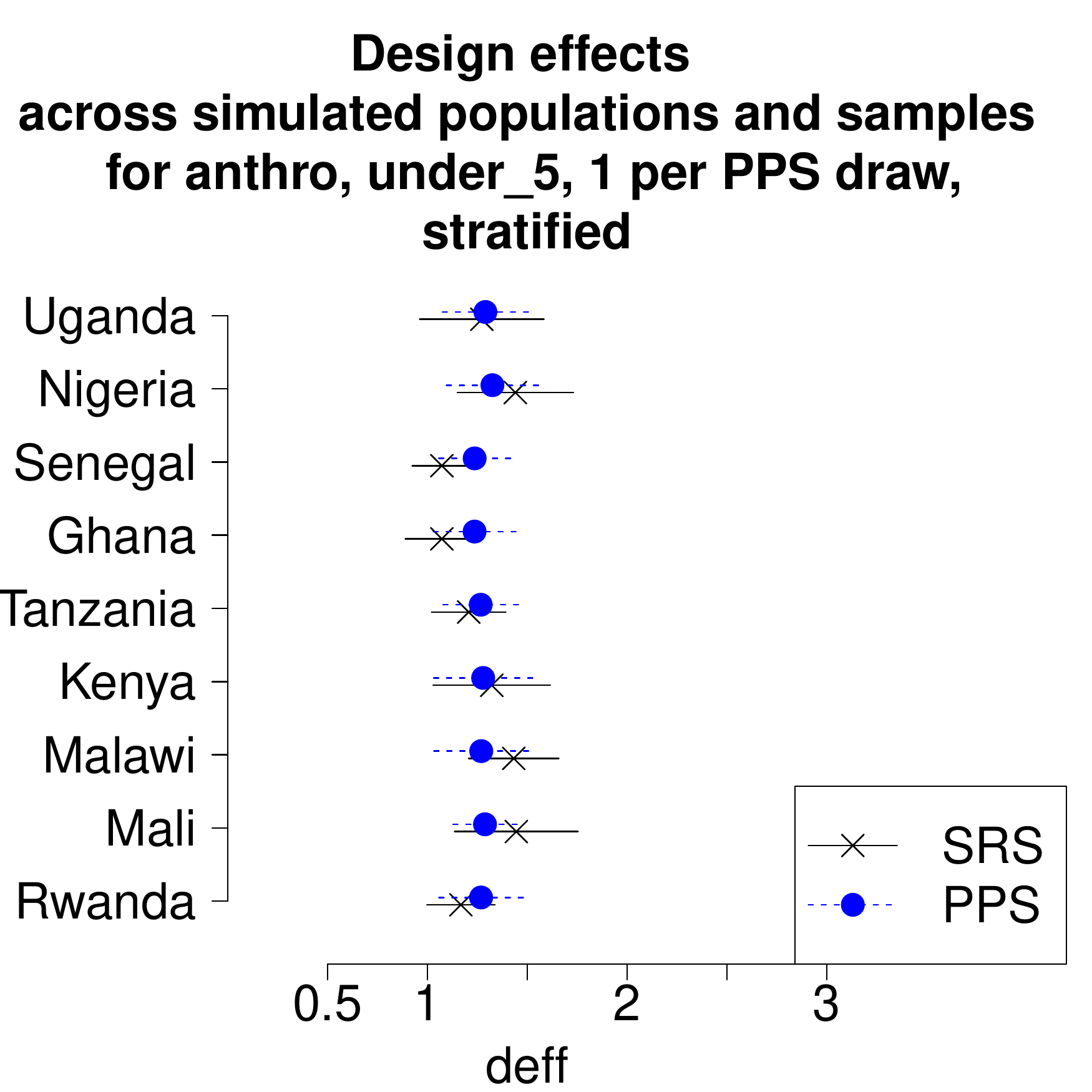}} \\
\subcaptionbox{\label{n_anthro_under_5_1_per_PPS_draw_stratified}}
 [0.49\textwidth]{\includegraphics[width=0.34\textwidth]{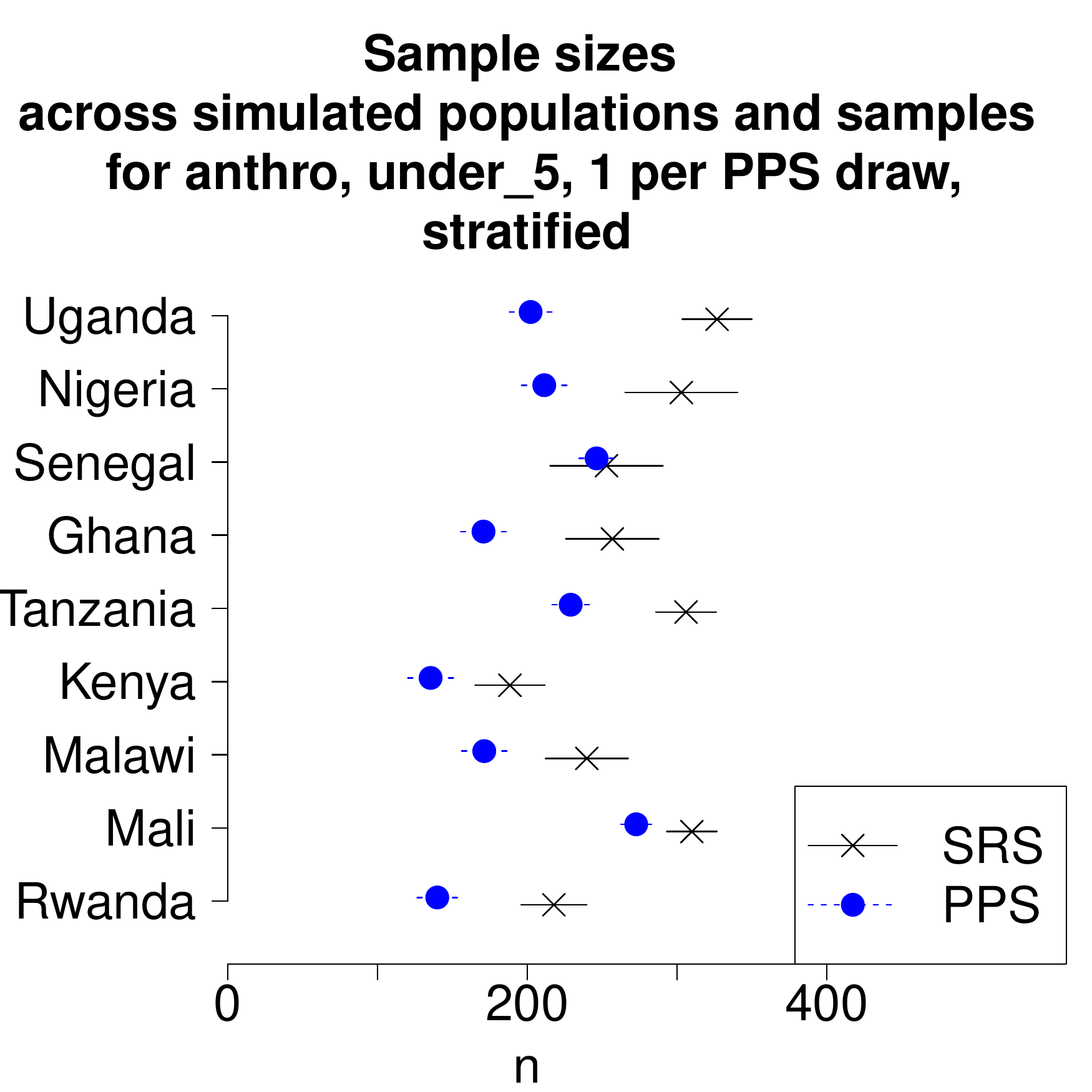}} \\
\subcaptionbox{\label{n_eff_anthro_under_5_1_per_PPS_draw_design_based_stratified}}
 [0.49\textwidth]{\includegraphics[width=0.34\textwidth]{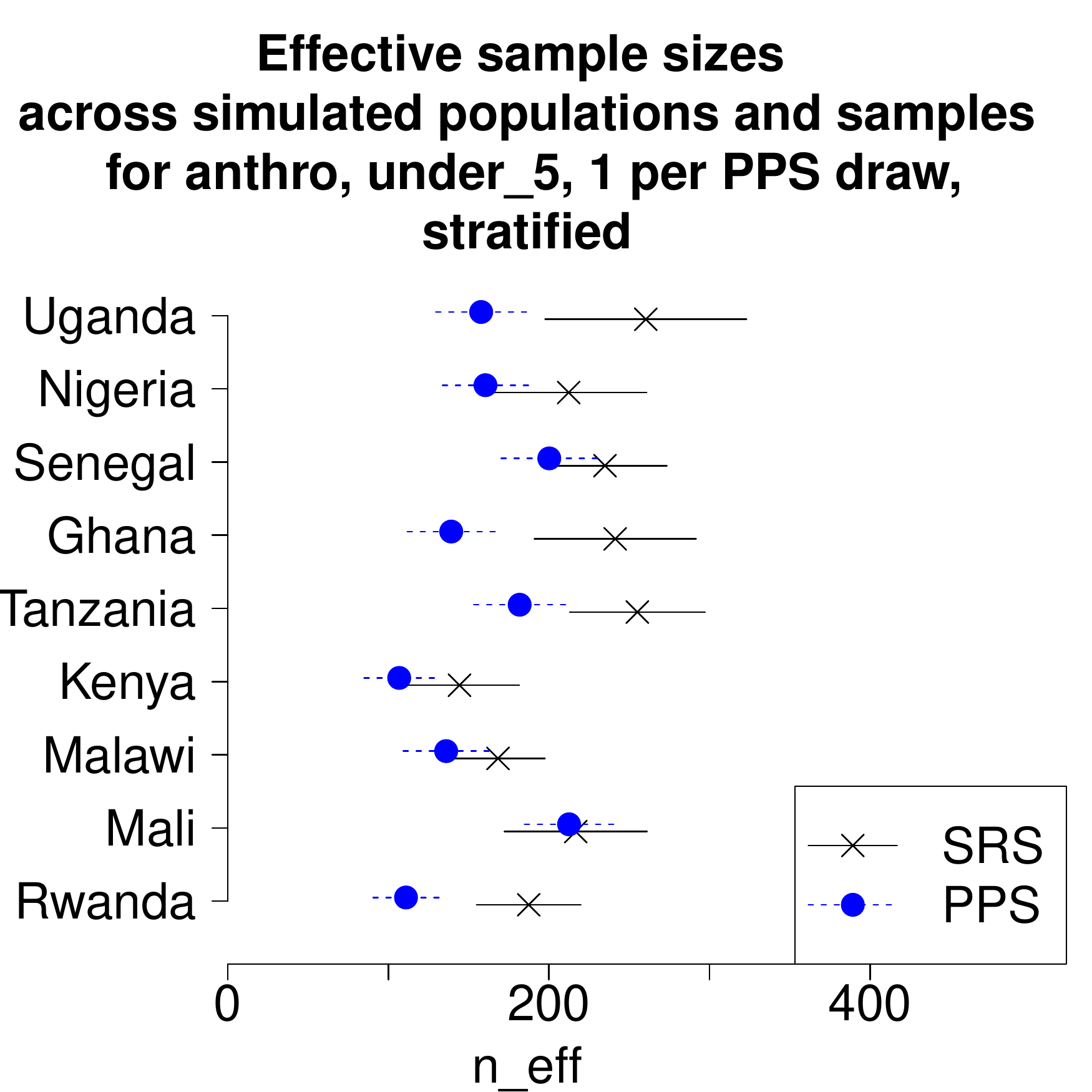}}
\caption[]{Design-based anthro module results}\label{anthro_sampling_results_design_based_stratified}
\end{figure}

\begin{figure}[h!]
\centering
\includegraphics[width = 8 cm]{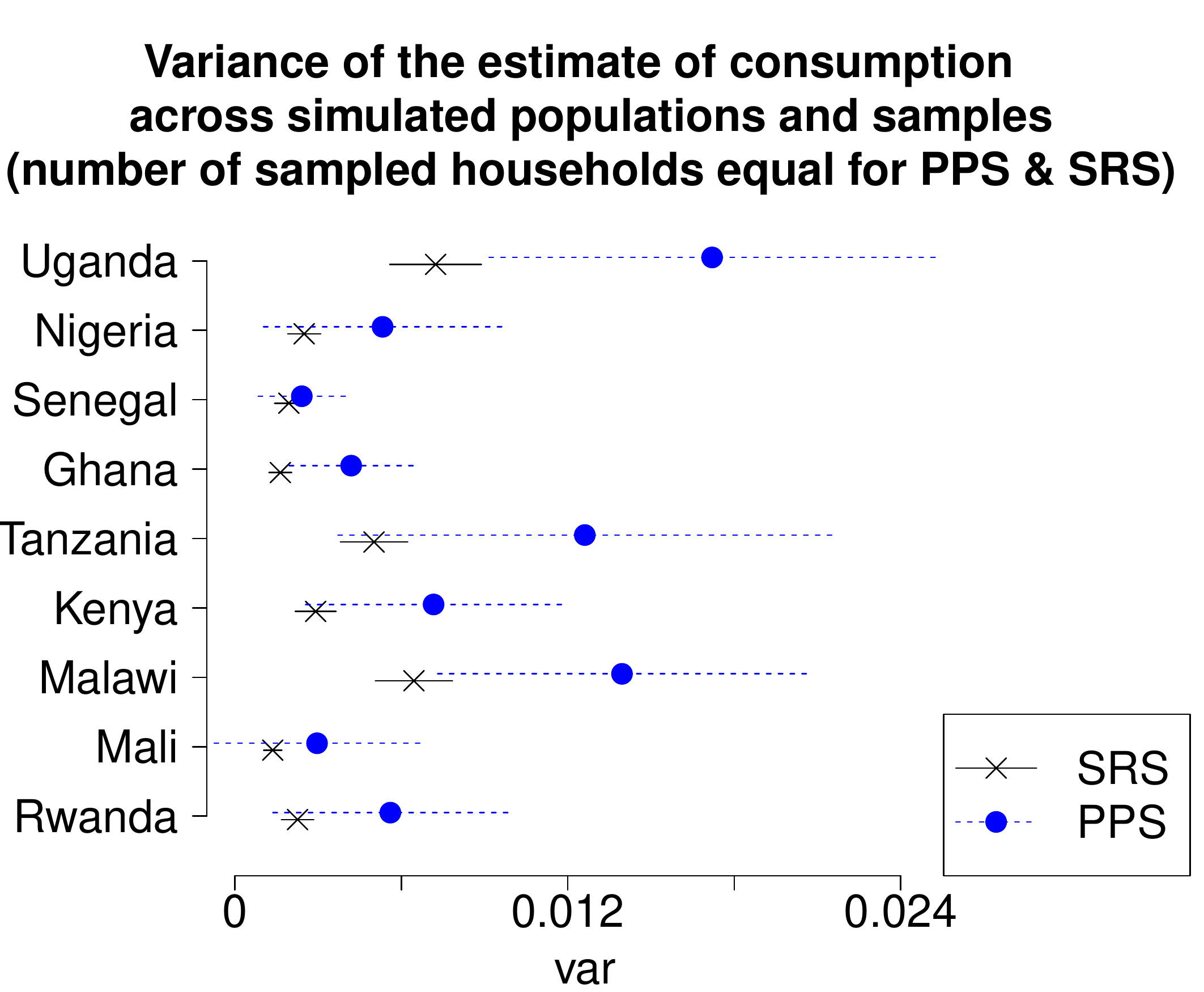}\caption{Design-based consumption module results}\label{var_hhsurvey_design_based_systematic}
\end{figure}

\clearpage
\newpage

\subsection{Model-based Results}

\begin{figure}[h!]
    \centering
\subcaptionbox{\label{deff_adult_man_1_per_PPS_draw_model_based_systematic}}
 [0.49\textwidth]{\includegraphics[width=0.34\textwidth]{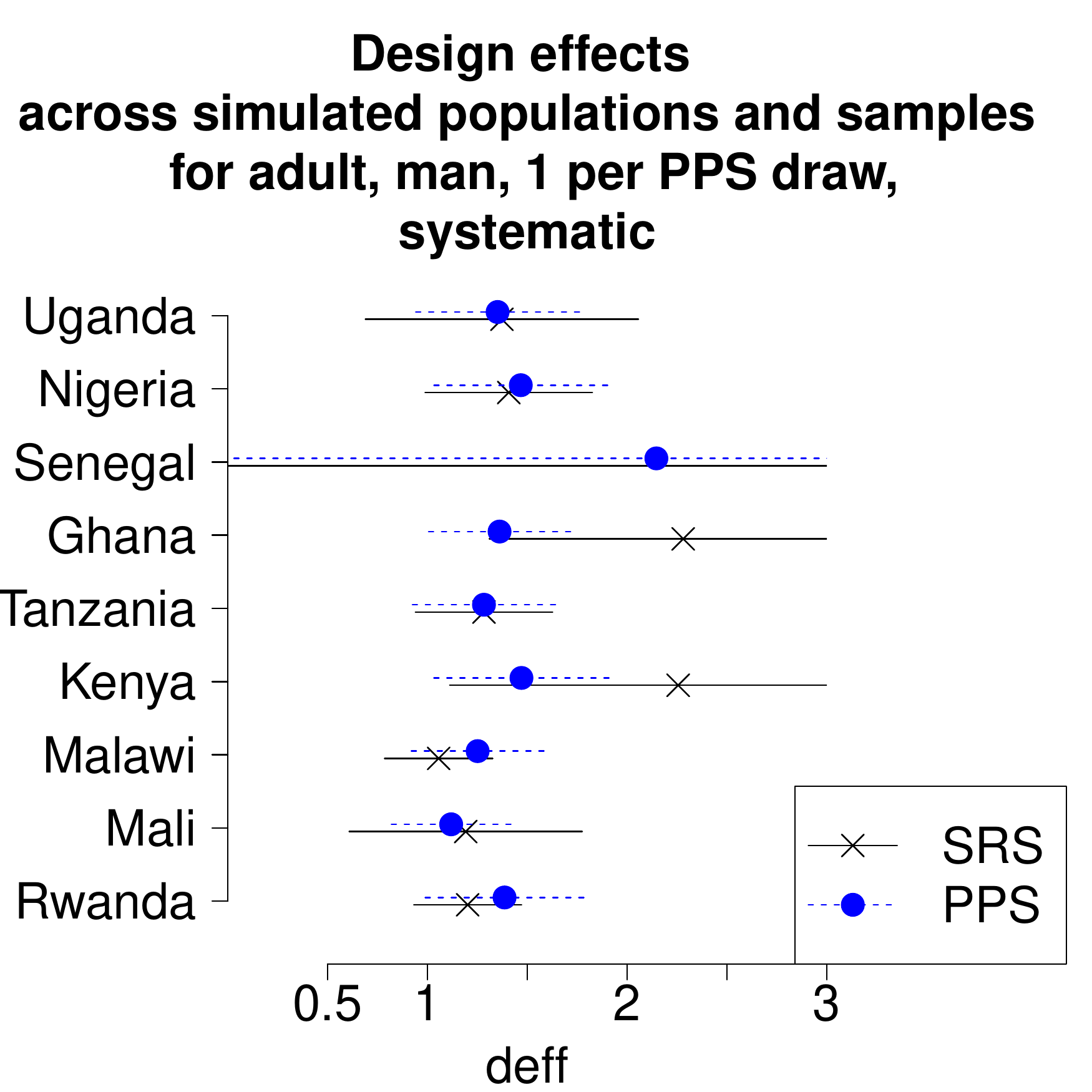}}
\subcaptionbox{\label{deff_adult_woman_1_per_PPS_draw_model_based_systematic}}
 [0.49\textwidth]{\includegraphics[width=0.34\textwidth]{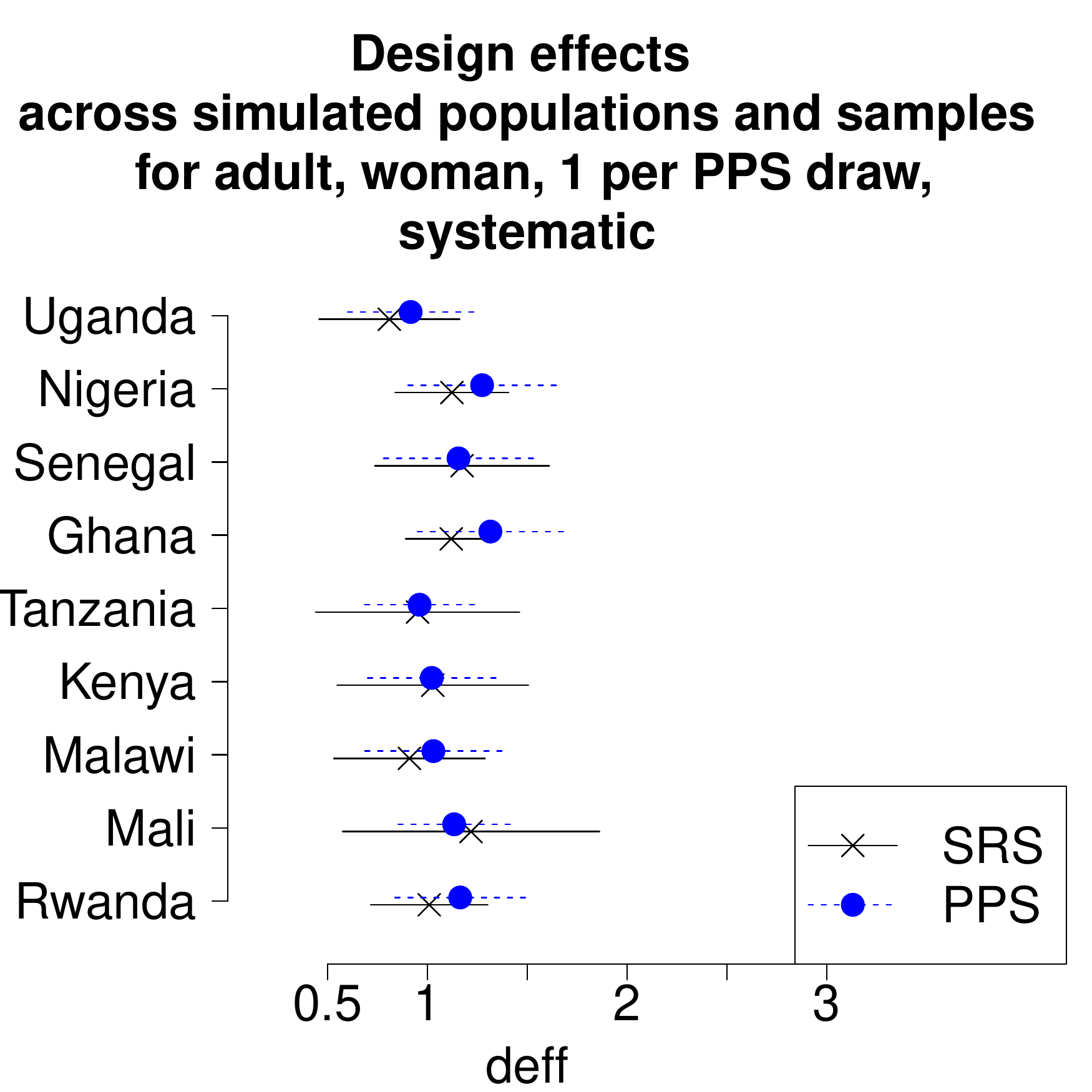}} \\
\subcaptionbox{\label{n_adult_man_1_per_PPS_draw_systematic}}
 [0.49\textwidth]{\includegraphics[width=0.34\textwidth]{n_adult_man_1_per_PPS_draw_systematic.pdf}}
\subcaptionbox{\label{n_adult_woman_1_per_PPS_draw_systematic}}
 [0.49\textwidth]{\includegraphics[width=0.34\textwidth]{n_adult_woman_1_per_PPS_draw_systematic.pdf}} \\
\subcaptionbox{\label{n_eff_adult_man_1_per_PPS_draw_model_based_systematic}}
 [0.49\textwidth]{\includegraphics[width=0.34\textwidth]{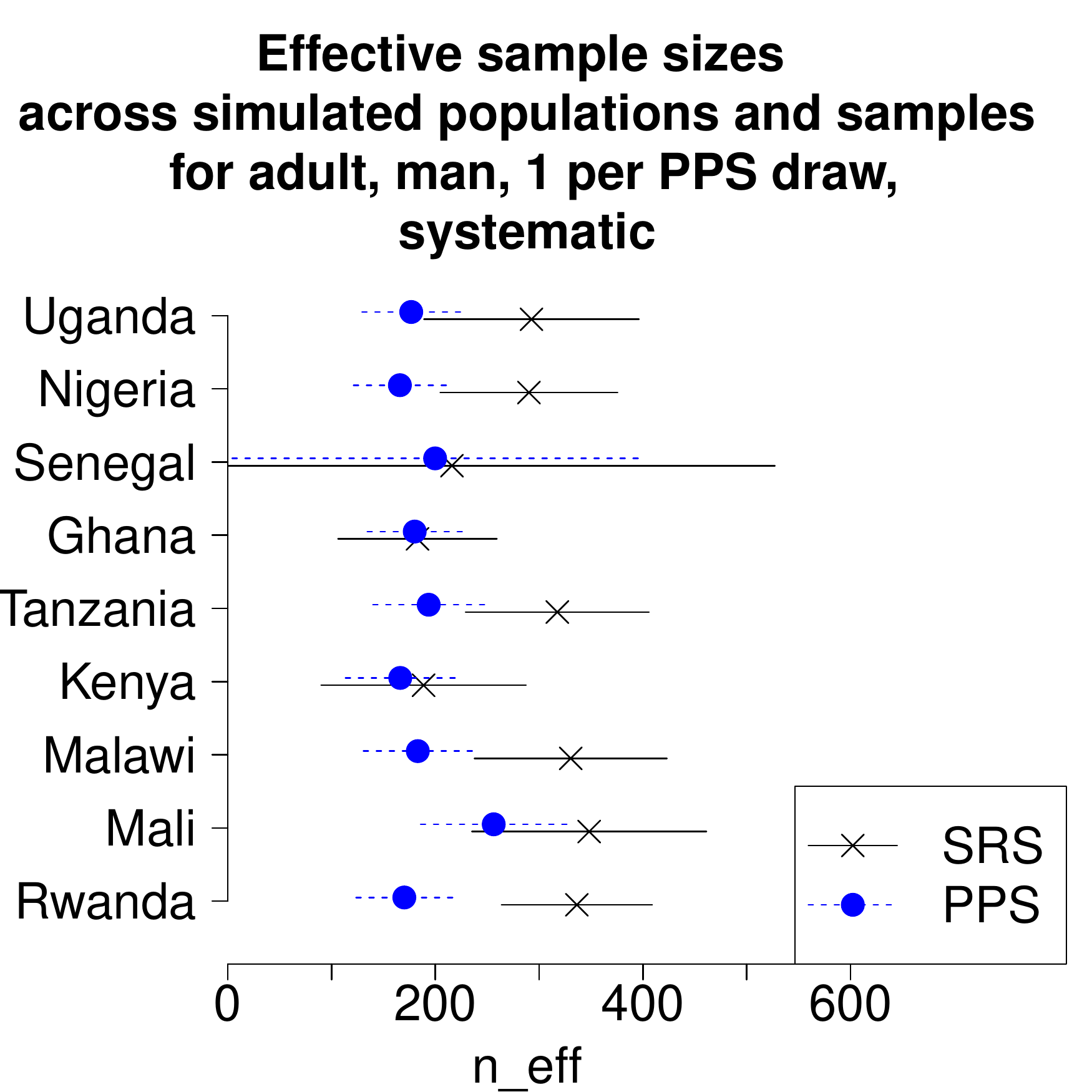}}
\subcaptionbox{\label{n_eff_adult_woman_1_per_PPS_draw_model_based_systematic}}
 [0.49\textwidth]{\includegraphics[width=0.34\textwidth]{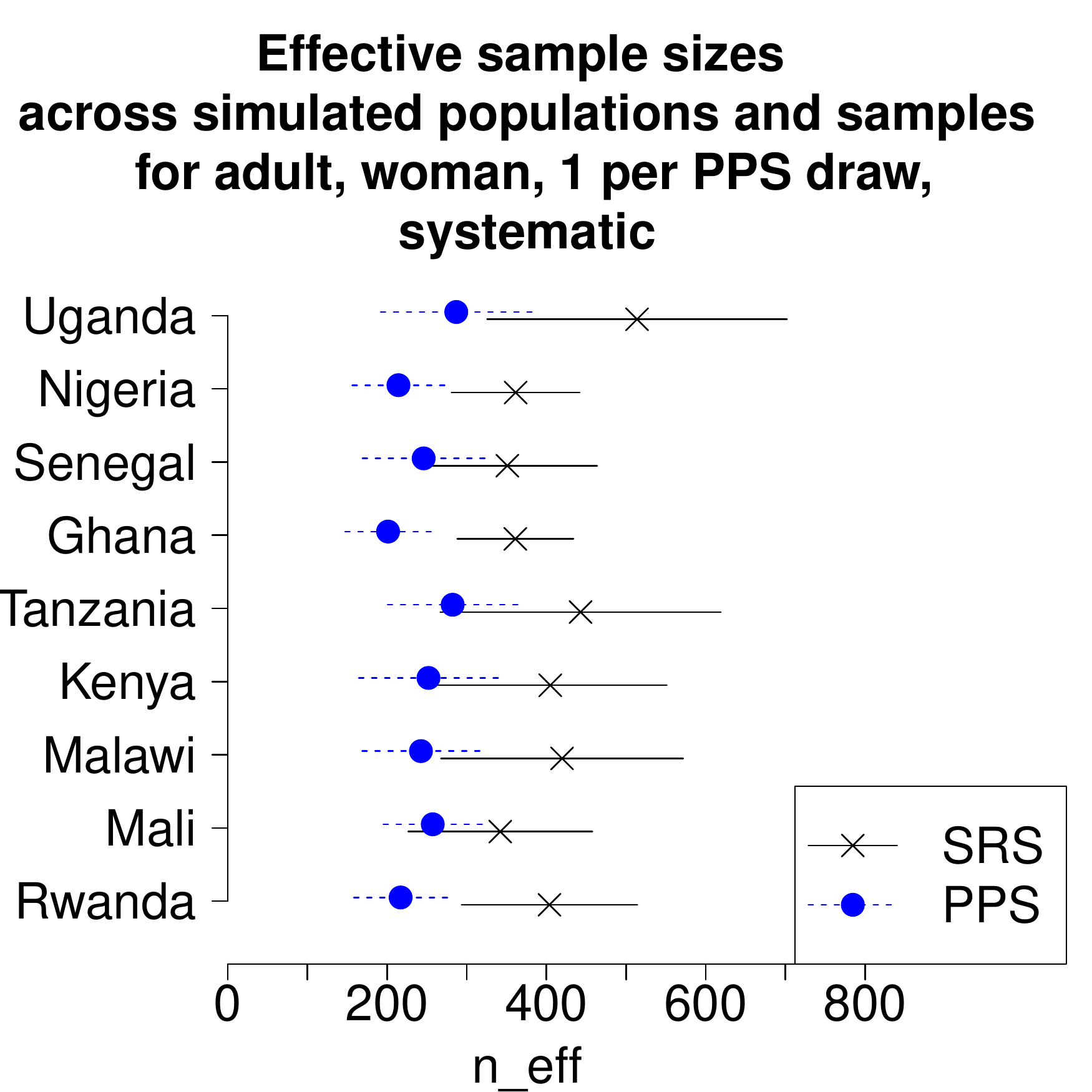}} 
\caption[]{Model-based adult module results}\label{adult_sampling_results_model_based_systematic}
\end{figure}

\begin{figure}[h!]
    \centering
\subcaptionbox{\label{deff_RDT_under_5_1_per_PPS_draw_model_based_systematic}}
 [0.49\textwidth]{\includegraphics[width=0.34\textwidth]{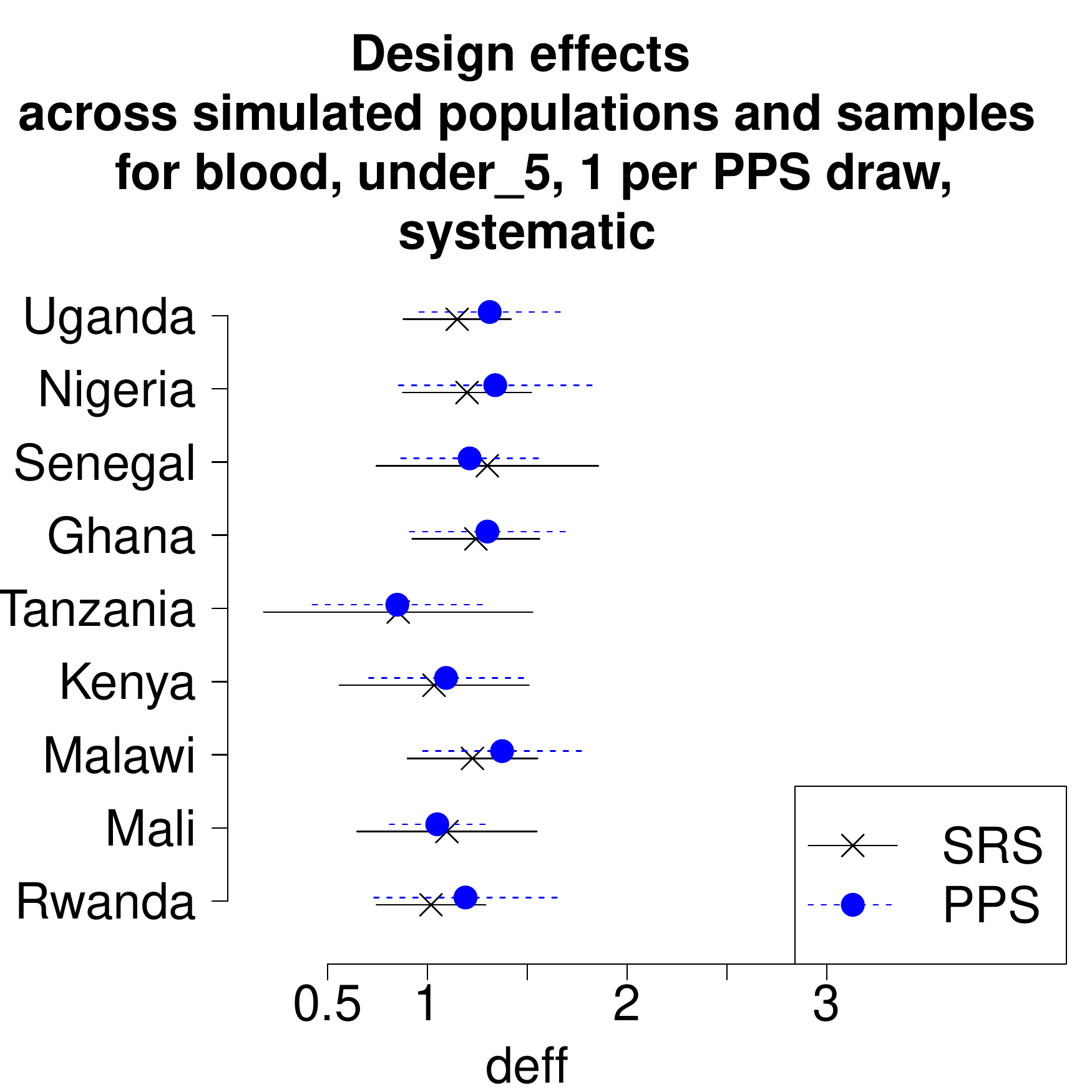}}
\subcaptionbox{\label{deff_RDT_school_age_1_per_PPS_draw_model_based_systematic}}
 [0.49\textwidth]{\includegraphics[width=0.34\textwidth]{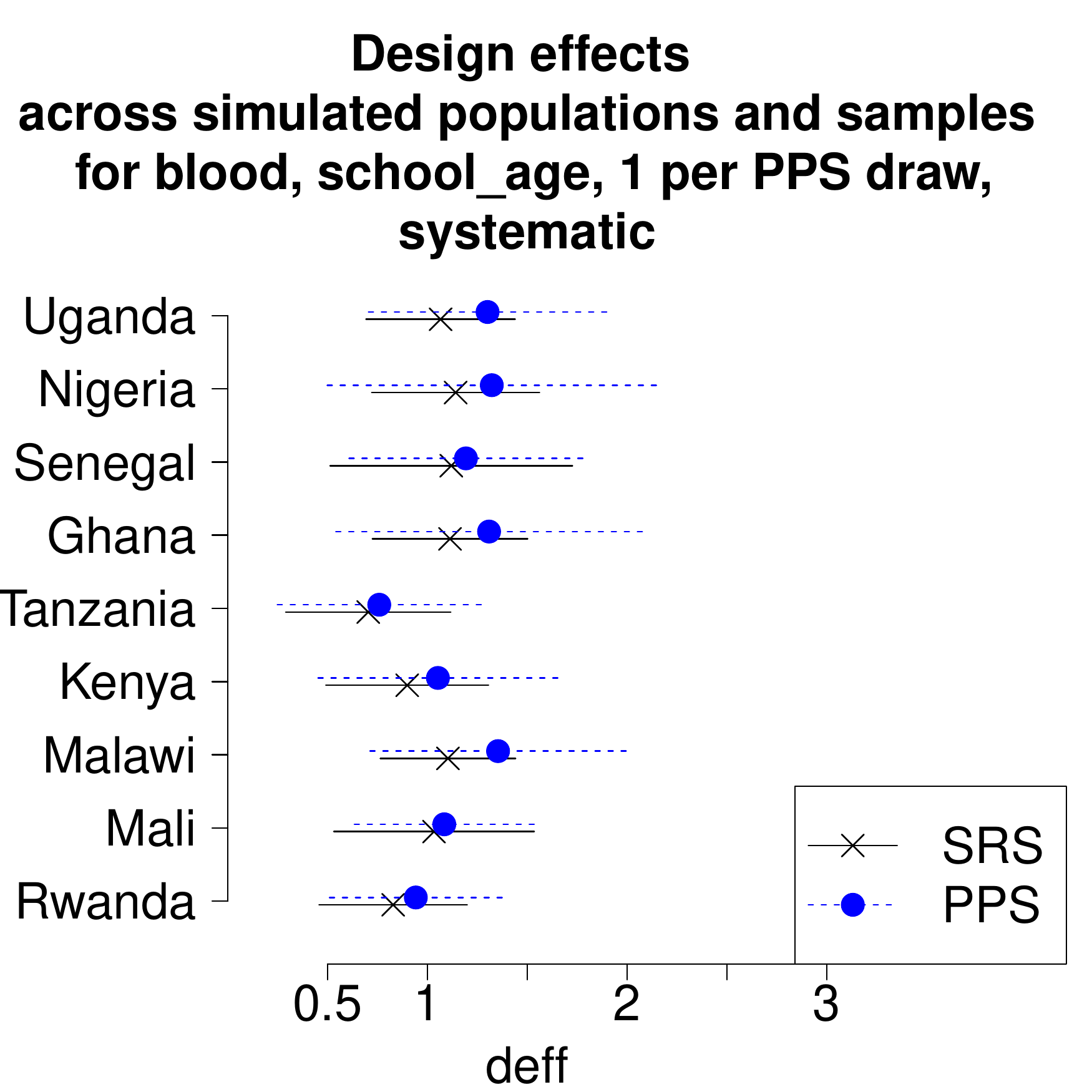}} \\
\subcaptionbox{\label{n_RDT_under_5_1_per_PPS_draw_systematic}}
 [0.49\textwidth]{\includegraphics[width=0.34\textwidth]{n_RDT_under_5_1_per_PPS_draw_systematic.pdf}}
\subcaptionbox{\label{n_RDT_school_age_1_per_PPS_draw_systematic}}
 [0.49\textwidth]{\includegraphics[width=0.34\textwidth]{n_RDT_school_age_1_per_PPS_draw_systematic.pdf}} \\
\subcaptionbox{\label{n_eff_RDT_under_5_1_per_PPS_draw_model_based_systematic}}
 [0.49\textwidth]{\includegraphics[width=0.34\textwidth]{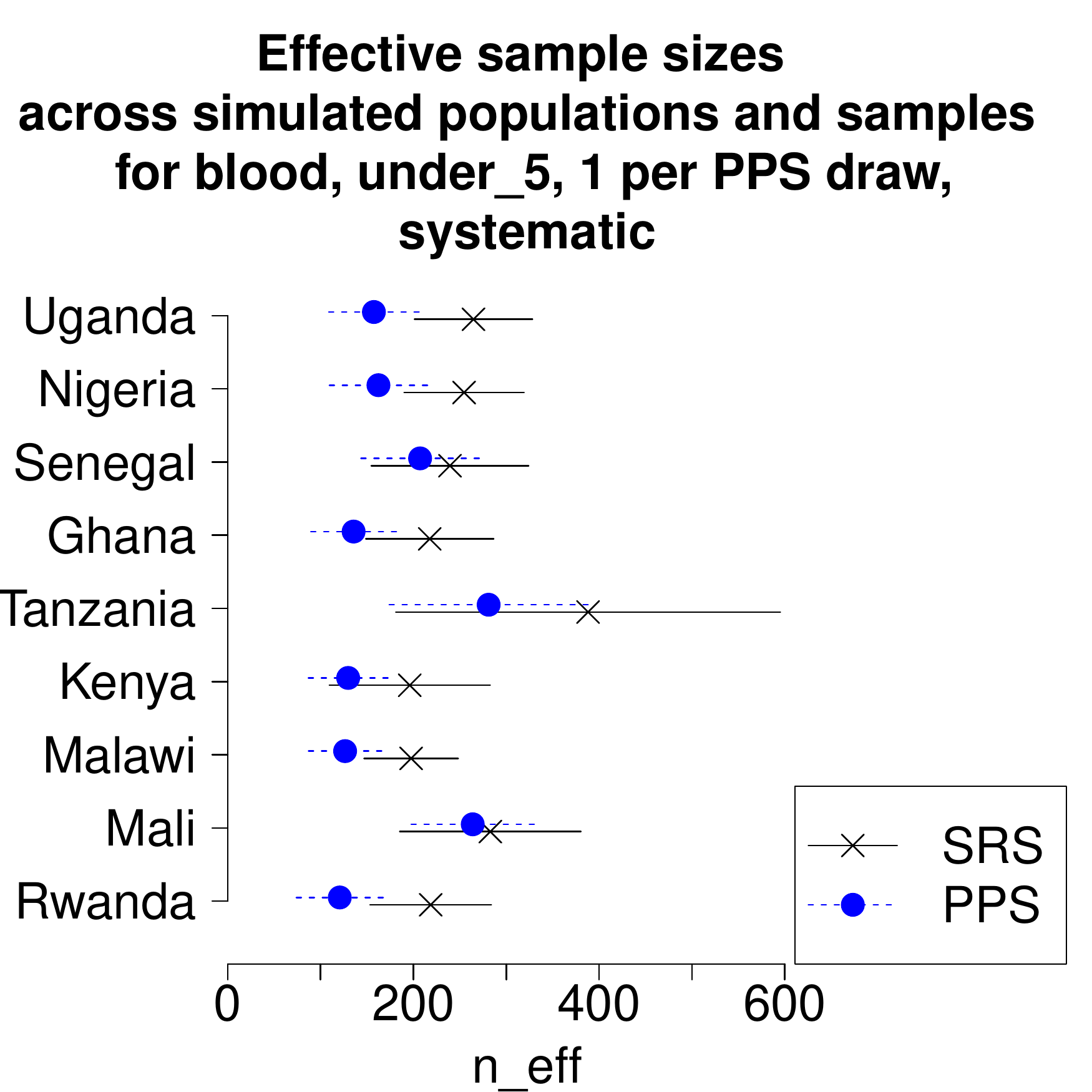}}
\subcaptionbox{\label{n_eff_RDT_school_age_1_per_PPS_draw_model_based_systematic}}
 [0.49\textwidth]{\includegraphics[width=0.34\textwidth]{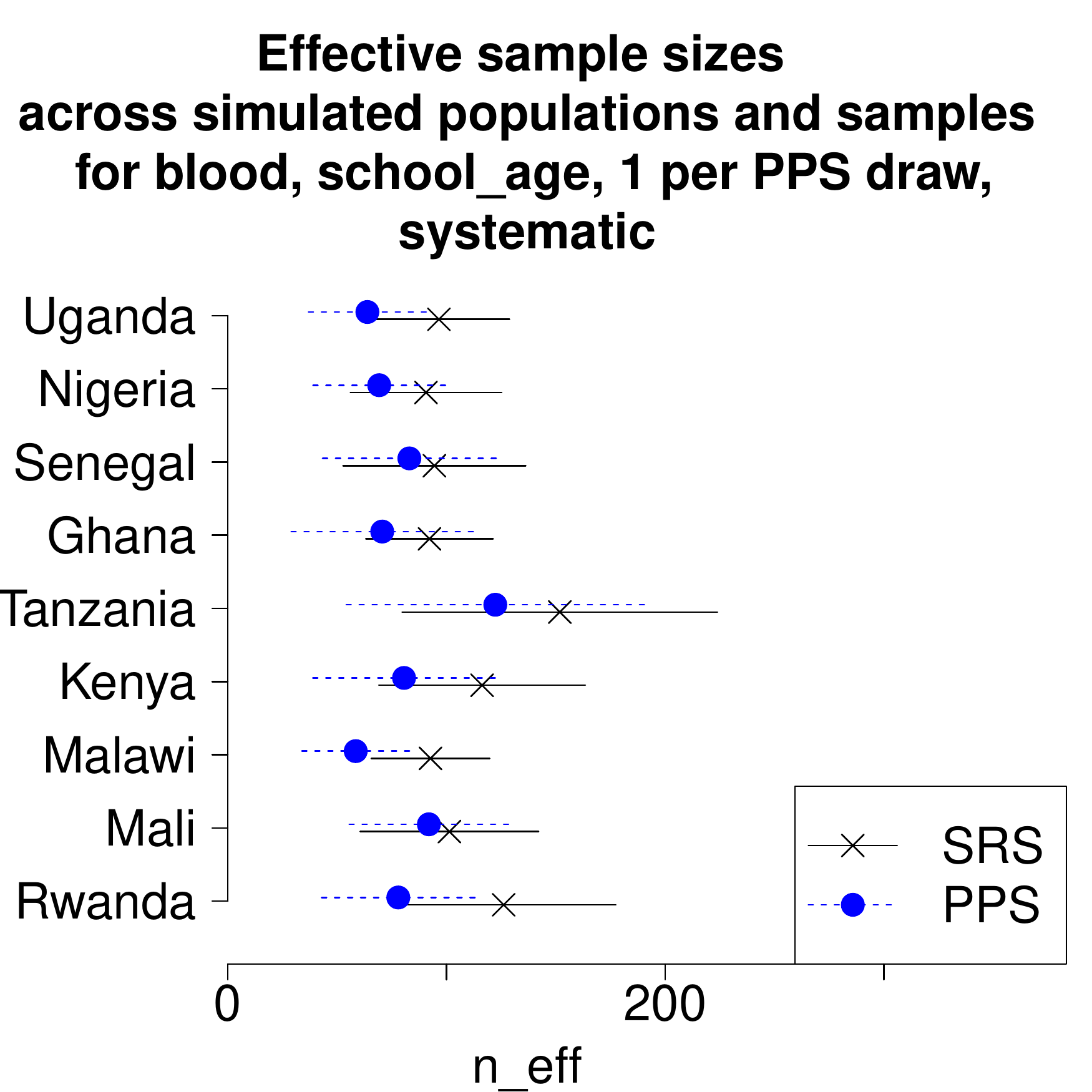}} 
\caption[]{Model-based blood (malaria and anemia) module results: under 5 and school-age children}\label{RDT_children_sampling_results_model_based_systematic}
\end{figure}

\begin{figure}[h!]
    \centering
\subcaptionbox{\label{deff_RDT_man_1_per_PPS_draw_model_based_systematic}}
 [0.49\textwidth]{\includegraphics[width=0.34\textwidth]{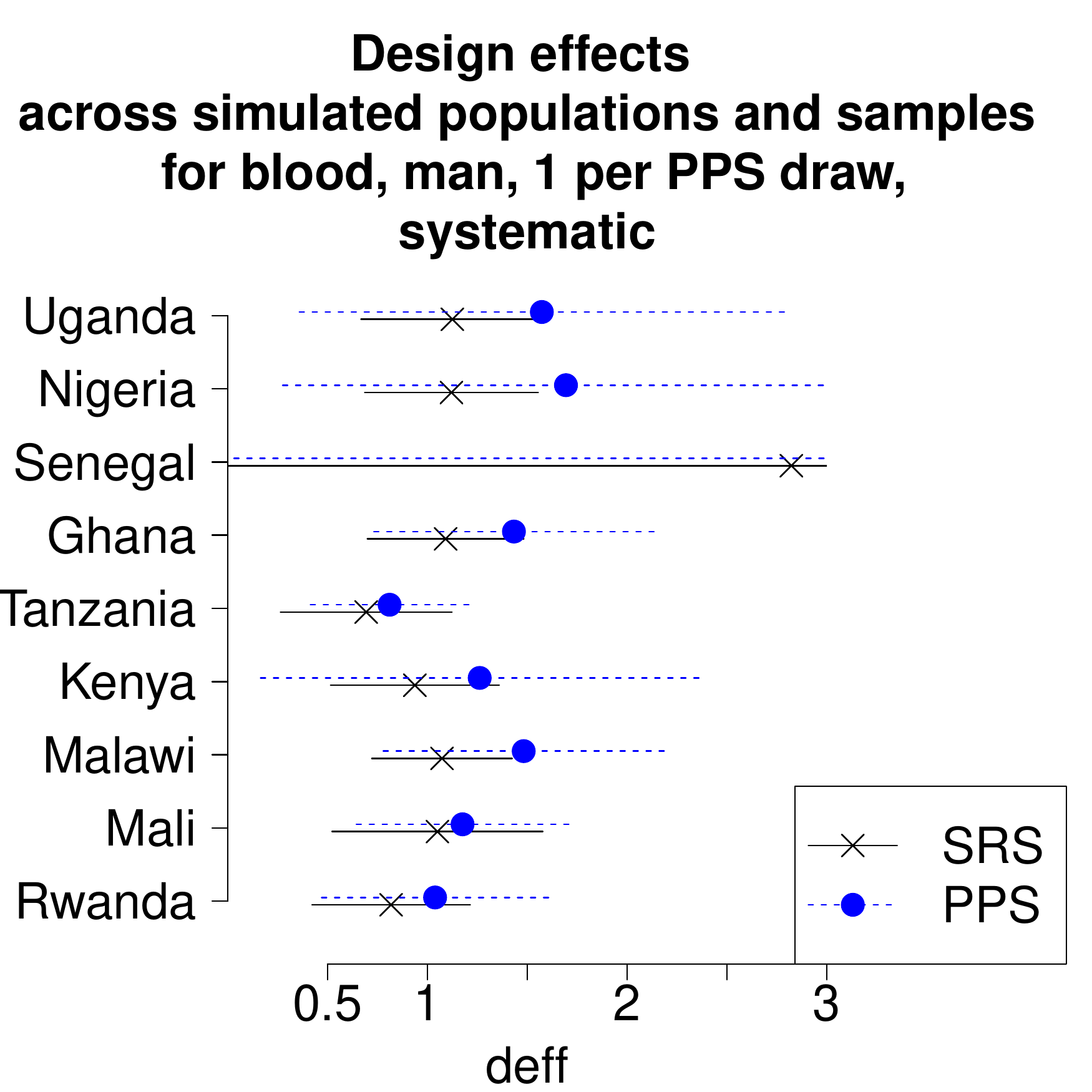}}
\subcaptionbox{\label{deff_RDT_woman_1_per_PPS_draw_model_based_systematic}}
 [0.49\textwidth]{\includegraphics[width=0.34\textwidth]{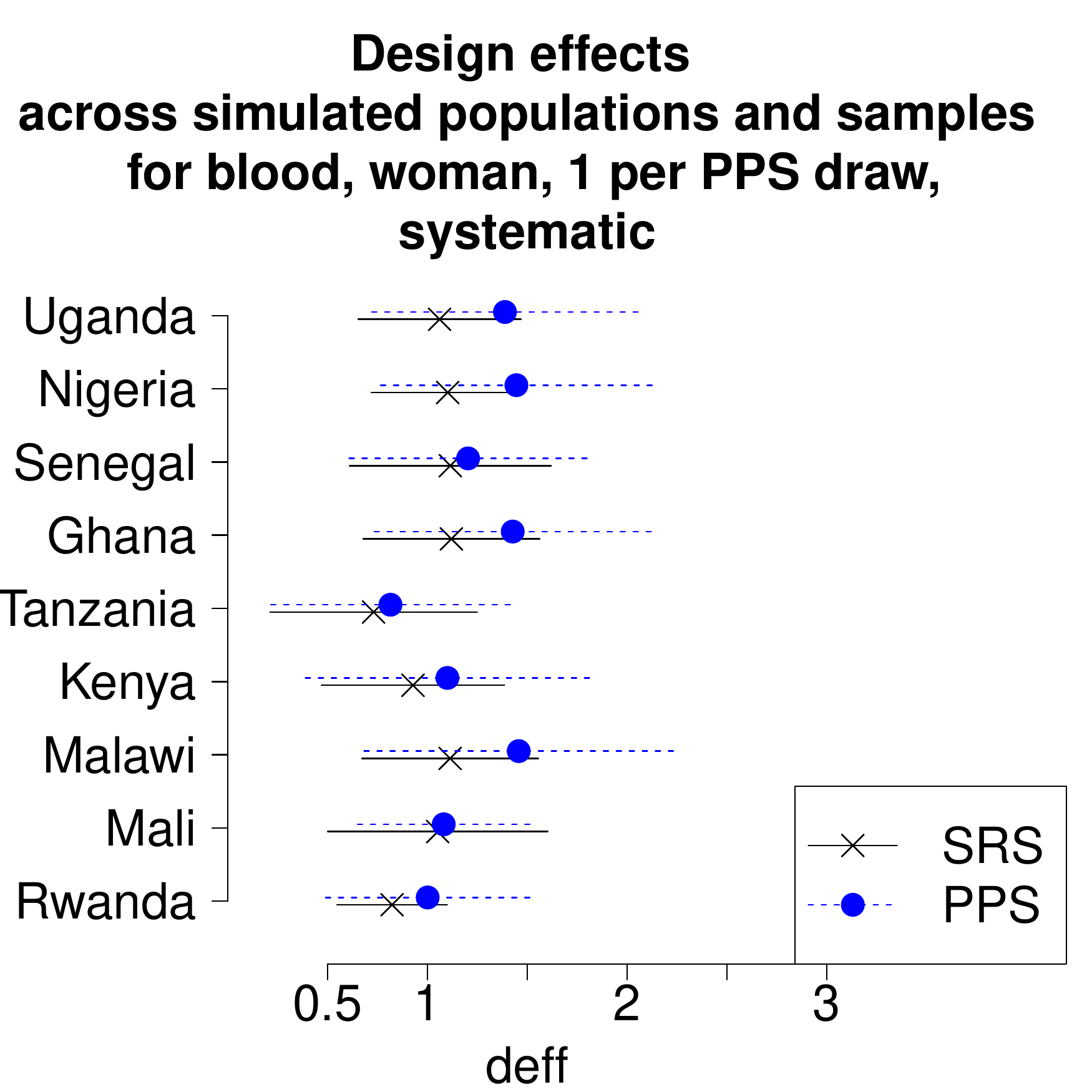}} \\
\subcaptionbox{\label{n_RDT_man_1_per_PPS_draw_systematic}}
 [0.49\textwidth]{\includegraphics[width=0.34\textwidth]{n_RDT_man_1_per_PPS_draw_systematic.pdf}}
\subcaptionbox{\label{n_RDT_woman_1_per_PPS_draw_systematic}}
 [0.49\textwidth]{\includegraphics[width=0.34\textwidth]{n_RDT_woman_1_per_PPS_draw_systematic.pdf}} \\
\subcaptionbox{\label{n_eff_RDT_man_1_per_PPS_draw_model_based_systematic}}
 [0.49\textwidth]{\includegraphics[width=0.34\textwidth]{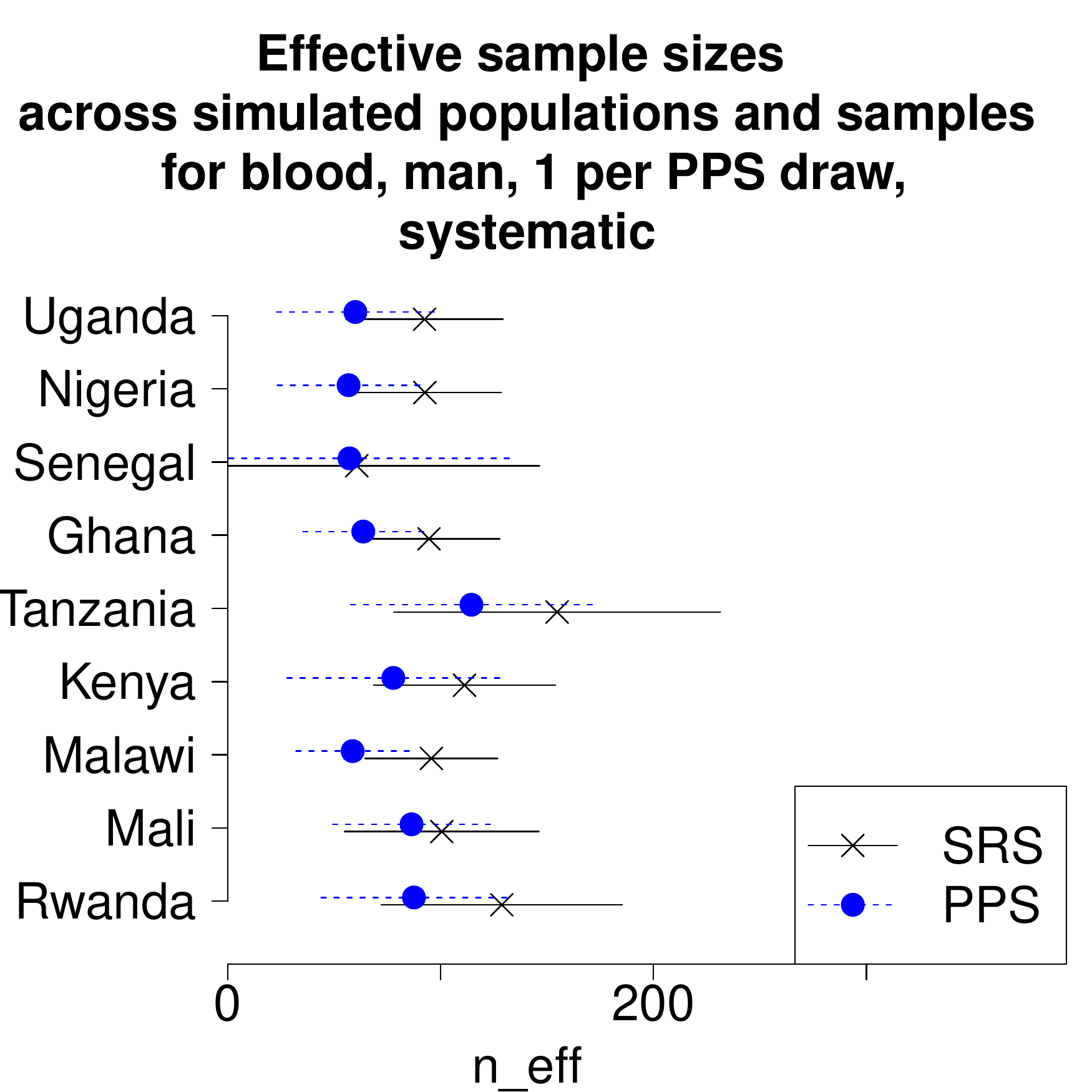}}
\subcaptionbox{\label{n_eff_RDT_woman_1_per_PPS_draw_model_based_systematic}}
 [0.49\textwidth]{\includegraphics[width=0.34\textwidth]{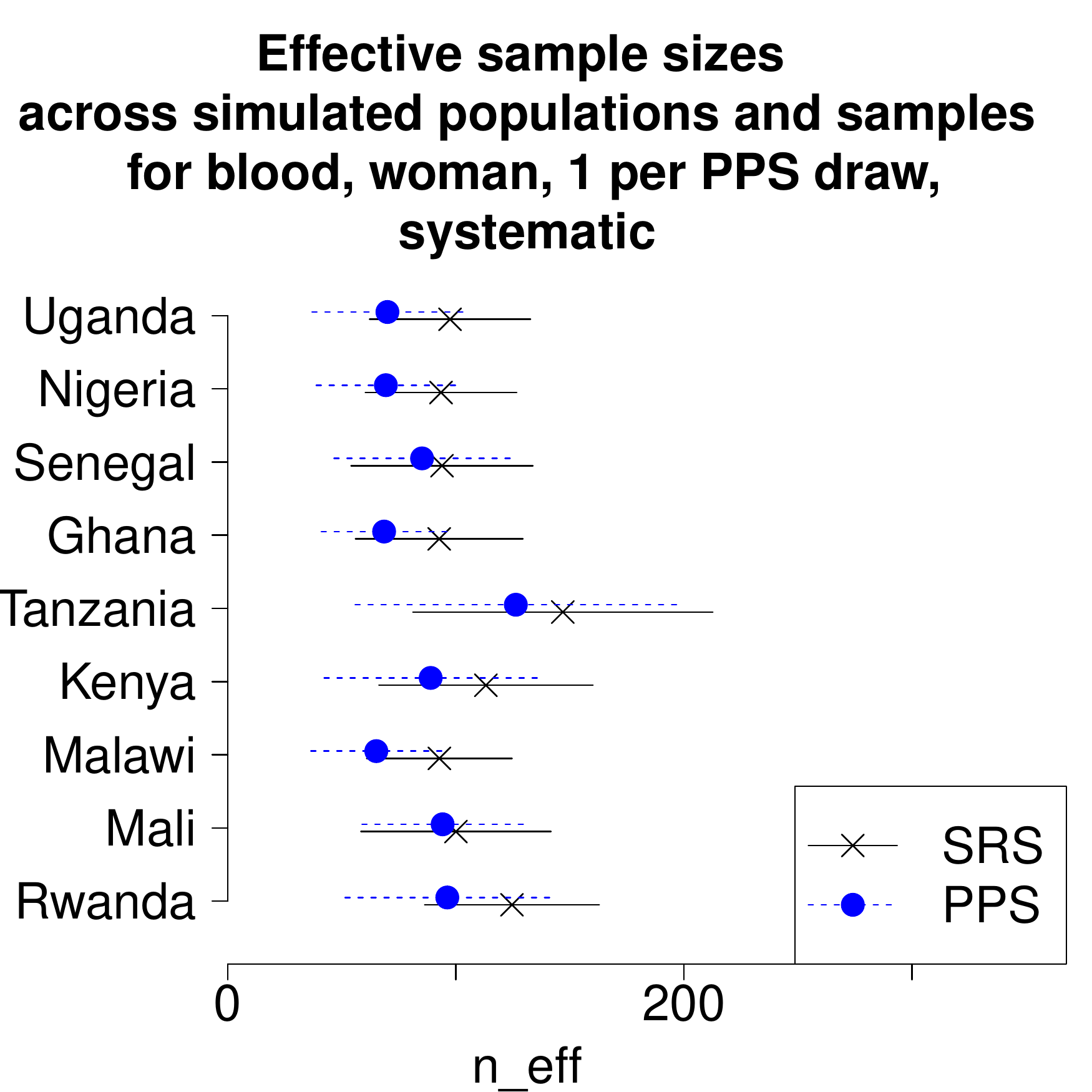}} 
\caption[]{Model-based blood (malaria and anemia) module results: men and women}\label{RDT_man_woman_sampling_results_model_based_systematic}
\end{figure}

\begin{figure}[h!]
    \centering
\subcaptionbox{\label{deff_anthro_under_5_1_per_PPS_draw_model_based_systematic}}
 [0.49\textwidth]{\includegraphics[width=0.34\textwidth]{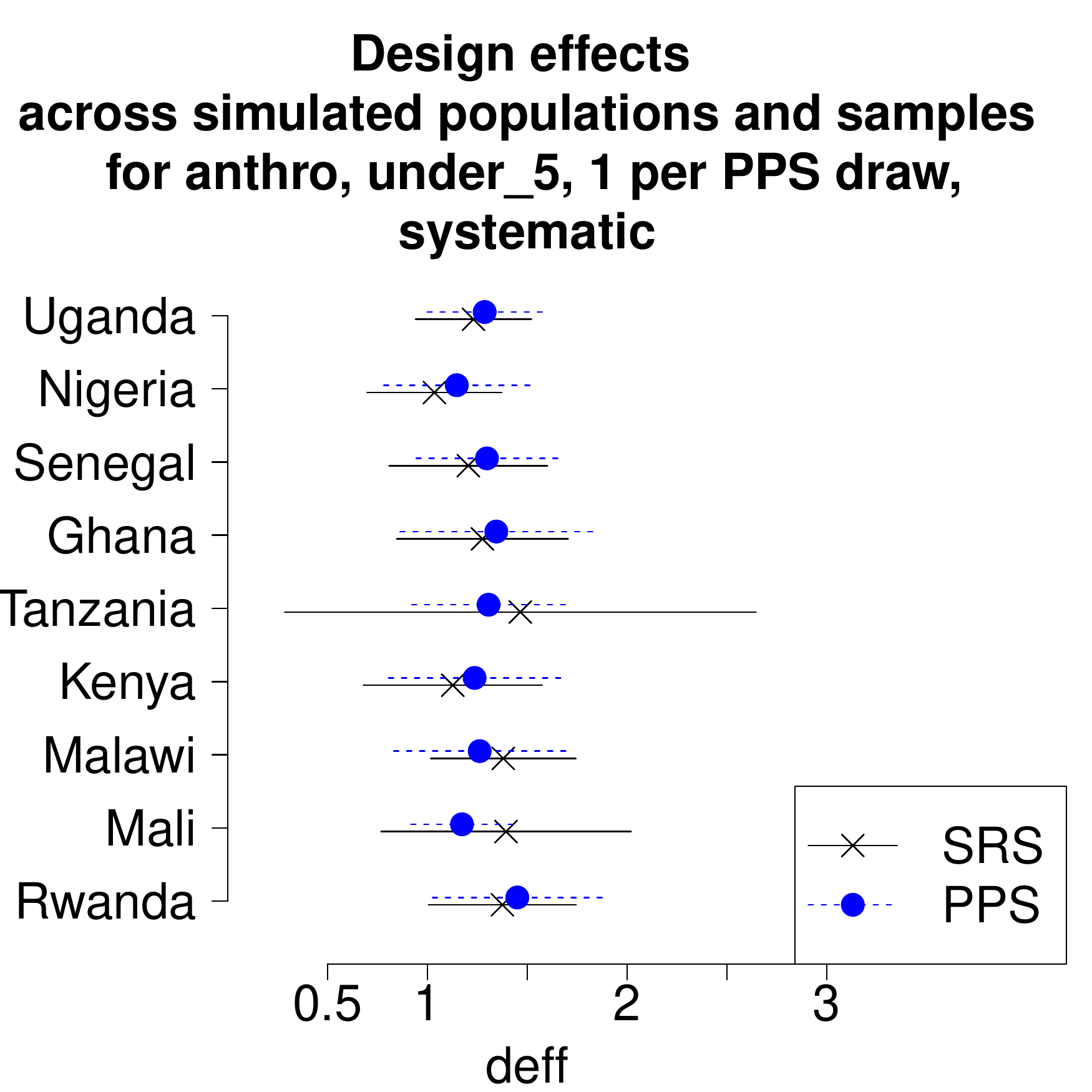}} \\
\subcaptionbox{\label{n_anthro_under_5_1_per_PPS_draw_systematic}}
 [0.49\textwidth]{\includegraphics[width=0.34\textwidth]{n_anthro_under_5_1_per_PPS_draw_systematic.pdf}} \\
\subcaptionbox{\label{n_eff_anthro_under_5_1_per_PPS_draw_model_based_systematic}}
 [0.49\textwidth]{\includegraphics[width=0.34\textwidth]{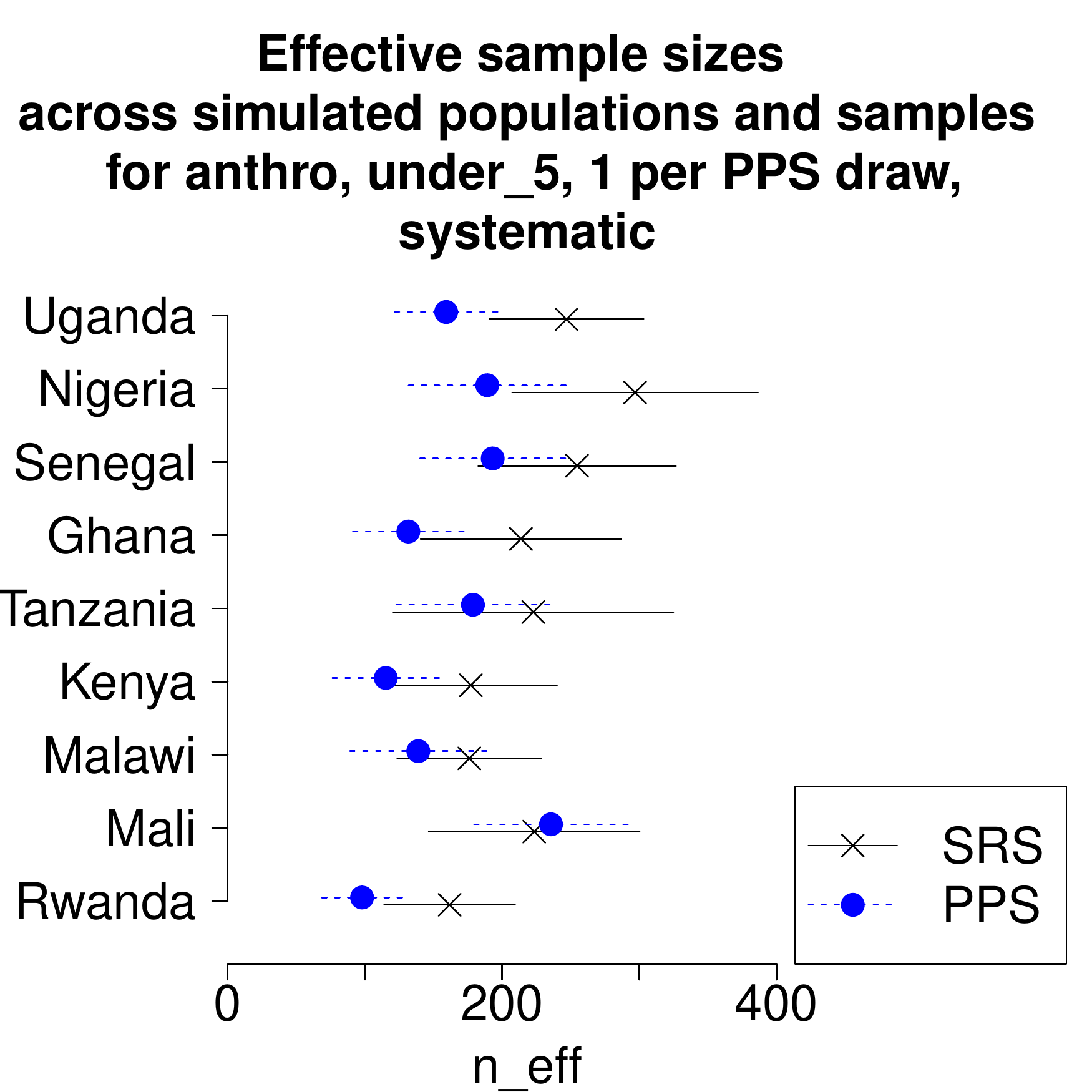}}
\caption[]{Model-based anthropometry module results}\label{anthro_sampling_results_model_based_systematic}
\end{figure}

\begin{figure}[h!]
    \centering
\subcaptionbox{\label{deff_adult_man_1_per_PPS_draw_model_based_stratified}}
 [0.49\textwidth]{\includegraphics[width=0.34\textwidth]{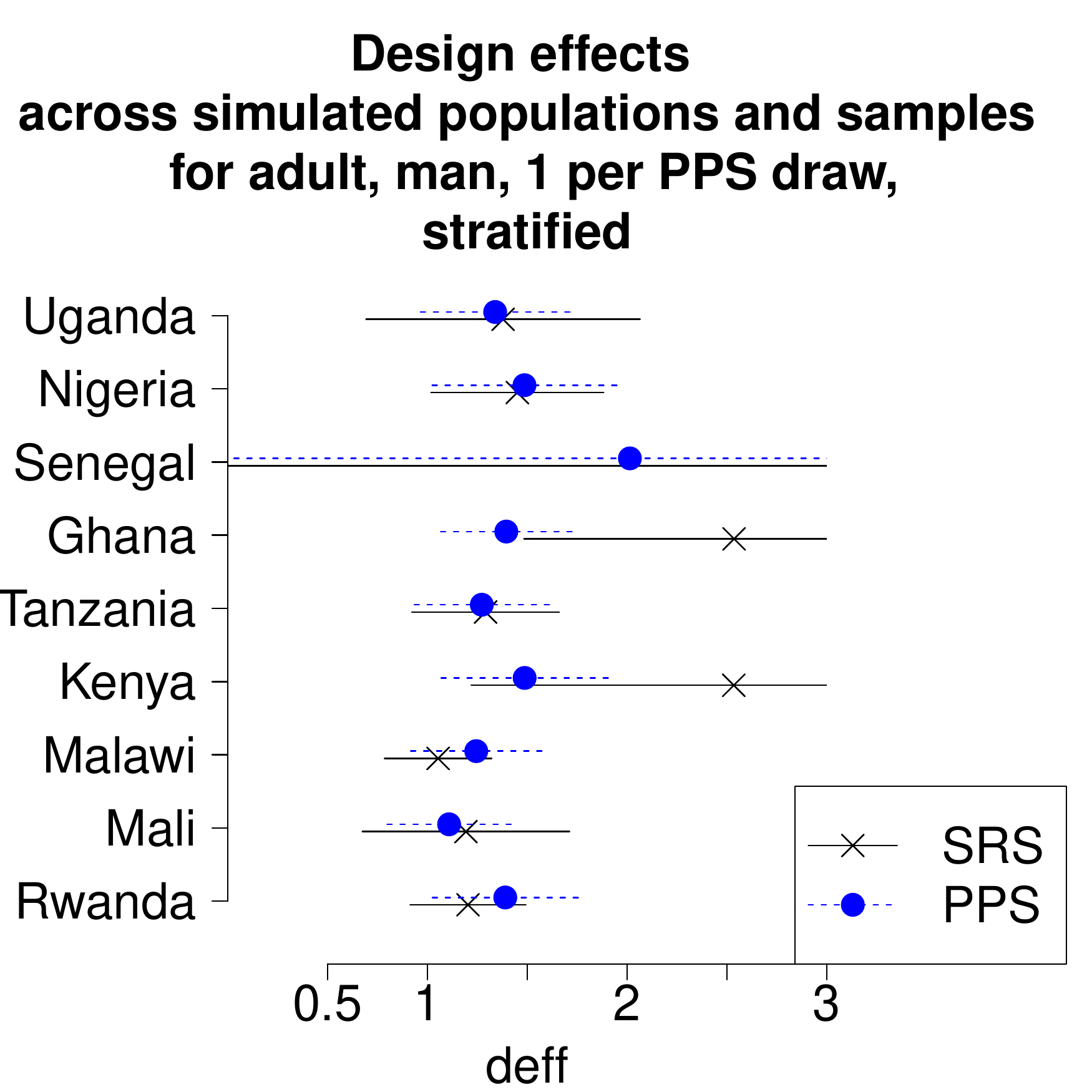}}
\subcaptionbox{\label{deff_adult_woman_1_per_PPS_draw_model_based_stratified}}
 [0.49\textwidth]{\includegraphics[width=0.34\textwidth]{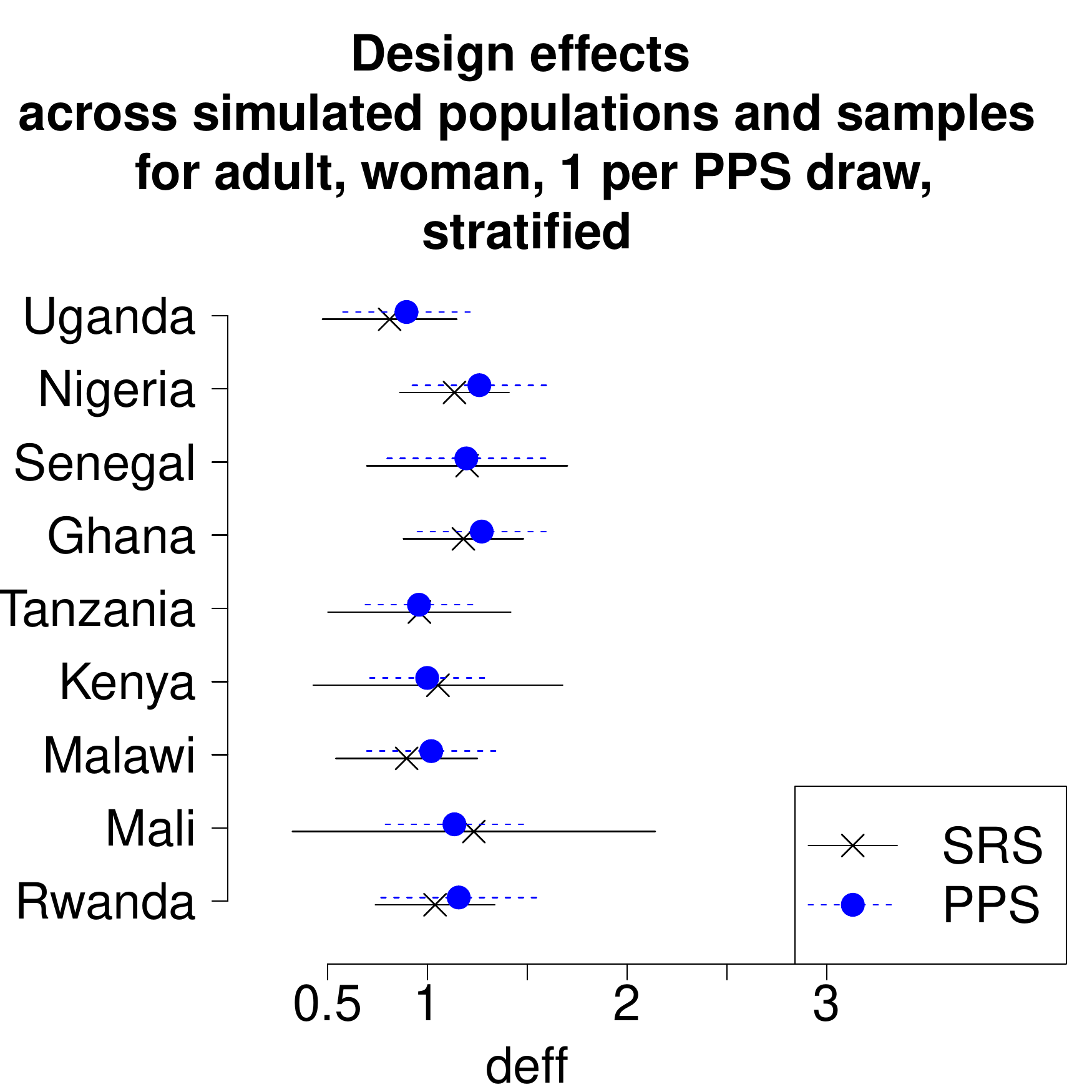}} \\
\subcaptionbox{\label{n_adult_man_1_per_PPS_draw_stratified}}
 [0.49\textwidth]{\includegraphics[width=0.34\textwidth]{n_adult_man_1_per_PPS_draw_stratified.pdf}}
\subcaptionbox{\label{n_adult_woman_1_per_PPS_draw_stratified}}
 [0.49\textwidth]{\includegraphics[width=0.34\textwidth]{n_adult_woman_1_per_PPS_draw_stratified.pdf}} \\
\subcaptionbox{\label{n_eff_adult_man_1_per_PPS_draw_model_based_stratified}}
 [0.49\textwidth]{\includegraphics[width=0.34\textwidth]{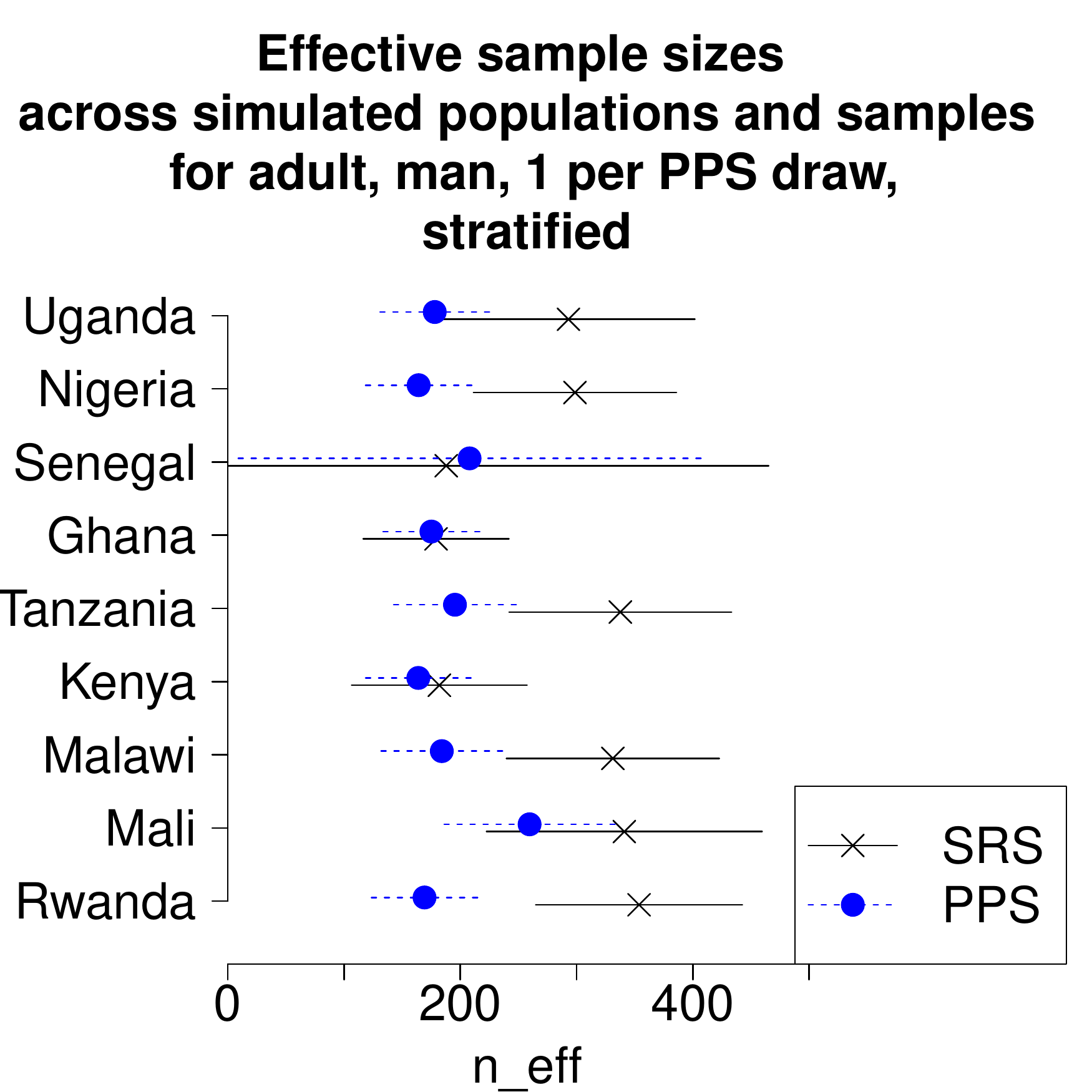}}
\subcaptionbox{\label{n_eff_adult_woman_1_per_PPS_draw_model_based_stratified}}
 [0.49\textwidth]{\includegraphics[width=0.34\textwidth]{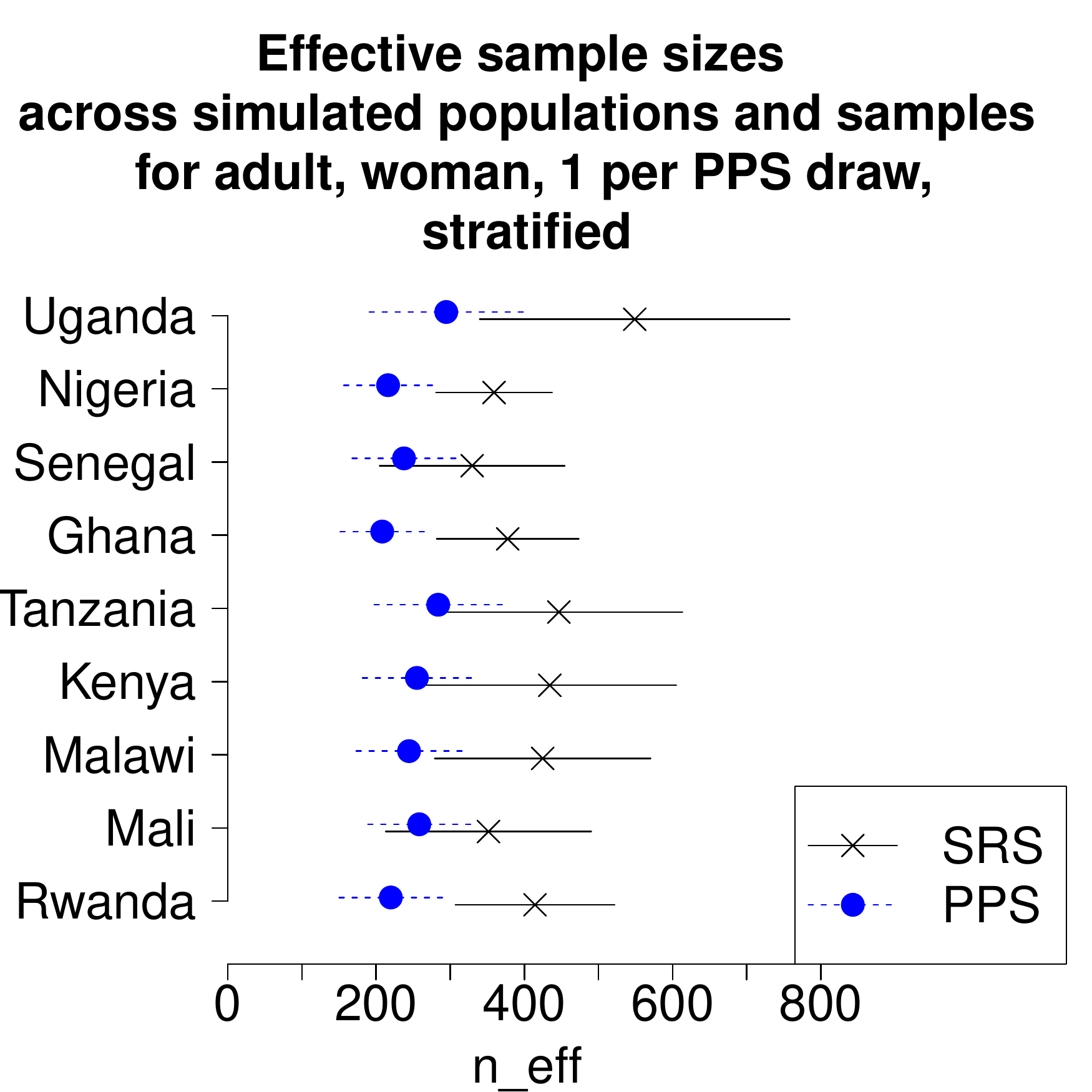}} 
\caption[]{Model-based adult module results}\label{adult_sampling_results_model_based_stratified}
\end{figure}

\begin{figure}[h!]
    \centering
\subcaptionbox{\label{deff_RDT_under_5_1_per_PPS_draw_model_based_stratified}}
 [0.49\textwidth]{\includegraphics[width=0.34\textwidth]{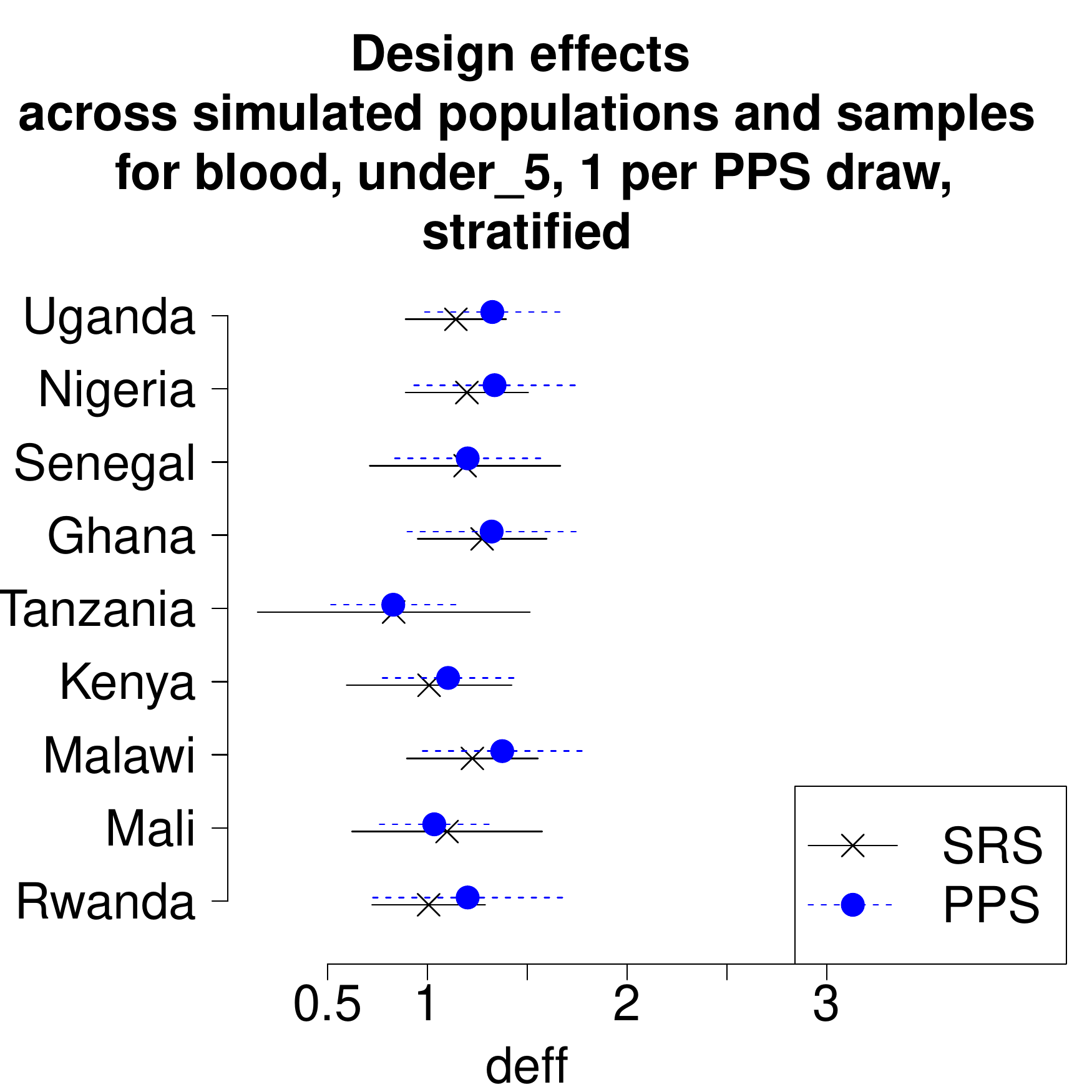}}
\subcaptionbox{\label{deff_RDT_school_age_1_per_PPS_draw_model_based_stratified}}
 [0.49\textwidth]{\includegraphics[width=0.34\textwidth]{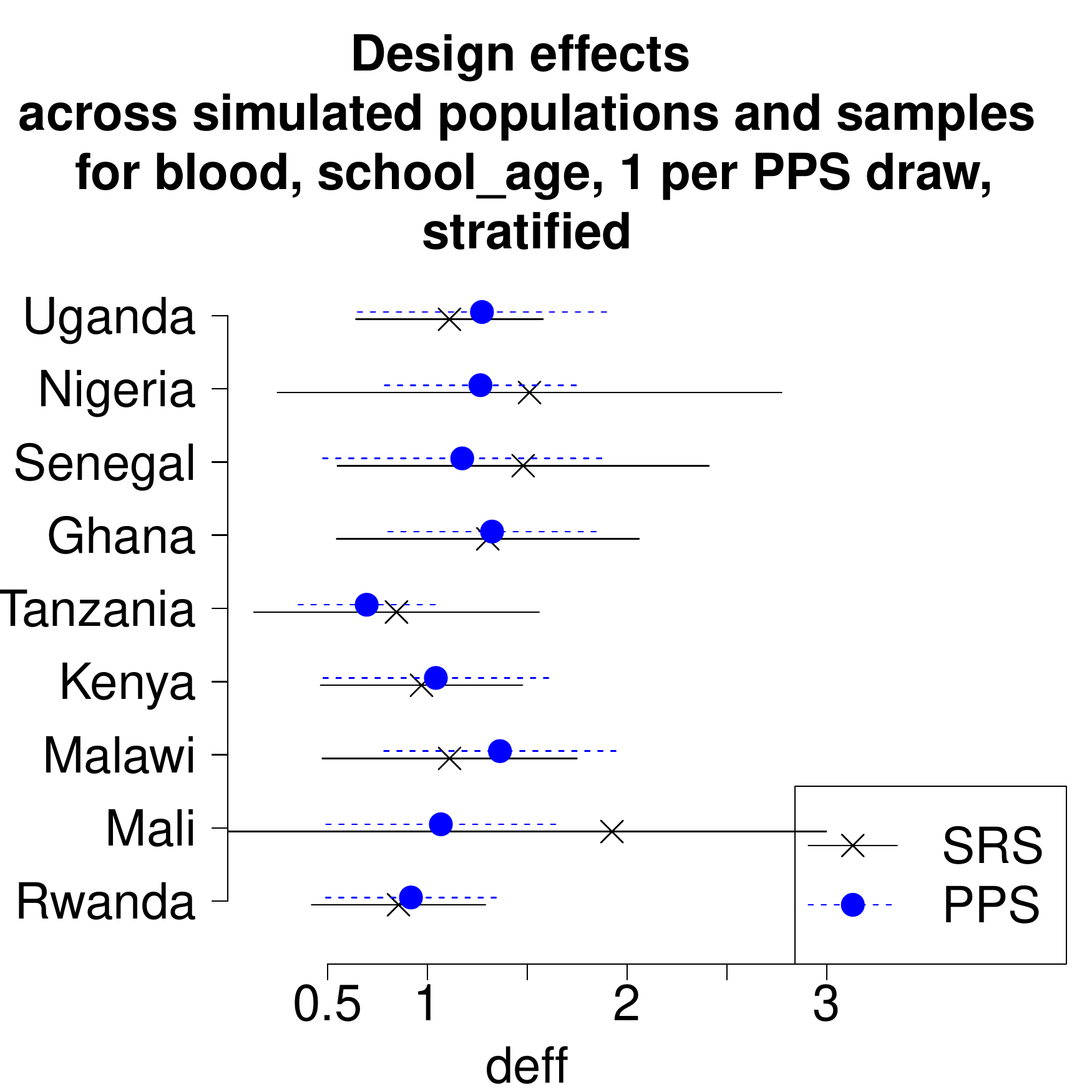}} \\
\subcaptionbox{\label{n_RDT_under_5_1_per_PPS_draw_stratified}}
 [0.49\textwidth]{\includegraphics[width=0.34\textwidth]{n_RDT_under_5_1_per_PPS_draw_stratified.pdf}}
\subcaptionbox{\label{n_RDT_school_age_1_per_PPS_draw_stratified}}
 [0.49\textwidth]{\includegraphics[width=0.34\textwidth]{n_RDT_school_age_1_per_PPS_draw_stratified.pdf}} \\
\subcaptionbox{\label{n_eff_RDT_under_5_1_per_PPS_draw_model_based_stratified}}
 [0.49\textwidth]{\includegraphics[width=0.34\textwidth]{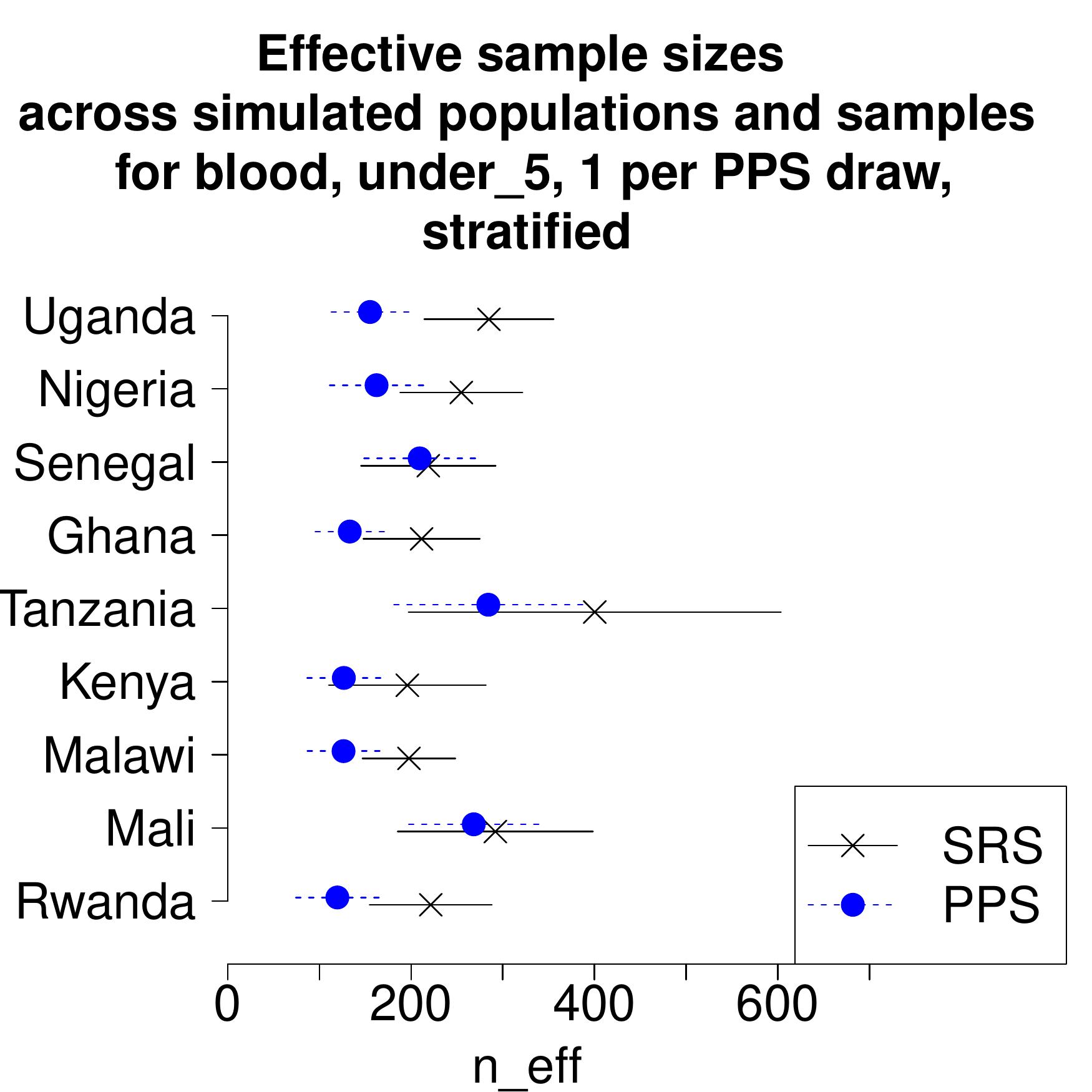}}
\subcaptionbox{\label{n_eff_RDT_school_age_1_per_PPS_draw_model_based_stratified}}
 [0.49\textwidth]{\includegraphics[width=0.34\textwidth]{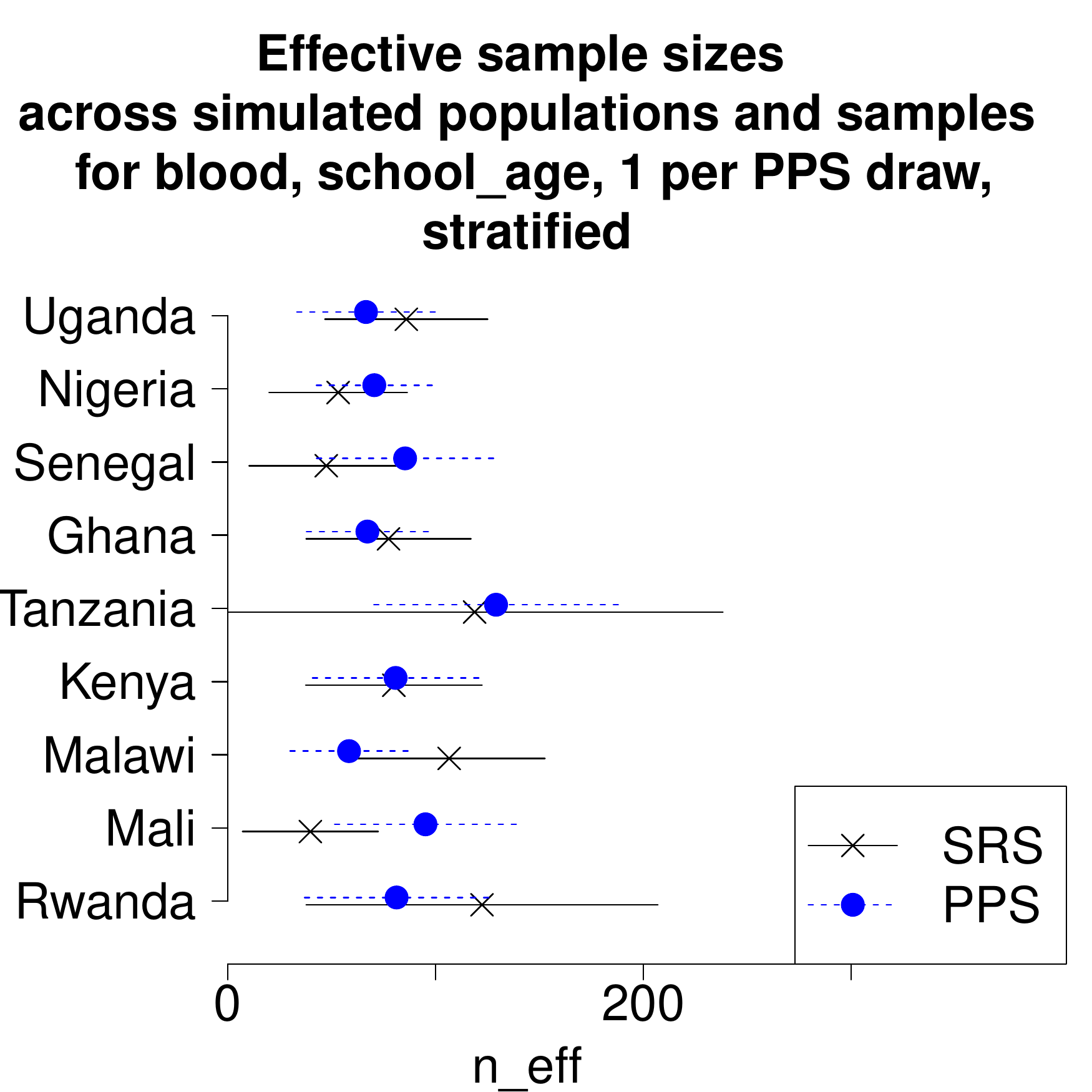}} 
\caption[]{Model-based blood (malaria and anemia) module results: under 5 and school-age children}\label{RDT_children_sampling_results_model_based_stratified}
\end{figure}

\begin{figure}[h!]
    \centering
\subcaptionbox{\label{deff_RDT_man_1_per_PPS_draw_model_based_stratified}}
 [0.49\textwidth]{\includegraphics[width=0.34\textwidth]{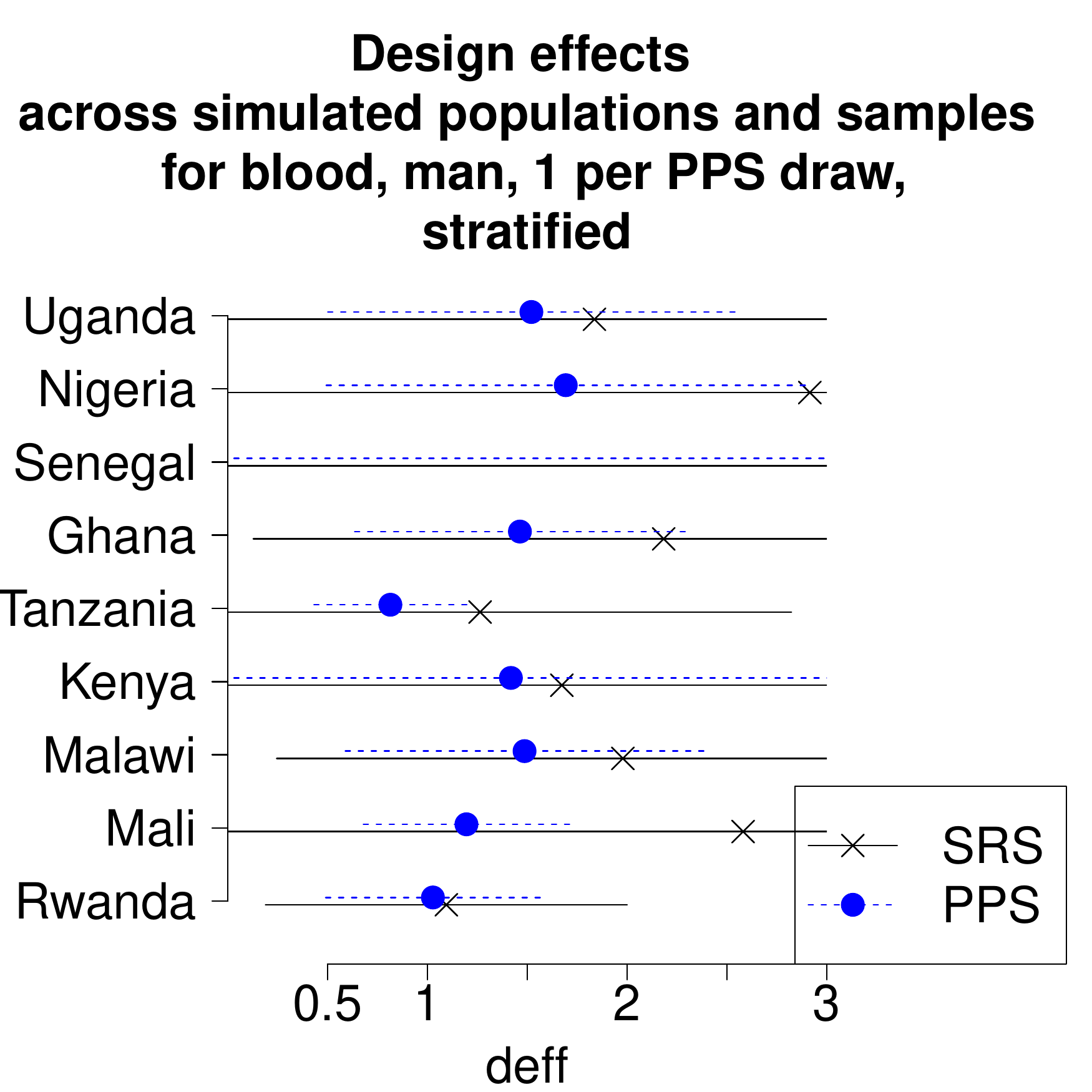}}
\subcaptionbox{\label{deff_RDT_woman_1_per_PPS_draw_model_based_stratified}}
 [0.49\textwidth]{\includegraphics[width=0.34\textwidth]{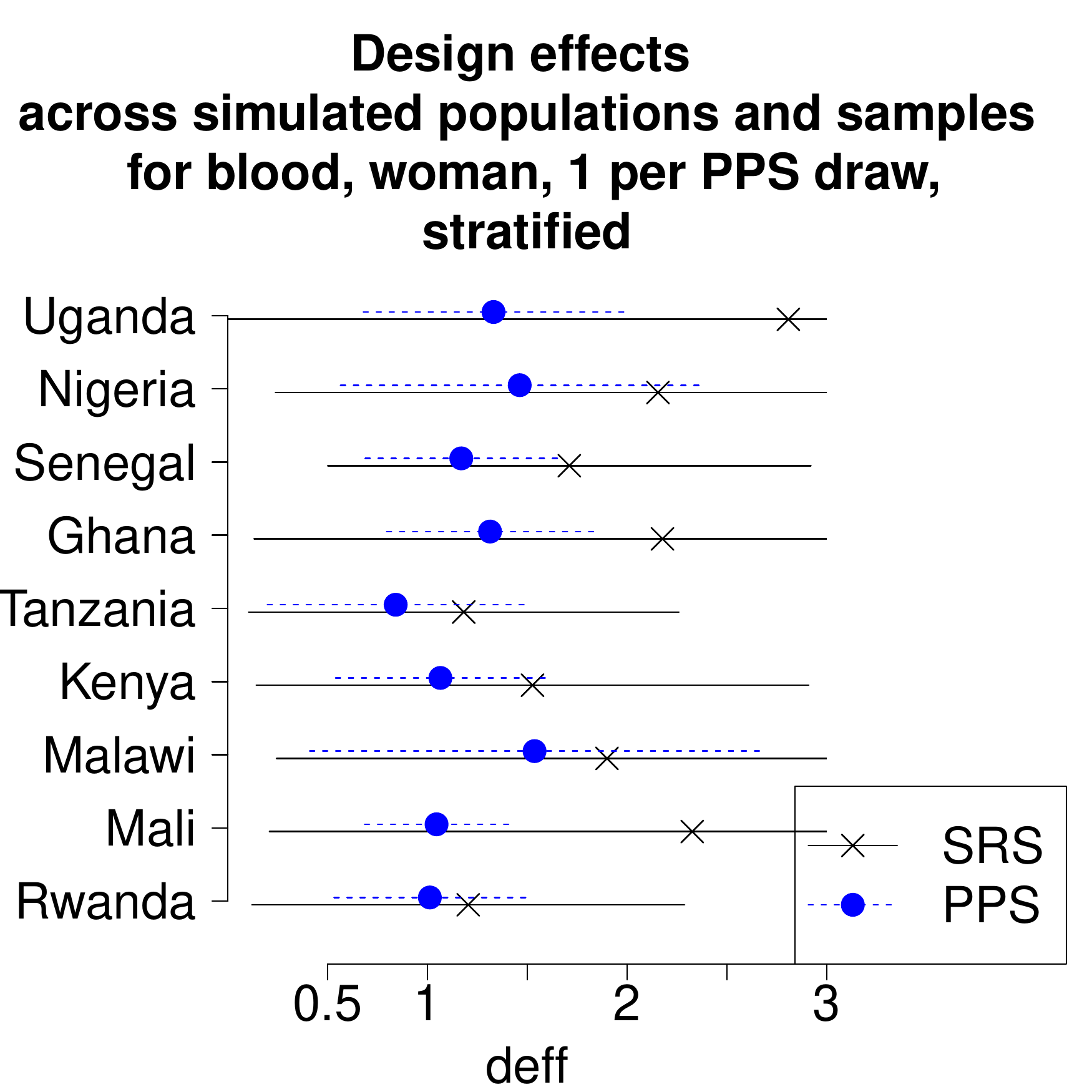}} \\
\subcaptionbox{\label{n_RDT_man_1_per_PPS_draw_stratified}}
 [0.49\textwidth]{\includegraphics[width=0.34\textwidth]{n_RDT_man_1_per_PPS_draw_stratified.pdf}}
\subcaptionbox{\label{n_RDT_woman_1_per_PPS_draw_stratified}}
 [0.49\textwidth]{\includegraphics[width=0.34\textwidth]{n_RDT_woman_1_per_PPS_draw_stratified.pdf}} \\
\subcaptionbox{\label{n_eff_RDT_man_1_per_PPS_draw_model_based_stratified}}
 [0.49\textwidth]{\includegraphics[width=0.34\textwidth]{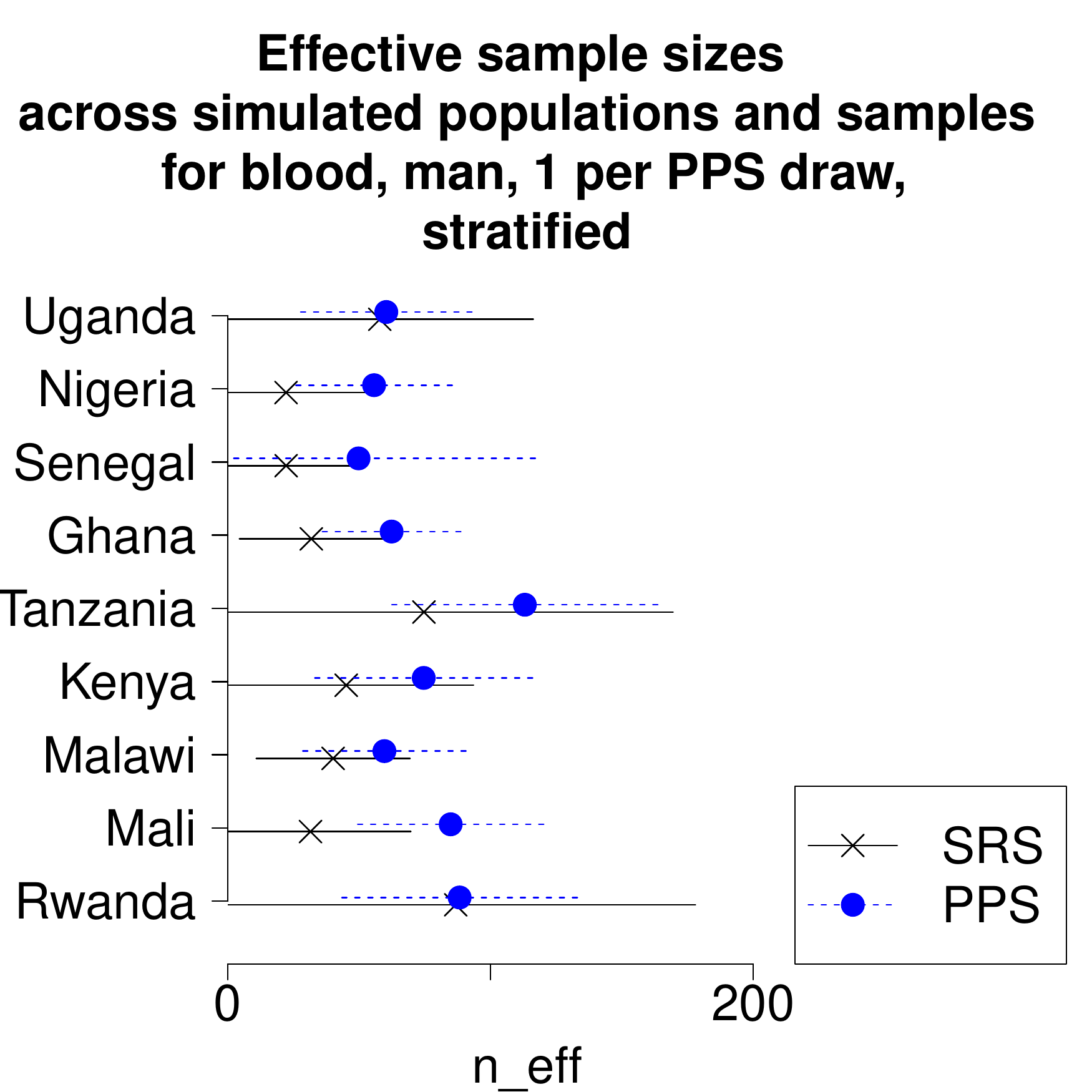}}
\subcaptionbox{\label{n_eff_RDT_woman_1_per_PPS_draw_model_based_stratified}}
 [0.49\textwidth]{\includegraphics[width=0.34\textwidth]{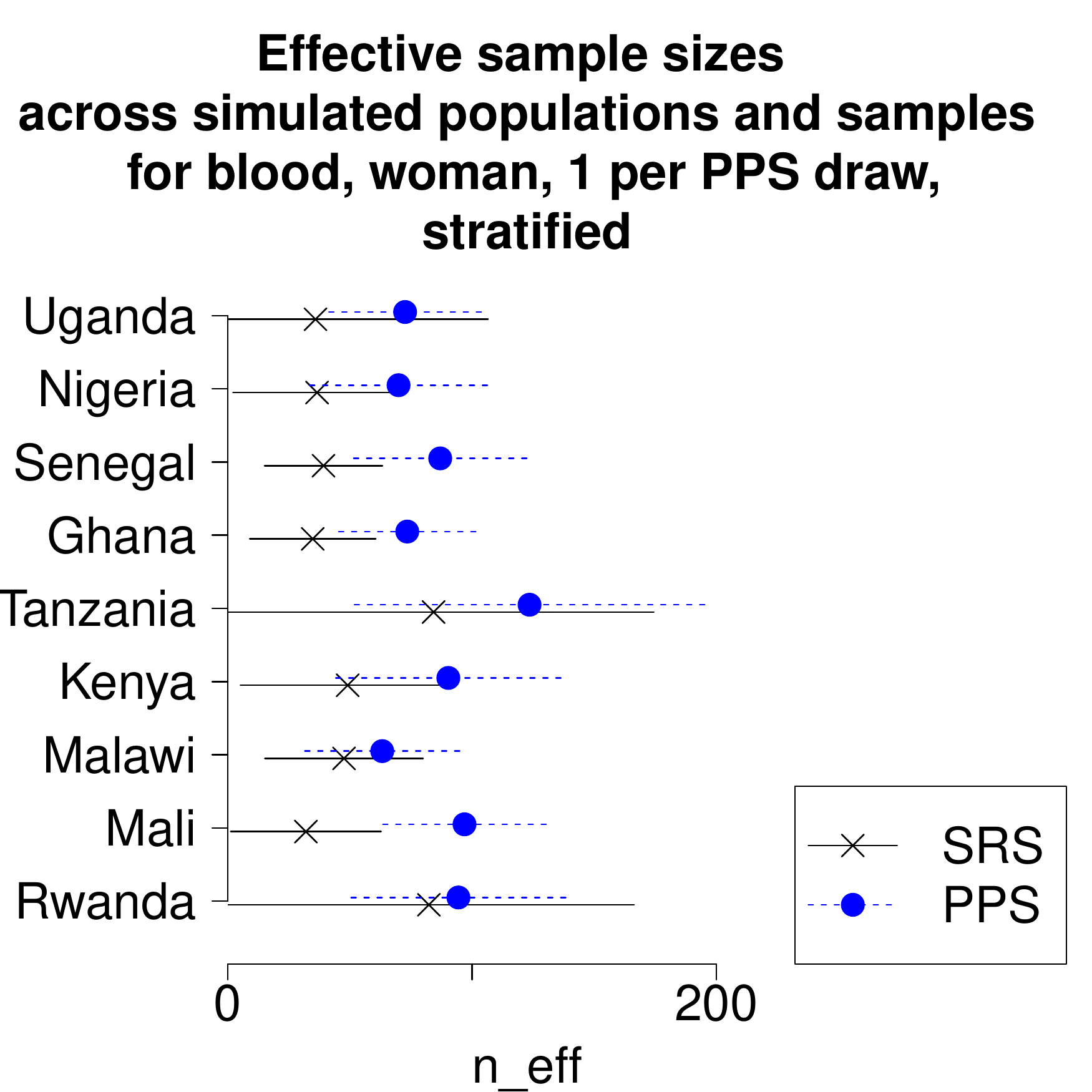}} 
\caption[]{Model-based blood (malaria and anemia) module results: men and women}\label{RDT_man_woman_sampling_results_model_based_stratified}
\end{figure}

\begin{figure}[h!]
    \centering
\subcaptionbox{\label{deff_anthro_under_5_1_per_PPS_draw_model_based_stratified}}
 [0.49\textwidth]{\includegraphics[width=0.34\textwidth]{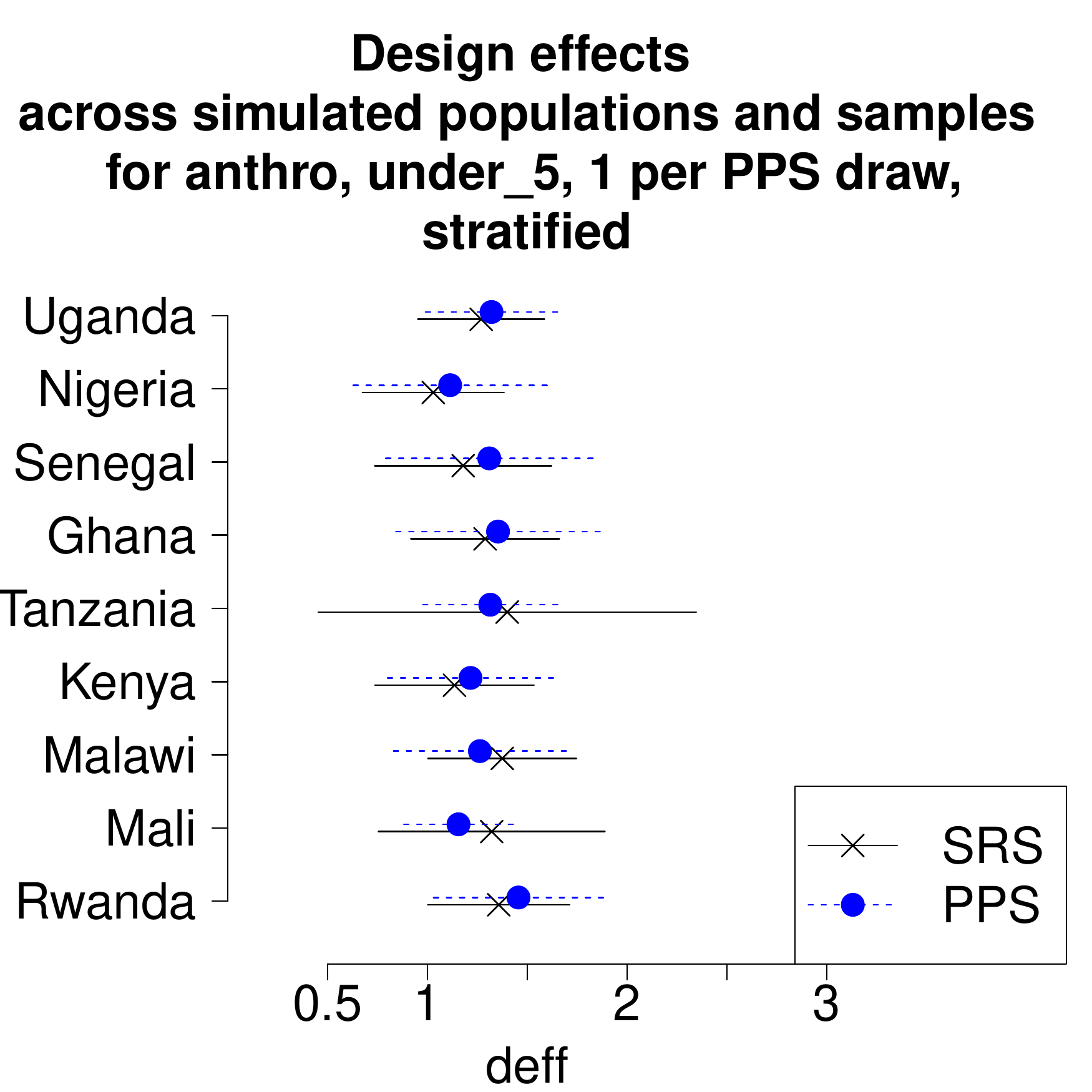}} \\
\subcaptionbox{\label{n_anthro_under_5_1_per_PPS_draw_stratified}}
 [0.49\textwidth]{\includegraphics[width=0.34\textwidth]{n_anthro_under_5_1_per_PPS_draw_stratified.pdf}} \\
\subcaptionbox{\label{n_eff_anthro_under_5_1_per_PPS_draw_model_based_stratified}}
 [0.49\textwidth]{\includegraphics[width=0.34\textwidth]{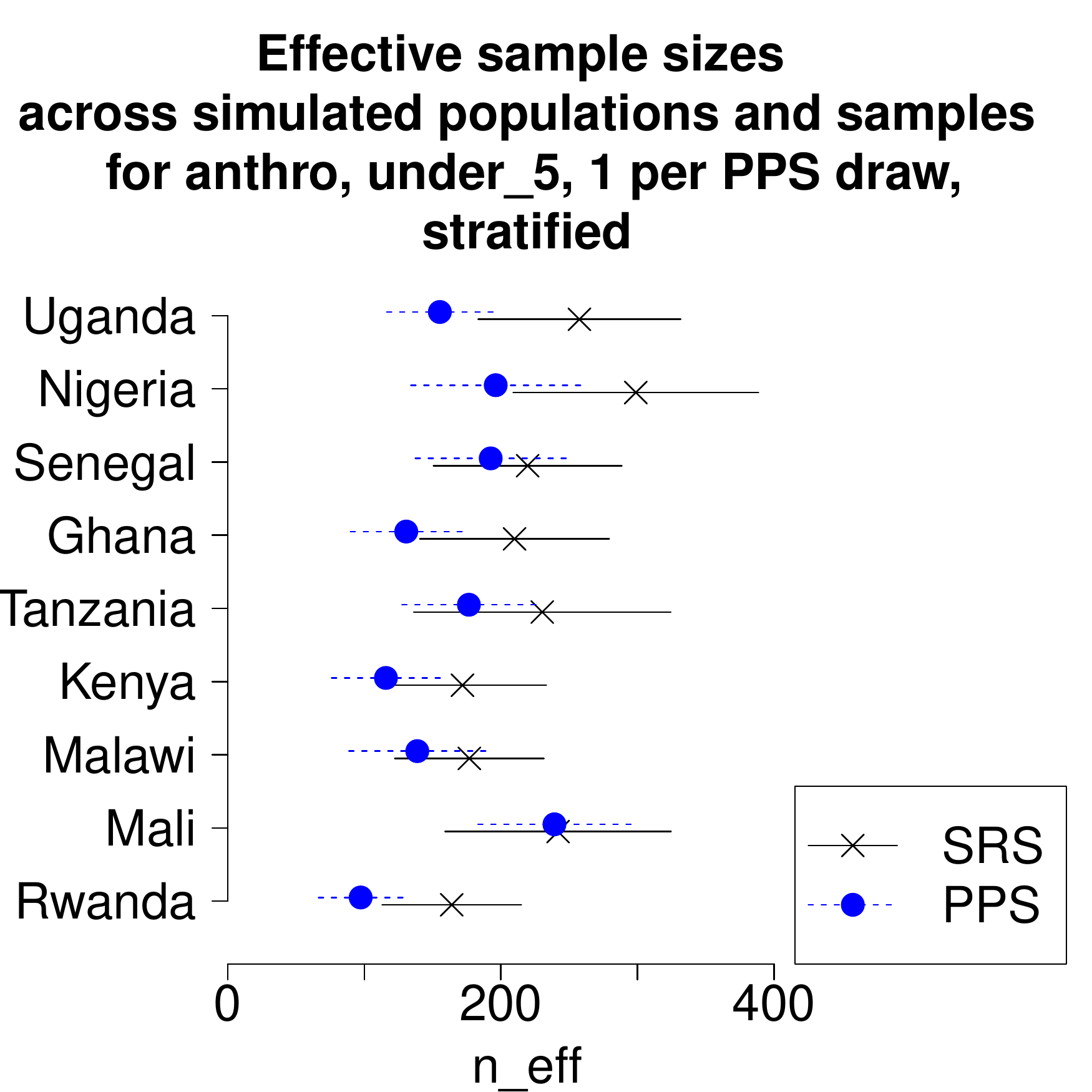}}
\caption[]{Model-based anthropometry module results}\label{anthro_sampling_results_model_based_stratified}
\end{figure}

\begin{figure}[h!]
\centering
\includegraphics[width = 8 cm]{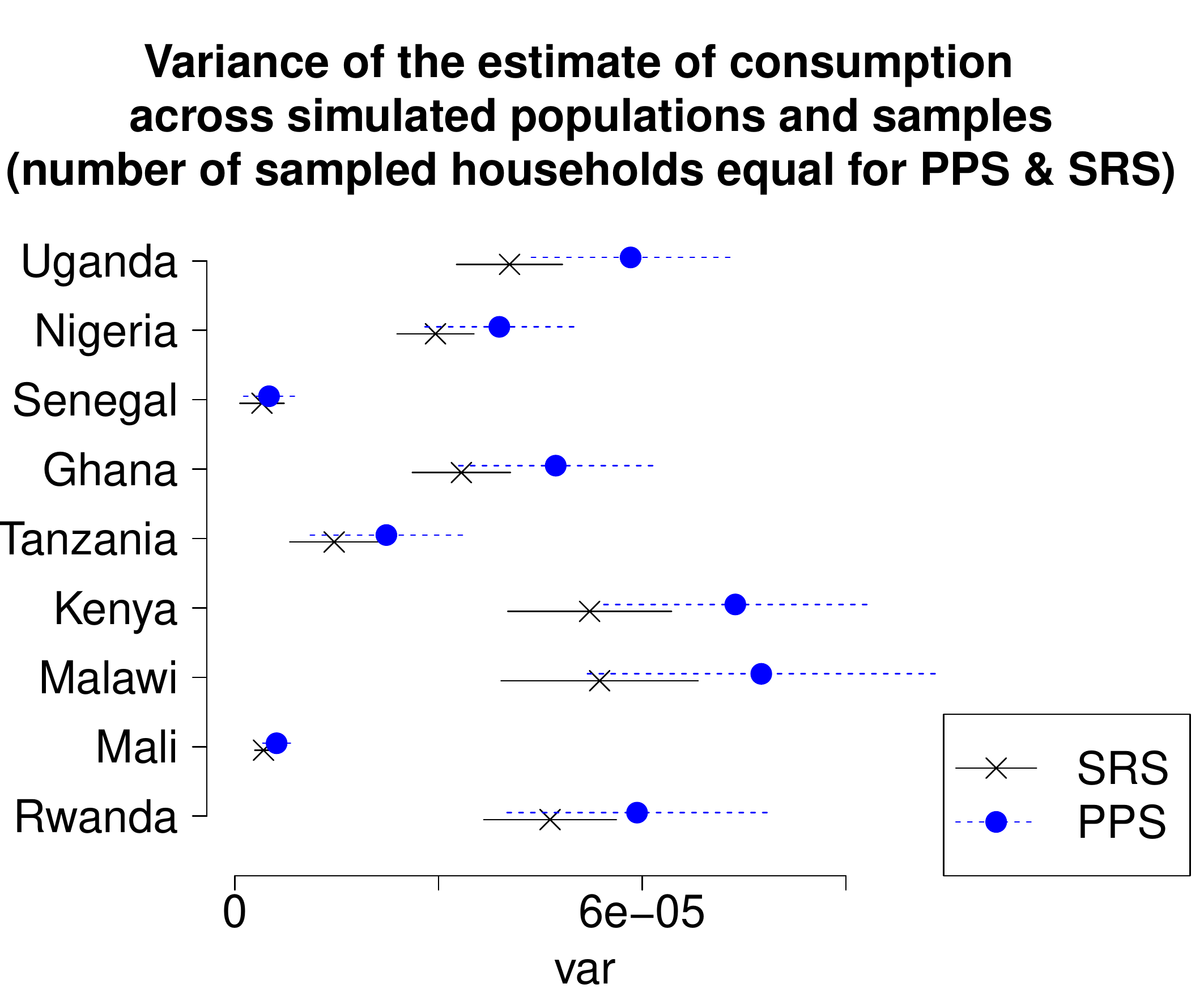}\caption{Model-based consumption module results}\label{var_hhsurvey_model_based_systematic}
\end{figure}

\clearpage
\newpage

\bibliographystyle{plainnat}
\bibliography{bibliography}

\end{document}